\let\proof\relax
\let\endproof\relax
\crefname{conjecture}{Conjecture}{Conjectures}
\title{Towards Unclonable Cryptography in the Plain Model}
\author{Céline Chevalier\inst{1,2} \and Paul Hermouet\inst{1,2,3} \and Quoc-Huy
	Vu\inst{4}}
\institute{DIENS, École normale supérieure, PSL University, CNRS, INRIA, Paris, France
	\and CRED, Université Panthéon-Assas Paris II, Paris, France
	\and LIP6, Sorbonne Université, Paris, France
	\and Léonard de Vinci Pôle Universitaire, Research Center, Paris La Défense, France\\
	\email{\{celine.chevalier, paul.hermouet, quoc.huy.vu\}@ens.fr}}
\let\oldmaketitle\maketitle
\renewcommand{\maketitle}{\oldmaketitle\setcounter{footnote}{0}}
\begin{document}

\maketitle

\begin{abstract}
  By leveraging the no-cloning principle of quantum mechanics, unclonable
  cryptography enables us to achieve novel cryptographic protocols that are
  otherwise impossible classically.
  Two most notable examples of unclonable cryptography are quantum
  copy-protection and unclonable encryption.
  Most known constructions rely on the quantum random oracle model (as opposed
  to the plain model), in which all parties have access in superposition to a
  powerful random oracle.
  Despite receiving a lot of attention in recent years, two important open
  questions still remain: copy-protection for point functions in the plain
  model, which is usually considered as feasibility demonstration, and
  unclonable encryption with unclonable indistinguishability security in the
  plain model.
  A core ingredient of these protocols is the so-called monogamy-of-entanglement
  property.
  Such games allow quantifying the correlations between the outcomes of multiple
  non-communicating parties sharing entanglement in a particular context.
  Specifically, we define the games between a challenger and three players in
  which the first player is asked to split and share a quantum state between the
  two others, who are then simultaneously asked a question and need to output
  the correct answer.

  In this work, by relying on previous works of Coladangelo, Liu, Liu, and
  Zhandry (Crypto'21) and Culf and Vidick (Quantum'22), we establish a new
  monogamy-of-entanglement property for subspace coset states, which allows us
  to progress towards the aforementioned goals.
  However, it is not sufficient on its own, and we present two conjectures that
  would allow first to show that copy-protection of point functions exists in
  the plain model, with different challenge distributions (including arguably
  the most natural ones), and then that unclonable encryption with unclonable
  indistinguishability security exists in the plain model.

  We believe that our new monogamy-of-entanglement to be of independent interest,
  and it could be useful in other applications as well.
  To highlight this last point, we leverage our new monogamy-of-entanglement property to show the existence of a tokenized signature scheme with a new security definition, called unclonable unforgeability.
\end{abstract}

\newcommand{\pnote}[1]{\textcolor{magenta}{\small {\textbf{(Paul:} #1\textbf{)}}}}
\newcommand{\ptodo}[1]{\textcolor{magenta}{\small {\textbf{(Paul: TODO} #1\textbf{)}}}}
\newcommand{\pttodo}[1]{\textcolor{magenta}{\small {\textbf{(Paul: ! TODO !} #1\textbf{)}}}}
\newcommand{\ptttodo}[1]{\textcolor{magenta}{\small {\textbf{(Paul: !!! TODO !!!} #1\textbf{)}}}}
\newcommand{\hnote}[1]{\textcolor{magenta}{\small {\textbf{(Huy:} #1\textbf{)}}}}
\newcommand{\red}[1]{\textcolor{red}{#1}}

\newcommand{\algofont}{\mathsf}
\newcommand{\advfont}{\mathcal}
\newcommand{\progfont}{\mathsf}
\newcommand{\setfont}{\mathcal}
\newcommand{\schemefont}{\mathsf}
\newcommand{\pr}{\textnormal{Pr}}
\newcommand{\uniform}{\mathcal{U}}

\newtheorem{_assumption}{Assumption}
\newenvironment{assumption}{
  \color{red}
  \begin{_assumption}
}
{
  \end{_assumption}
}

\newcommand{\intval}[1]{\llbracket #1 \rrbracket}
\newcommand{\trace}{\mathsf{Tr}}
\newcommand{\expectop}{\mathop{\expectationname}}
\newcommand{\p}{\textsf{P}}

\newcommand{\shO}{\mathsf{shO}}

\newcommand{\commit}{\algofont{Commit}}
\newcommand{\setup}{\algofont{Setup}}
\newcommand{\keygen}{\algofont{KeyGen}}
\newcommand{\cpa}{\mathsf{CPA}}

\renewcommand{\verify}{\algofont{Verify}}

\newcommand{\chall}{\mathsf{Challenger}}
\newcommand{\alice}{\mathsf{Alice}}
\newcommand{\bob}{\mathsf{Bob}}
\newcommand{\charlie}{\mathsf{Charlie}}
\newcommand{\alex}{\mathsf{Alex}}
\newcommand{\clover}{\mathsf{Clover}}
\newcommand{\oracle}{\mathcal{O}}
\newcommand{\adverA}[1]{\advfont{A}_#1}
\newcommand{\adverB}[1]{\advfont{B}_#1}

\newcommand{\kset}{\setfont{K}}
\newcommand{\mset}{\setfont{M}}
\newcommand{\rset}{\setfont{R}}
\newcommand{\cset}{\setfont{C}}
\newcommand{\xset}{\mathcal{X}}
\newcommand{\yset}{\mathcal{Y}}
\newcommand{\zset}{\mathcal{Z}}
\newcommand{\kspace}{\kset}
\renewcommand{\mspace}{\mset}
\newcommand{\rspace}{\rset}
\newcommand{\cspace}{\cset}
\newcommand{\xspace}{\xset}
\newcommand{\yspace}{\yset}
\newcommand{\zspace}{\zset}

\newcommand{\pf}{\mathsf{PF}}
\newcommand{\pke}{\mathsf{PKE}}

\newcommand{\nce}{\mathsf{NCE}}
\newcommand{\fake}{\mathsf{Fake}}
\newcommand{\reveal}{\mathsf{Reveal}}

\newcommand{\scheme}[1]{\langle#1\rangle}
\newcommand{\aux}{\mathsf{aux}}
\newcommand{\distrib}{\advfont{D}}
\newcommand{\ccprog}{\algofont{CC}}
\newcommand{\ccobf}{\algofont{CC\text{-}Obf}}
\newcommand{\simul}{\algofont{Sim}}
\newcommand{\ketbra}[2]{{\vert#1\rangle\!\langle#2\vert}}
\newcommand{\qpt}{\textsf{QPT}}
\newcommand{\bqp}{\textsf{BQP}}

\newcommand{\var}[1]{\ensuremath{{\operatorname{#1}}}}
\newcommand{\hgate}{\mathsf{H}}
\newcommand{\Hgate}{\mathsf{H}}
\newcommand{\CZgate}{\mathsf{Ctrl\var{-}Z}}
\newcommand{\CPgate}{\mathsf{Ctrl}\var{-}\mathsf{P}}
\newcommand{\Measurement}{\mathsf{M}}
\newcommand{\Xgate}{\mathsf{X}}
\newcommand{\Ygate}{\mathsf{Y}}
\newcommand{\Zgate}{\mathsf{Z}}
\newcommand{\CUgate}{\mathsf{Ctrl\var{-}U}}
\newcommand{\Rgate}{\mathsf{R}}
\newcommand{\Tgate}{\mathsf{T}}
\newcommand{\Pgate}{\mathsf{P}}
\newcommand{\Ugate}{\mathsf{U}}
\newcommand{\CNOTgate}{\mathsf{CNOT}}
\newcommand{\CCNOTgate}{\mathsf{CCNOT}}

\newcommand{\registerfont}{\mathsf}
\newcommand{\cpro}{\schemefont{CP}}
\newcommand{\cpprotect}{\algofont{Protect}}
\newcommand{\qkeygen}{\algofont{QKeyGen}}
\newcommand{\sd}{\algofont{SD}}

\newcommand{\del}{\algofont{Del}}
\renewcommand{\verify}{\algofont{Verify}}
\newcommand{\crt}{\algofont{crt}}

\newcommand{\tokengen}{\algofont{TokenGen}}

\makeatletter
\def\squarecorner#1{
    \pgf@x=\the\wd\pgfnodeparttextbox%
    \pgfmathsetlength\pgf@xc{\pgfkeysvalueof{/pgf/inner xsep}}%
    \advance\pgf@x by 2\pgf@xc%
    \pgfmathsetlength\pgf@xb{\pgfkeysvalueof{/pgf/minimum width}}%
    \ifdim\pgf@x<\pgf@xb%
        \pgf@x=\pgf@xb%
    \fi%
    \pgf@y=\ht\pgfnodeparttextbox%
    \advance\pgf@y by\dp\pgfnodeparttextbox%
    \pgfmathsetlength\pgf@yc{\pgfkeysvalueof{/pgf/inner ysep}}%
    \advance\pgf@y by 2\pgf@yc%
    \pgfmathsetlength\pgf@yb{\pgfkeysvalueof{/pgf/minimum height}}%
    \ifdim\pgf@y<\pgf@yb%
        \pgf@y=\pgf@yb%
    \fi%
    \ifdim\pgf@x<\pgf@y%
        \pgf@x=\pgf@y%
    \else
        \pgf@y=\pgf@x%
    \fi
    \pgf@x=#1.5\pgf@x%
    \advance\pgf@x by.5\wd\pgfnodeparttextbox%
    \pgfmathsetlength\pgf@xa{\pgfkeysvalueof{/pgf/outer xsep}}%
    \advance\pgf@x by#1\pgf@xa%
    \pgf@y=#1.5\pgf@y%
    \advance\pgf@y by-.5\dp\pgfnodeparttextbox%
    \advance\pgf@y by.5\ht\pgfnodeparttextbox%
    \pgfmathsetlength\pgf@ya{\pgfkeysvalueof{/pgf/outer ysep}}%
    \advance\pgf@y by#1\pgf@ya%
}
\makeatother

\pgfdeclareshape{square}{
    \savedanchor\northeast{\squarecorner{}}
    \savedanchor\southwest{\squarecorner{-}}

    \foreach \x in {east,west} \foreach \y in {north,mid,base,south} {
        \inheritanchor[from=rectangle]{\y\space\x}
    }
    \foreach \x in {east,west,north,mid,base,south,center,text} {
        \inheritanchor[from=rectangle]{\x}
    }
    \inheritanchorborder[from=rectangle]
    \inheritbackgroundpath[from=rectangle]
}

\newcommand{\nottouchingarrow}[2]{($ (#1)!1.0/10!(#2) $) -- ($ (#1)!9.0/10!(#2) $)}
\newcommand{\nottouchingarrowcenternode}[4]{($ (#1)!1.0/10!(#2) $) -- node[#3] {#4} ($ (#1)!9.0/10!(#2) $)} %
\newcommand{\ptriv}{p^{triv}}
\newcommand{\indcor}{\mathsf{IND\text{--}CoR}}
\newcommand{\inddor}{\mathsf{IND\text{--}DoR}}
\newcommand{\genTrigger}{\progfont{GenTrigger}}
\newcommand{\funcfont}{\mathsf}
\newcommand{\mbr}{\progfont{P}}
\newcommand{\obfmbr}{\widehat{\mbr}}
\newcommand{\Qmr}{\progfont{Q_{m, r}}}
\newcommand{\Qmrp}{\progfont{Q_{m, r'}}}
\newcommand{\can}{\progfont{Can}}
\newcommand{\rmPi}{\mathrm{\Pi}}
\newcommand{\rmPhi}{\mathrm{\Phi}}
\newcommand{\thetaset}{\mathrm{\Theta}}
\newcommand{\id}{\mathbb{I}}
\newcommand{\pbar}{\bar{\mathrm{P}}}
\newcommand{\qbar}{\bar{\mathrm{Q}}}
\newcommand{\ie}{i.e.}
\newcommand{\hyb}{\mathsf{H}}
\newcommand{\xistrib}{\mathcal{X}}
\newcommand{\family}{\mathcal{F}}
\newcommand{\pirate}{\mathcal{P}}
\newcommand{\freeloader}{\mathcal{F}}
\newcommand{\encode}{\algofont{Encode}}
\newcommand{\qkey}{\rho}
\newcommand{\crcp}{\algofont{CRCP}}
\newcommand{\crprf}{\algofont{CRPRF}}
\newcommand{\ti}{\algofont{TI}}
\newcommand{\params}{\mathsf{param}}
\newcommand{\subexp}{\mathsf{subexp}}

\DeclarePairedDelimiter\size{\lvert}{\rvert}%
\DeclarePairedDelimiterX{\inp}[2]{\langle}{\rangle}{#1, #2}

\newcommand{\gameprefix}{\textbf{Game}}
\newcommand{\gamesprefix}{\textbf{Games}}
\newcommand{\hybridprefix}{\textbf{Hybrid}}
\renewcommand{\pcgamename}{\ensuremath{G}}
\newcommand{\pchybridname}{\ensuremath{H}}
\newcommand{\removeindent}{\parshape %
  2 %
  15pt \dimexpr\linewidth-15pt %
  \parindent \dimexpr\linewidth-\parindent %
}

\makeatletter
\renewenvironment{gamedescription}[1][]{%
\begingroup%
\setkeys{pcgameproof}{#1}
\@pc@ensureremember%
\setcounter{pcgamecounter}{\@pcgameproofgamenr}%
\setcounter{pcstartgamecounter}{\@pcgameproofgamenr}\stepcounter{pcstartgamecounter}%
\begin{enumerate}[align=left, leftmargin=0pt, labelindent=0pt,listparindent=\parindent, labelwidth=0pt, itemindent=!,itemsep=6pt]%
}{\end{enumerate}\@pc@releaseremember\endgroup}
\makeatother

\renewcommand{\describegame}{%
\addtocounter{pcgamecounter}{1}%
\item[%
\gameprefix~\ensuremath{\pcgamename_{\thepcgamecounter}\gameprocedurearg}:]%
}

\newcommand{\describecgame}[1]{%
  \item[%
  \gameprefix~\ensuremath{\pcgamename_{\thepcgamecounter,\text{{#1}}}\gameprocedurearg}:]%
}

\newcommand{\describehybrid}{%
\addtocounter{pcgamecounter}{1}%
\item[%
\hybridprefix~\ensuremath{\pchybridname_{\thepcgamecounter}\gameprocedurearg}:]%
}

\newcommand{\describechybrid}[1]{%
\addtocounter{pcgamecounter}{1}%
\item[%
\hybridprefix~\ensuremath{\pchybridname_{(\thepcgamecounter,\text{#1})}\gameprocedurearg}:]%
}

\newcommand{\describegames}[1]{%
\addtocounter{pcgamecounter}{1}%
\item[%
\gamesprefix~\ensuremath{\pcgamename_{\thepcgamecounter\var{-}#1}\gameprocedurearg}:]%
\setcounter{pcgamecounter}{#1}
}

\newcommand{\game}{\pcgamename}
\newcommand{\hybrid}{\pchybridname}
\newcommand{\explaingame}{\smallbreak\removeindent} %

\newcommand{\currentgamenbr}{\thepcgamecounter}
\newcommand{\currentgame}{\pcgamename_{\thepcgamecounter}}
\newcommand{\previousgamenbr}{\addtocounter{pcgamecounter}{-1}\arabic{pcgamecounter}\addtocounter{pcgamecounter}{1}}
\newcommand{\previousgame}{\pcgamename_{\previousgamenbr}}
\makeatletter
\newcommand{\gamelabel}[1]{\def\@currentlabel{\pcgamename_{\currentgamenbr}}\label{#1}}
\makeatother
\newcommand{\gamebox}[1]{{\setlength{\fboxsep}{1pt}\fbox{\ifmmode$\displaystyle#1$\else#1\fi}}}
\newcommand{\gamehighlight}[2][gamechangecolor]{%
{\setlength{\fboxsep}{2pt}\colorbox{#1}{\ifmmode$\displaystyle#2$\else#2\fi}}%
}
\newcommand{\gamedoublebox}[1]{{\setlength{\fboxsep}{1pt}\fbox{\setlength{\fboxsep}{1pt}\fbox{\ifmmode$\displaystyle#1$\else#1\fi}}}}
\newcommand{\gameblue}[1]{{\color{red} #1}}

\newenvironment{innerproof}
 {\renewcommand{\qedsymbol}{}\proof}
 {\endproof}

\section{Introduction}

\subsection{Unclonable Cryptography}

Quantum information enables us to achieve new cryptographic primitives that are
impossible classically, leading to a prominent research area named unclonable
cryptography.
At the heart of this area is the no-cloning principle of quantum
mechanics~\cite{wootters1982single}, which has given rise to many unclonable
cryptographic primitives.
This includes quantum money~\cite{wiesner1983conjugate}, quantum
copy-protection~\cite{aaronson2009quantum}, unclonable
encryption~\cite{broadbent2019uncloneable}, single-decryptor
encryption~\cite{C:CLLZ21}, and many more.
In this work, our focus is on quantum copy-protection and unclonable encryption.

\paragraph{Copy-protection for point functions.}
Quantum copy-protection, introduced by Aaronson in~\cite{aaronson2009quantum}, is a
functionality preserving compiler that transforms programs into quantum states.
Moreover, we require that the resulting copy-protected state should not allow
the adversary to copy the functionality of the state.
In particular, this unclonability property states that, given a copy-protected quantum program, no adversary can produce two (possibly entangled) states that both can be used to compute this program.
Testing whether these two states can compute the program is done by sampling two challenges input for the program from a certain \emph{challenge distribution}.
Then, informally, each state is used as a quantum program and run on the corresponding challenge to produce some outcome; and the test passes if this outcome is the one that would be output by the program on this input.

While copy-protection is known to be impossible for general unlearnable
functions and the class of de-quantumizable algorithms~\cite{EC:AnaLaP21},
several feasibility results have been demonstrated for cryptographic functions (e.g.,
pseudorandom functions, decryption and signing
algorithm~\cite{C:CLLZ21,TCC:LLQZ22}).
Of particular interest to us is the class of point functions, which is of the
form \(f_{y}(\cdot)\): it takes as input \(x\) and outputs 1 if and only if
\(x = y\).

Prior
works~\cite{EPRINT:ColMajPor20,TCC:AnaKal21,C:AKLLZ22,C:AnaKalLiu23,chevalier2023semi}
achieved a copy-protection scheme for point functions with different type of
states (e.g., BB84 states~\cite{bennett2020quantum} or coset states~\cite{C:CLLZ21}) and
different challenge distributions.
However, in contrast to known constructions for copy-protection for
cryptographic functions which are in the plain model, these constructions for
copy-protection for point functions are almost all in the quantum random oracle model.
The only known copy-protection for point functions scheme in the plain model (without
random oracle or another setup assumption) was
recently constructed in~\cite{chevalier2023semi}, but this scheme was shown to be
secure with respect to a ``less natural'' challenge distribution.
We note that different feasibility for the same copy-protection scheme, based on
different challenge distributions, can be qualitatively incomparable.
That is, security established under one challenge distribution might not
necessarily guarantee security under a different challenge distribution.

Given the inability to prove security with respect to certain natural challenge
distributions for copy-protection for point functions, an important question
that has been left open from prior works is the following:
\begin{center}
  \emph{\textbf{Question 1.}
	Do copy-protection schemes for point functions, with negligible security\\and
	natural challenge distributions, in the plain model exist?}
\end{center}

\paragraph{Unclonable encryption.}
Unclonable encryption, introduced by Broadbent and
Lord~\cite{broadbent2019uncloneable} based on a previous work of Gottesman~\cite{gottesman2002uncloneable}, is another beautiful primitive of
unclonable cryptography.
Roughly speaking, unclonable encryption is an encryption scheme with quantum
ciphertexts having the following security guarantee: given a quantum ciphertext,
no adversary can produce two (possibly entangled) states that both encode some
information about the original plaintext.
Interestingly, besides its own applications, unclonable encryption also implies
private-key quantum money, and copy-protection for a restricted class of
functions~\cite{broadbent2019uncloneable,TCC:AnaKal21}.

Despite being a natural primitive, constructing unclonable encryption has
remained elusive.
Prior works~\cite{broadbent2019uncloneable,TCC:AnaKal21} established the
feasibility of unclonable encryption satisfying a weaker property called
unclonability, which can be seen as a \emph{search}-type security.
This weak security notion is far less useful, as it does not imply the standard
semantic security of an encryption scheme, and also does not lead to the
application implication listed above.
The stronger notion, the so-called \emph{unclonable indistinguishability}, is only
known to be achievable in the quantum random oracle model~\cite{C:AKLLZ22}.
Given the notorious difficulty of building unclonable encryption in the standard
model, the following question has been left open from prior works:

\begin{center}
  \emph{\textbf{Question 2.}
	Do encryption schemes satisfying unclonable indistinguishability in the
	plain model exist?}
\end{center}

\subsection{Monogamy Games}

In order to understand better the difficulty of achieving such goals, we first recall the security 
definitions of these primitives, called anti-piracy security. This notion is 
defined through a piracy game, in which Alice is given a certain quantum state.
Alice must then split this state and share it between two other 
adversaries, Bob and Charlie. Then, Bob and Charlie receive a challenge and must guess the correct 
answer.

This security can be proven through the use of monogamy games, which are games whose winning 
probability is restricted by the monogamy-of-entanglement; in order to win the game with the highest 
probability, the players have to leverage the power of entanglement in the best possible way, but 
monogamy-of-entanglement prevent them to win with probability $1$. As a simple example, consider the 
following game, studied in particular in \cite{Tomamichel_2013}. This game is between a challenger and 
two players, Bob and Charlie. Bob and Charlie are first asked to prepare a tripartite quantum state 
$\rho_{ABC}$; then to send the register $A$ to the challenger; and finally to share the remaining 
registers between themselves. From this step, Bob and Charlie cannot communicate anymore. Then, the 
challenger measures each qubit of this register in a random basis - either computational or diagonal - 
and reveal the bases to Bob and Charlie. Bob and Charlie are now both asked to guess the outcome of 
the challenger's measurement. The maximum winning probability of this game is $\frac{1}{2} + 
\frac{1}{2\sqrt{2}}$.

In the following, we consider games with a slightly different structure: the games are between a 
challenger and three players, Alice, Bob, and Charlie - where Bob and Charlie cannot communicate. The 
challenger first sends a quantum state to Alice who has to split it and share it between Bob and 
Charlie. Bob and Charlie are then asked a question and both need to return the correct answer. 
Interestingly, in these games, the questions asked Bob and Charlie would have been easily answered by 
Alice before she splits the state. We are indeed interested in how well she can split the state to 
preserve as much as possible the information necessary to answer correctly in each share.

In \cite{C:CLLZ21}, the authors defined the \emph{coset states}: quantum states of the form
\(\ket{A_{s, s'}} \coloneqq \sum_{x \in A} (-1)^{\inp{x}{s'}} \ket{x + s}\) (up to renormalization)
for a subspace \(A \subseteq \FF_{2}^{n}\) and two vectors \(s, s' \in \FF_{2}^{n}\). Loosely
speaking, a coset state \(\ket{A_{s, s'}}\) embeds information on both the regular coset \(A + s\) and
its dual coset \(A^\perp + s'\), in the sense that measuring a coset state in the computational basis
yields a random vector in the regular coset; and measuring it in the diagonal basis yields a vector in
the dual coset. The coset states feature a so-called \emph{strong monogamy-of-entanglement property}
(proven in \cite{Culf2022monogamyof}). This property states that no adversaries Alice, Bob and Charlie
can win the following monogamy game with non-negligible probability. Given a random coset state
\(\ket{A_{s, s'}}\), Alice has to split the state and share it between Bob and Charlie. Bob and Charlie
then receive the description of the subspace \(A\) as the question, and are asked to return a vector
 in the regular coset \(A + s\) for Bob, and a vector in the dual coset \(A^\perp + s'\) for Charlie.

\subsection{Our contributions}

Unfortunately, these monogamy games are not adapted to some distributions, specifically the identical 
and product distributions, where the elements drawn can be equal. To solve this issue, we present in 
this game a monogamy game that we call \emph{monogamy game with identical basis}.

Informally, in this game, Bob and Charlie are not asked to return a vector belonging to different
cosets (the regular coset for Bob and the dual coset for Charlie), but are instead instructed to
return a vector belonging to the same coset (both in the regular coset or both in the dual coset).

Of course, without any additional constraints, Alice could simply measure the coset state in, say, the
computational basis, and forward the outcome to both Bob and Charlie. The latter could in turn simply
return this outcome and always answer correctly. To prevent such a trivial strategy, the challenger
instructs Bob and Charlie on the basis in which the vectors they return must belong. More precisely,
the challenger sends as a question the subspace description $A$ as for the \cite{C:CLLZ21} game, but
also a bit $b$. The expected vectors must then both belong to the regular coset if $b = 0$, or in the
dual coset if $b = 1$. Crucially, this basis $b$ is sampled and revealed to the adversaries
\emph{after} Alice splits the state. Otherwise, she could simply measure the state in the
computational or the diagonal basis depending on the value of $b$, and forward the outcome to Bob and
Charlie.

We prove that the winning probability of this game is at most negligibly higher than $1/2$, which 
corresponds to the trivial strategy in which Alice always measures the coset state in the 
computational basis and forwards the outcome to Bob and Charlie, who in turn return it. An 
illustration of this game is depicted in \cref{fig:new-moe-coset}.
This new monogamy-of-entanglement (MoE) property of coset states might be of independent 
interest.\footnotemark 

Unfortunately, for reasons detailed in Section~\ref{sec:conjecture}, this new monogamy game is not 
sufficient to answer the two questions above affirmatively. We also need that the existence of a 
compute-and-compare obfuscator~\cite{C:CLLZ21} is still true in a non-local setting. Admitting this 
conjecture, we present a construction of copy-protection of point functions with negligible security 
in the plain model. We show that this construction is secure, for three families of distributions: 
product distributions, identical distributions and non-colliding distribution. Secondly, we exhibit 
two constructions of unclonable encryption with unclonable indistinguishability security in the plain 
model: one for single-bit encryption and the other for multi-bit encryption. Our constructions based 
on the construction of single-decryptor, introduced by~\cite{C:CLLZ21}, with new security variants. 

\footnotetext{Recently, \cite{cryptoeprint:2023/410} also presented a new version of monogamy-of-entanglement game, using a similar idea.
We discuss the differences between their version and ours in \cref{sec:moe-parallel}.
}

\paragraph{Unclonable unforgeability for tokenized signatures.}
We also present a new security definition for tokenized signatures, and show the existence of a tokenized signature scheme featuring this security.
Loosely speaking, a tokenized signature scheme (\cite{ben2023quantum}) allows an authority to generate quantum signing tokens, which can be used to sign one, and only one, message on the authority's behalf.
Up to our knowledge, the existing literature on the subject only consider weak (\cite{ben2023quantum,C:CLLZ21}) and strong unforgeability (\cite{chevalier2023semi}), where the adversary is asked to return two valid signatures out of a single token.
We define a so-called unclonable unforgeability property for tokenized signature schemes, where a first adversary is asked to split a token such that each part can be used to produce a valid signature of a random message \emph{chosen after the splitting}.
We define formally this new property, and show how our new monogamy of entanglement with identical basis property can be used to prove that the \cite{C:CLLZ21} construction achieves this security.

\paragraph{Concurrent and independent work.}
The first version of this paper appeared concurrently and independently with two
other works considering similar tasks.
However, at a high level, the themes of these two papers and ours are quite
different.
Coladangelo and Gunn~\cite{EPRINT:ColGun23} show the feasibility of
copy-protection of puncturable functionalities and point functions through a new
notion of quantum state indistinguishability obfuscation, which is also
introduced in the same paper.
Ananth and Behera~\cite{EPRINT:AnaBeh23} also show constructions for
copy-protection of puncturable functionalities (including point functions) and
unclonable encryption, based on a new notion of unclonable puncturable
obfuscation.
Among the two, the latter is most similar to our work.
Their construction of unclonable puncturable obfuscation, which is the backbone
for their applications (of copy-protection of point functions and unclonable
encryption), is based on the recent construction of copy-protection of
pseudorandom functions and single-decryptor of Coladangelo et
al.~\cite{C:CLLZ21}.
They show that a slightly modified construction of \cite{C:CLLZ21} achieves
anti-piracy security with different challenge distributions and preponed
security.
Apart from the naming, these security notions are identical to what we consider
here in our paper.

After posting the first version of our paper online, we have had discussions with the authors
of \cite{EPRINT:AnaBeh23}.
We acknowledge that the idea of introducing conjectures was inspired by the work
of Ananth and Behera~\cite{EPRINT:AnaBeh23}.
We compare~\cite{EPRINT:AnaBeh23}'s conjectures with ours in
Section~\ref{sec:conjecture}.

\subsection{Technical Overview}

\paragraph{A new monogamy-of-entanglement game of coset states.}
In the heart of our results is a new monogamy-of-entanglement property of coset
states, drawing inspiration from previous works~\cite{C:CLLZ21,Tomamichel_2013}.
A coset state is a quantum state of the form
\(\ket{A_{s, s'}} \coloneqq \frac{1}{\sqrt{\size{A}}} \sum_{x \in A} (-1)^{\inp{x}{s'}} \ket{x + s}\)
for a subspace \(A \subseteq \FF_{2}^{n}\) and two vectors \(s, s' \in \FF_{2}^{n}\).
Loosely speaking, a coset state \(\ket{A_{s, s'}}\) embeds information on both
the coset \(A + s\) and its dual \(A^\perp + s'\), and has the following
monogamy-of-entanglement property~\cite{C:CLLZ21}: given a random coset state
\(\ket{A_{s, s'}}\), no adversary - Alice - can split the state and share it to two
other non-communicating adversaries - Bob and Charlie - such that, given the description of the subspace
\(A\), Bob returns a vector in the coset \(A + s\) and Charlie a vector in the dual \(A^\perp + s'\).

In this paper, we introduce a new variant of the monogamy-of-entanglement
property of coset states.
In this variant, Bob and Charlie both have to output a vector in the same coset
space, either \(A + s\) or \(A^\perp + s'\), but they learn the challenge coset
space only during the challenge phase after receiving the state from Alice.
Crucially Alice also does not know the challenge coset space before the
challenge phase.
We call this new game as \emph{monogamy-of-entanglement game with identical basis}.
An illustration of this new game is depicted in~\cref{fig:new-moe-coset}.
We will show that the winning probability of this game is at most negligibly far
way from \(1/2\), which corresponds to the trivial strategy in which Alice
always measures the coset state in the computational basis and forwards the
outcome to both Bob and Charlie, who in turn output it.
We will also show that the winning probability of this game can be made
negligible by parallel repetition (see \cref{subsec:moe-parallel}).
\begin{figure}
    \centering
    \begin{tikzpicture}[framed]

        \node[draw, square, align=center] (challenger) {$\mathsf{Challenger}$\\
        \scriptsize $(A, s, s') \gets \$$\\
        \scriptsize $b \gets \bin$};

        \node[right=25mm of challenger, draw, square, minimum width={width("AAAA")}] (alice) {$\adv$};
        \draw[->] (challenger)
            -- node[above, midway] {$\ket{A_{s, s'}}$}
            node[below, midway] {} (alice);

        \node[draw, square, above right=5mm and 4mm of alice, minimum width={width("AAAA")}] (bob) {$\bdv$};
        \draw[->] (alice) |- node[above]{$\sigma_1$} (bob);
        \node[above left=5mm and 15mm of bob] (xb) {$(A, b)$};
        \draw[->, dashed] (challenger) |- (xb) -| (bob);
        \node[right=7mm of bob] (yb) {$u_1$};
        \draw[->] (bob) -- (yb);

        \node[draw, square, below right=5mm and 3mm of alice, minimum width={width("AAAA")}] (charlie) {$\cdv$};
        \draw[->] (alice) |- node[below]{$\sigma_2$} (charlie);
        \node[below left=10mm and 15mm of charlie] (xc) {$(A, b)$};
        \draw[->, dashed] (challenger) |- (xc) -| (charlie);
        \node[right=5mm of charlie] (yc) {$u_2$};
        \draw[->] (charlie) -- (yc);

        \node[right=of yb, align=left, rectangle, draw] () {
            Winning Condition:\\
            $u_1, u_2 \in \begin{cases}
                A + s & \text{ if } b = 0\\
                A^\perp + s' & \text{ if } b = 1
              \end{cases}$};

    \end{tikzpicture}

    \caption{Monogamy-of-Entanglement Game with Identical Basis (Coset Version).
    Remark that, in the original monogamy-of-entanglement game for coset states \cite{C:CLLZ21}, the challenger does not sample $b$, hence there is no $b$ sent to $\bdv$ and $\cdv$, and the winning condition is $\left(u_1 \in A + s\right) \land \left(u_2 \in A^\perp + s'\right)$.}
    \label{fig:new-moe-coset}
\end{figure}
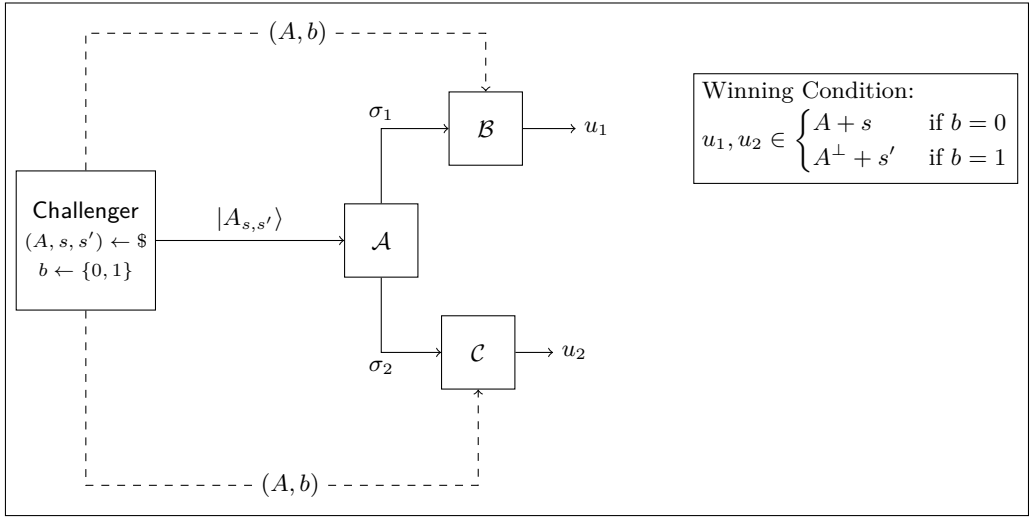
 
We now give a sketch of the proof for this new monogamy-of-entanglement game.
For simplicity, we describe the proof of the \emph{BB84 version} of our new
monogamy-of-entanglement game, since the coset version reduces to this game as
proven in \cite{Culf2022monogamyof}.
In the BB84 version, the challenger sends $n$ BB84 states
$\bigotimes_{i = 1}^n \ket{x_i}^{\theta_i}$ to Alice, and Bob and Charlie are
given the basis $\theta$ and a random bit $b$.
To win the game, Bob and Charlie both need to output a bitstring $x^*$ such that
$x^*$ is equal to $x$ on all the indices $i$ such that $\theta_i = b$.
This proof uses the template of \cite{Culf2022monogamyof} and can be described in
three steps.
We refer the reader to~\cref{sec:new-moe} for the formal proof.
\begin{enumerate}
  \item In the first step, we define the \emph{extended non-local game}
    \cite{JMRW16} associated to this monogamy-of-entanglement game.
    This game is between a challenger and two players.
    The players start by preparing a tripartite quantum state $\rho_{012}$; each of
    them keep one register, and they send the last one, say $\rho_2$, to the
    challenger.
    After this point, the players are not allowed to communicate.
    The challenger samples $n$ BB84 basis $\theta \in \bin^n$ at random, then measures
    each qubit $\rho_{C, i}$ of $\rho_C$ in the corresponding basis $\theta_i$; let $x$
    denote the outcome.
    Finally, the challenger sends $\theta$, as well as a random bit $b$, to the two
    players.
    Each player is asked to output a bitstring $x^*$ such that $x^*$ is equal to
    $x$ on all the indices $i$ such that $\theta_i = b$.

    We show that the largest winning probability of the monogamy game is the
    same as the one of this extended non-local game.
    In this step, we use a technique from \cite{Tomamichel_2013} to bound this
    winning probability.
  \item In the second step, we express any strategy for this extended non-local
    game with security parameter \(n \in \NN\) as a tripartite quantum state
    \(\rho_{012}\) as well as two families of projective measurements,
    \(\{B^{\theta, b}\}\) and \(\{C^{\theta, b}\}\), both indexed by \(\theta \in \thetaset_n\)
    and \(b \in \bin\).
    We define the projector
    \(\rmPi_{\theta, b} = \sum_{x \in \bin^n} \ketbra{x}{x}^\theta \otimes B^{\theta, b}_{x_{T_b}} \otimes C^{\theta, b}_{x_{T_b}}\)
    such that the winning probability of this strategy is
    \(p_{win} = \expectop_{\theta, b}\left[ \trace\left(\rmPi_{\theta, b}\ \rho_{012}\right)\right]\).
    Then, we show the following upper-bound:
    \begin{align*}
      p_{win} & \leq \frac{1}{2N}\sum_{\substack{1 \leq k \leq N                                                                                             \\\alpha \in \bin}}\ \max_{\theta, b} \big \lVert \rmPi_{\theta, b} \rmPi_{\pi_{k, \alpha}(\theta \Vert b)} \rVert\\
      ~       & = \frac{1}{2} + \frac{1}{2N}\sum_{1 \leq k \leq N}\ \max_{\theta, b} \big \lVert \rmPi_{\theta, b} \rmPi_{\pi_{k, 1}(\theta \Vert b)} \rVert
    \end{align*}
    where \(\{\pi_{k, \alpha}\}_{k \in \intval{1, N}, \alpha \in \bin}\) is a mutually
    orthogonal family of permutations to be defined later in the proof.
    We want the maximum in the equation above to be as small as possible.
    The goal of step 3 is to find such a family.
  \item In the third step, we show that, as long as \(b' \neq b\), the quantity
    \(\lVert \rmPi_{\theta, b}\ \rmPi_{\theta', b'} \rVert\) depends on the number of
    indices on which $\theta$ and $\theta_i$ differ.
    More precisely, \(\lVert \rmPi_{\theta, b}\ \rmPi_{\theta', b'} \rVert\) is
    upper-bounded by $2^{-d(\theta) / 4}$, where $d(\theta)$ is the number of such
    indices.
    Thus, we choose our family of permutations such that, for all $k$, the last
    bit of $\pi_{k, 1}(\theta, b)$ is $1 - b$ and $d(\theta)$ is constant.
    We build upon a result of \cite{Culf2022monogamyof} to construct a family of
    permutations with the aforementioned properties.
    More concretely, \cite{Culf2022monogamyof} define a mutually orthogonal
    family of permutations $\pi_k$, indexed by $1 \geq k \geq N$, with the latter
    property.
    We then define another family $\widetilde{\pi}_{k, b}$, indexed by $1 \geq k \geq N$
    and $b \in \bin$, where $\widetilde{\pi}_{k, b}(\theta, b) = \pi_k(\theta) \Vert 1 - b$.
    It is easy to see that this new family of permutations has both the former
    and the latter properties, and we prove that it is also a mutually
    orthogonal family.
\end{enumerate}

\paragraph{Anti-piracy security.}
We now describe how this new monogamy-of-entanglement property might
allow us to obtain new constructions of copy-protection and unclonable
encryption.
We note that we did not manage to prove security of our constructions from
standard assumptions, and they are indeed based on this new monogamy game and a
conjecture that we also introduce in this work.
We first recall the anti-piracy security definition, discuss several challenge
distributions for copy-protection of point functions, and then present
techniques to achieve security with respect to these challenge distributions.

A piracy game is formalized as a security experiment against a triple of cloning adversaries Alice, Bob, and Charlie.
Alice receives a copy-protected program
\(\rho_{f} \coloneqq \cpprotect(f)\), which can be used to evaluate a classical function
\(f\), prepares a bipartite state, and sends each half of the state to the two
other non-communicating adversaries Bob and Charlie.
In the challenge phase, Bob and Charlie receive
inputs \(c_{1}, c_{2}\), sampled from a challenge distribution and are asked to
output \(b_{1}, b_{2}\).
The adversaries win if \(b_{i} = f(c_{i})\) for
\(i \in \{1, 2\}\).

It turns out that the choice of challenge distribution plays a crucial role in
evaluating security of copy-protection schemes.
Indeed, previous constructions of copy-protection of point functions have
considered different challenge
distributions~\cite{EPRINT:ColMajPor20,TCC:BJLPS21,C:AKLLZ22,chevalier2023semi}.
Some are considered ``less natural'' than the others.
Ideally, we would like to prove security of the scheme in a way that is
independent of the chosen challenge distribution.
In this paper, we make progress towards achieving this goal.
In particular, in the following, let \(y \in \bin^{n}\) the copy-protected point,
\(x, x'\) random strings drawn from some distribution.
We consider the following challenge distributions for copy-protecting point
functions.
\begin{itemize}
  \item \textit{Identical:} Bob and Charlie get
        either \((y, y)\) or \((x, x)\) with probability \(\frac{1}{2}\) each, where \(x\)
        is drawn uniformly at random from \(\bin^{n} \setminus \{y\}\).
  \item \textit{Product:} Bob and Charlie get either
        \((y, y)\), \((x, y)\), \((y, x)\), or \((x, x')\) each with probability
        \(\frac{1}{4}\), where \(x, x'\) are drawn uniformly at random from
        \(\bin^{n} \setminus \{y\}\).
  \item \textit{Non-Colliding:} Bob and Charlie get either
        \((x, y)\), \((y, x)\), or \((x, x')\) each with probability
        \(\frac{1}{3}\), where \(x, x'\) are drawn uniformly at random from
        \(\bin^{n} \setminus \{y\}\).
\end{itemize}
Arguably, since the copy-protected point basically represents the entire
functionality of the point function, one would say the product distribution is
the most meaningful and natural one.
However, the only known construction known before our work that achieves
copy-protection of point functions in the plain model is the one given
in~\cite{chevalier2023semi}, which only achieves security w.r.t non-colliding
distribution.
Our construction is identical to that of \cite{chevalier2023semi} and our main
technical contribution lies in our proof technique showing that
\cite{chevalier2023semi} construction can achieve security with respect to
product and identical distributions.

We continue by recalling \cite{chevalier2023semi} construction and briefly explain
where it fails when proving security w.r.t the product challenge distribution,
then we describe techniques that might allow us to overcome the
problems.

\paragraph{\cite{chevalier2023semi}'s copy-protection of point functions.}
At a high level, \cite{chevalier2023semi} scheme uses a copy-protection scheme
of pseudorandom functions (PRFs) \(\prf(\key, \cdot)\) from~\cite{C:CLLZ21}.
Protecting a point function \(\pf_y\) is done in the following way: sample a PRF
key \(\key\); then copy-protect \(\key\) using the PRF protection algorithm to
get \(\qkey_\key\); and finally compute \(z \gets \prf(\key, y)\) and return the
outcome \(z\) as well as \(\qkey_\key\).
One can evaluate the copy-protected point function \(\pf_{y}\) on an input \(x\)
in the following way: compute \(\prf(\key, x)\) using the evaluation algorithm
of the PRF copy-protection scheme, then check whether the outcome equals \(z\)
or not and return \(1\) or \(0\) accordingly.
Although \cite{chevalier2023semi} construction can be cast in the form, we note
that their reduction (and ours) go through an intermediate notion of single-decryptor, which ultimately reduces to some form of monogamy-of-entanglement of
hidden coset states.
We refer the reader to the formal proof provided in~\cref{sec:sd}
and~\cref{sec:copy-prot} for more details.

\paragraph{Challenges when proving anti-piracy security w.r.t the product distribution.}
To prove security based on monogamy-of-entanglement of coset states, the authors
of~\cite{chevalier2023semi} (based on techniques from~\cite{C:CLLZ21}) use an
extraction property of compute-and-compare obfuscation to extract and outputs two vectors
which, with non-negligible probability belong respectively to \(A + s\) and
\(A^\perp + s'\), which works perfectly when the challenge distribution is
non-colliding.
However, when considering the identical distribution (or the product distribution for the case when the challenge inputs are \((y, y)\)), the adversaries are required to output two vectors
(not necessarily different) from \emph{the same} coset space: that is, they are
either both in \(A+s\) or \(A^{\perp} + s'\).
This in turn leads to no violation against the monogamy-of-entanglement game
describe above.
Worse, if the first adversary Alice knows which basis it would play with (either the
computational basis for coset space \(A + s\) or the Hadamard basis for coset
space \(A^{\perp} + s'\)), the adversaries can win the game trivially.

Our observation here is that the challenge inputs pair \((y, y)\) corresponds to
a description of the challenge basis for the monogamy-of-entanglement with
identical basis game: in particular, let \(y \coloneqq y_{0} \ldots y_{n}\), each
\(y_{i}\) describes the challenge basis for the \(i\)-th instance of the
monogamy game: if \(y_{i} = 0\), it is the computational basis (corresponding to
the coset space \(A_{i} + s_{i}\)), otherwise, it is the Hadamard basis
(corresponding to the coset space \(A^{\perp}_{i} + s'_{i}\)).
The final step in the proof is to show that, if there exists an adversary that
wins the anti-piracy game with challenge input \((y, y)\), we can construct two
non-communicating extractors that output \(n\) vectors
\((v_{i}, w_{i})_{i \in [1, n]}\) satisfying that \(v_{i}, w_{i}\) both belong to
the same challenge coset space for all \(i \in [1, n]\).
In the proof of \cite{C:CLLZ21}, where the challenge instances given to the two
non-communicating adversaries Bob and Charlie are sampled independently,
this step can be done by using extracting compute-and-compare obfuscation technique.
However, in our case, we face a new problem that now the extraction needs to be
done simultaneously where the two challenges are correlated.
To remedy the issue, we propose a new conjecture on simultaneous extracting from
compute-and-compare obfuscation.
We note that weaker version of this conjecture has been proven
in~\cite{C:CLLZ21}.

\paragraph{Simultaneously extracting from compute-and-compare obfuscation
  conjecture.}
Assuming \(\iO\) and (sub-exponentially) hardness of LWE, for
(sub-exponentially) unpredictable distribution $\distrib$, there exists a
compute-and-compare obfuscator~\cite{C:CLLZ21}.
We are interested in whether this result still holds in a non-local context.
More precisely, consider the two following tasks, which we call
\emph{simultaneous distinguishing} and \emph{simultaneous predicting}.
Simultaneous predicting asks two players, Bob and Charlie, given a function
associated to a compute-and-compare program, and a quantum state as auxiliary
information on the program, to output the associated lock value.
Crucially, the challenge given to Bob and Charlie might be correlated.
In simultaneous distinguishing, Bob and Charlie are given either an obfuscated
compute-and-compare program, or the outcome of a simulator on this program's
parameters.
As in simultaneous predicting, they are also given a quantum state each, but
here, they are asked to tell whether they received the obfuscated program, or
the simulated one.
Our conjecture essentially says that simultaneous predicting \emph{implies}
simultaneous distinguishing, for certain challenge distributions.

\paragraph{Unclonable encryption.}
In this paper, we also propose a construction for unclonable encryption with unclonable indistinguishability in the plain model.
The unclonable indistinguishability for this primitive is also defined through a piracy game, in which Alice receives a quantum encryption of a bit $b$, prepares a bipartite state, and sends each half of the state to two non-communicating adversaries Bob and Charlie.
In the challenge phase, Bob and Charlie both receive
the decryption key $\key$ and are asked to output \(b_{1}, b_{2}\).
Alice, Bob, and Charlie win if \(b_{i} = b\) for
\(i \in \{1, 2\}\).
Our construction of unclonable encryption also uses a copy-protection scheme of PRF.
A key is simply a random bitstring $\key_S$.
Encrypting a bit $b$ is done by in the following way: sample a PRF
key \(\key_P\); then copy-protect \(\key_P\) using the PRF protection algorithm to
get \(\rho_{\key_P}\); finally sample a fresh random bitstring $r$ and output $(r, y, \rho_{\key_P})$ where $y$ is either $\prf(\key_P, \key_S \oplus r)$ if $b = 0$, or a random bitstring if $b = 1$.
Similarly, as for copy-protection of point function, the security of our unclonable encryption construction also reduces to a monogamy-of-entanglement game.
As in the piracy game for this primitive, the same challenge is used for both Bob and Charlie, we meet the same problem as for our copy-protection construction, namely that the adversaries are required to output two vectors from the same coset space.

\section*{Acknowledgements}
This work was supported in part by the French ANR projects CryptiQ
(ANR-18-CE39-0015) and SecNISQ (ANR-21-CE47-0014).
\section{Preliminaries}

\subsection{Notations}
Throughout this paper, \(\secpar\) denotes the security parameter.
The notation \(\negl\) denotes any function \(f\) such that \(f(\lambda) = \lambda^{-\omega(1)}\), and \(\poly\) denotes any function \(f\) such that \(f(\lambda) = \mathcal{O}(\lambda^c)\) for some \(c > 0\).
The notation $\subexp(\secpar)$ denotes a sub-exponential function.

When sampling uniformly at random a value \(x\) from a set~\(\mathcal{S}\), we employ the notation \(x \sample \mathcal{S}\).
When sampling a value \(x\) from a probabilistic algorithm \(\adv\), or from a distribution $\distrib$, we employ
the notation \(a \gets \adv\), or $a \gets \distrib$.

By \(\ppt\) we mean a polynomial-time non-uniform family of probabilistic
circuits, and by \(\qpt\) we mean a polynomial-time family of quantum circuits.
When we write that an algorithm $\adv$ is ``efficient'', we mean that $\adv$ is $\qpt$.
We note $\adv(x; r)$ to denote that we run $\adv$ on input $x$ with random coins $r \in \bin^{\poly}$ as the random tape.
In the context of security games, we abuse the notations and sometimes write ($\qpt$) adversary instead of ($\qpt$) algorithm.
We also sometimes write that a $\qpt$ algorithm is run on a classical input $x$ instead of writing that it is run on $\ketbra{x}{x}$.

\subsection{Distributions}
\label{subsec:preli-distribs}
We define two families of distributions that we often consider in this paper.

\begin{definition}[Uniform Distribution]
  Let $S_\secpar$ be any set, and $\secpar \in \NN$.
  We write that a distribution $\distrib_\secpar$ over $S_\secpar \times S_\secpar$ is \emph{uniform} if it yields pairs of the form $(x_1, x_2)$ where $x_1$ and $x_2$ are independently and uniformly sampled from $S_\secpar$.

  Similarly, we write that a family of distributions $\distrib = \{\distrib_\secpar\}_{\secpar \in \NN}$ is uniform if all the $\distrib_\secpar$ are uniform.
\end{definition}

\begin{definition}[Identical Distribution]
  Let $S_\secpar$ be any set, and $\secpar \in \NN$.
  We write that a distribution $\distrib_\secpar$ over $S_\secpar \times S_\secpar$ is \emph{identical} if it yields pairs of the form $(x, x)$ where $x$ uniformly sampled from $S_\secpar$.

  Similarly, we write that a family of distributions $\distrib = \{\distrib_\secpar\}_{\secpar \in \NN}$ is identical if all the $\distrib_\secpar$ are identical.
\end{definition}

\subsection{Coset States}
Given a subspace $A \subset \FF_2^n$ of dimension $n / 2$ and a pair of vectors $(s, s') \in \FF_2^n$, the coset state $\ket{A_{s, s'}}$ is defined as
$$
  \ket{A_{s, s'}} := \frac{1}{\sqrt{2^{n/2}}} \sum_{a \in A} (-1)^{a \cdot s'} \ket{a + s}
$$
where $a \cdot s'$ denotes the inner product between $a$ and $s'$.

In particular, a coset state is such that $\Hgate^{\otimes n} \ket{A_{s, s'}} = \ket{A^\perp_{s', s}}$, where $A^\perp$ is the complement of $A$, \ie{} $A^\perp := \{u \in \FF_2^n\ \vert\ u \cdot v = 0\ \forall v \in A\}$.

\paragraph{Canonical representation.}
\label{par:can-coset}
As the canonical representation of a coset $A + s$, we use the lexicographically smallest vector of the coset; and for $u \in \FF_2^n$, we note $\can_A(u)$ the function that returns the canonical representation (also noted coset representative) of $A + u$.
We note that if $u \in A + s$, then $\can_A(u) = \can_A(s)$.
Also, the function $\can_A(\cdot)$ is efficiently computable given a description of $A$.

\subsection{Indistinguishable Obfuscation}
\begin{definition}[Indistinguishability Obfuscator~\cite{C:BGIRSVY01}]
  A uniform \ppt{} machine \(\iO\) is called an indistinguishability obfuscator
  for a classical circuit class \(\{\mathcal{C}_{\secpar}\}_{\secpar \in \NN}\) if the following
  conditions are satisfied:
  \begin{itemize}
    \item For all security parameters \(\secpar \in \NN \), for all
      \(C \in \mathcal{C}_{\secpar}\), for all input \(x\), we have that
    \[
      \prob{C'(x) = C(x) \mid C' \gets \iO(\secpar, C)} = 1.
    \]
    \item For any (not necessarily uniform) distinguisher \(\mathcal{D}\), for all
    security parameters \(\secpar \in \NN \), for all pairs of circuits
    \(C_{0}, C_{1} \in \mathcal{C}_{\secpar}\), we have that if
    \(C_{0}(x) = C_{1}(x)\) for all inputs \(x\), then
    \[
      \advantage{\iO}{}[(\secpar, \adv)] \coloneqq \lvert \prob{\mathcal{D}(\iO(\secpar, C_{0})) = 1} - \prob{\mathcal{D}(\iO(\secpar, C_{1})) =1} \rvert \leq \negl.
    \]
  \end{itemize}
  We further say that \(\iO\) is \(\delta\)-secure, for some concrete negligible
  function \(\delta(\secpar)\), if for all \qpt{} adversaries \(\adv\), the advantage
  \(\advantage{\iO}{}[(\secpar, \adv)]\) is smaller than \(\delta(\secpar)^{\Omega(1)}\).
  \end{definition}

\subsection{Compute-and-Compare Obfuscation}
\label{sec:preli-cc}

\begin{definition}[Compute-and-Compare Programs]
  Given a function \(f: \bin^{n} \to \bin^{m}\) along with a lock value \(y \in \bin^{m}\) and a message \(m \in \mspace\), we define the compute-and-compare program:
  \begin{align*}
    \ccprog[f, y, m](x) \coloneqq \begin{cases}
                                               m & \text{ if } f(x) = y, \\
                                               \bot & \text{ otherwise }.
                                  \end{cases}
  \end{align*}
  When the function, lock value, and message of a compute-and-compare program are not useful in the context, we will sometimes simply write $\ccprog$ in lieu of $\ccprog[f, y, m]$.
\end{definition}

\begin{definition}[Unpredictable Distribution]
  Let \(\distrib \coloneqq \{\distrib_{\secpar}\}_{\secpar \in \NN}\) be a family of
  distributions over pairs of the form \((\ccprog[f, y, m], \aux)\) where
  \(\ccprog[f, y, m]\) is a compute-and-compare program and \(\aux\) is some
  (possibly quantum) auxiliary information.
  We say that \(\distrib\) is an \emph{unpredictable distribution} if for all
  \(\qpt\) algorithm \(\adv\), we have that
  \[
    \probsublong{(\ccprog[f, y, m], \aux) \gets \distrib_{\secpar}}{\adv(\secparam, f, \aux) = y} \leq \negl.
  \]
  Note that, in this paper, we abuse the notation and write $f$ to denote indifferently the function $f$ or an efficient description of $f$.
\end{definition}

\begin{definition}[Sub-Exponentially Unpredictable Distribution]
  Let \(\distrib \coloneqq \{\distrib_{\secpar}\}_{\secpar \in \NN}\) be a family of
  distributions over pairs of the form \((\ccprog[f, y, m], \aux)\) where
  \(\ccprog[f, y, m]\) is a compute-and-compare program and \(\aux\) is some
  (possibly quantum) auxiliary information.
  We say that \(\distrib\) is a \emph{sub-exponentially unpredictable distribution} if for all
  \(\qpt\) algorithm \(\adv\), we have that
  \[
    \probsublong{(\ccprog[f, y, m], \aux) \gets \distrib_{\secpar}}{\adv(\secparam, f, \aux) = y} \leq \frac{1}{\subexp(\secpar)}.
  \]
  Note that, in this paper, we abuse the notation and write $f$ to denote indifferently the function $f$ or an efficient description of $f$.
\end{definition}

\begin{definition}[Compute-and-Compare Obfuscator]
\label{def:ccobf}
  A \(\ppt\) algorithm \(\ccobf\) is said to be a compute-and-compare
  obfuscator for a family of unpredictable distributions
  \(\distrib \coloneqq \{\distrib_\secpar\}\) if:
  \begin{itemize}
    \item \(\ccobf\) is functionality preserving:  for all \(x\),
    \[
      \prob{\ccobf(\secparam, \ccprog)(x) = \ccprog(x)} \geq 1 - \negl
    \]
    \item \(\ccobf\) has distributional indistinguishability: there exists
      a \(\qpt\) simulator \(\simul\) such that
    \[
      \left\{\ccobf(\secparam, \ccprog), \aux\right\} \approx_{c} \left\{\simul(\secparam, \ccprog.\mathsf{param}), \aux\right\},
    \]
      where \((\ccprog, \aux) \gets \distrib_{\secpar}\), and $\ccprog.\mathsf{param}$ denotes the input size, output size, and circuit size of $\ccprog$, that are not required to be obfuscated.
  \end{itemize}
\end{definition}

\begin{theorem}[\cite{C:CLLZ21}]
  Assuming post-quantum indistinguishable obfuscation, and the hardness of LWE, there exist compute-and-compare obfuscators for sub-exponentially unpredictable distributions.
\end{theorem}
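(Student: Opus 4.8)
The plan is to recover the construction and analysis of~\cite{C:CLLZ21}, whose essential idea is to build the obfuscator on top of an LWE-based lockable obfuscation and then to argue that its security proof survives two strengthenings at once: the auxiliary information being a \emph{quantum} state, and the unpredictability being only \emph{sub-exponential}. First I would observe that the object asked for in \cref{def:ccobf} is precisely the ``distributional indistinguishability'' form of lockable obfuscation: functionality preservation up to a negligible statistical error, together with a $\qpt$ simulator $\simul$ such that $\{\ccobf(\secparam, \ccprog), \aux\}$ and $\{\simul(\secparam, \ccprog.\mathsf{param}), \aux\}$ are computationally indistinguishable for $(\ccprog, \aux) \gets \distrib_{\secpar}$. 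It is known that such obfuscators exist from LWE alone for every \emph{computationally} unpredictable distribution with \emph{classical} auxiliary information: at a high level, the construction hides the lock value $y$ inside a program that, on input $x$, recomputes $f(x)$ and releases $m$ only when a one-time, ``lossy/pseudorandom'' encoding of $f(x)$ matches a published encoding of $y$, while the simulator outputs an obfuscation of the all-$\bot$ program. I would take this as the starting point and re-examine its security proof.

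Second, and this is the core of the argument, I would check that the security reduction lifts to a quantum auxiliary state. The reduction turns any $\qpt$ distinguisher $\adv$ between $\ccobf(\secparam, \ccprog)$ and $\simul(\secparam, \ccprog.\mathsf{param})$ into a $\qpt$ algorithm $\bdv$ that, on input $(\secparam, f, \aux)$, outputs the lock value $y$; quantitatively it relates $\adv$'s distinguishing advantage to $\bdv$'s recovery probability, up to additive $\iO$- and LWE-security losses and a sub-exponential multiplicative slack. The point I would insist on is that each hybrid hop is justified \emph{either} by functional equivalence of two circuits -- hence by $\iO$ security -- \emph{or} by (post-quantum) LWE/PRG security, so that the whole chain is straight-line: $\adv$, and therefore the quantum register holding $\aux$, is invoked exactly once, coherently, and never copied or measured-and-rewound. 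This is exactly why the construction is routed through $\iO$ instead of using the plain LWE lockable obfuscator as a black box -- rewriting sub-programs into functionally equivalent ones is what makes every game hop ``no-cloning safe''. Since $\adv$, $\bdv$ and $\aux$ are all allowed to be quantum and every hop appeals only to a post-quantum assumption, the whole reduction remains sound.

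Third, I would do the parameter bookkeeping. For a sub-exponentially unpredictable $\distrib$, $\bdv$ recovers $y$ with probability at most $1/\subexp(\secpar)$, which, even after the sub-exponential slack and the (sub-exponential) LWE loss, still yields a $\negl(\secpar)$ bound on the distinguishing advantage of every $\qpt$ $\adv$; summing the polynomially many hybrids preserves this. Functionality preservation of $\ccobf$ is a property of the construction alone, so it is inherited verbatim -- the correctness error is statistical and insensitive to whether $\aux$ is classical or quantum.

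The step I expect to be the main obstacle is the second one: re-deriving the security reduction so that the quantum auxiliary register is consumed exactly once -- that is, turning whatever hybrid argument the purely classical proof uses into a genuinely straight-line one in which the only occurrence of $\aux$ sits inside a single invocation of the distinguisher. This is the property that the plain statement of LWE-based lockable obfuscation does not promise, and it is what pulls the $\iO$ assumption into the picture in~\cite{C:CLLZ21}. The remaining ingredients -- matching the two definitions, propagating the sub-exponential parameters, and inheriting the statistical correctness -- are routine.
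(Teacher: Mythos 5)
The paper states this theorem with a citation to \cite{C:CLLZ21} and gives no proof of its own, so there is no in-paper argument to compare against; the only yardstick is the CLLZ21 argument itself. Judged in that light, your sketch has the right shape. You correctly observe that (i) LWE alone already yields compute-and-compare/lockable obfuscators for computationally unpredictable distributions with \emph{classical} auxiliary input, (ii) the obstruction to quantum auxiliary information is that the classical security reductions clone or rewind the auxiliary register, (iii) the role of $\iO$ is to replace those hops with single-shot substitutions of functionally equivalent circuits, which touch the auxiliary state exactly once and are therefore no-cloning safe, and (iv) sub-exponential unpredictability is there to absorb the sub-exponential security loss of the resulting reduction. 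That is the correct reading of why both assumptions appear.

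Nevertheless, what you wrote is a proof plan, not a proof, a fact you flag yourself at the end. The missing content is precisely the explicit hybrid chain together with the verification that each of its hops is either an $\iO$-equivalence of circuits or a post-quantum LWE/PRG hop that never copies or rewinds the auxiliary register. The one step your sketch elides that is genuinely delicate: sub-exponential unpredictability is a \emph{search}-type guarantee, whereas the hop that lets you swap the honest lock for a dummy one and conclude that the program is functionally equivalent to the always-$\bot$ program is a \emph{decision}-type statement. How \cite{C:CLLZ21} upgrade the former to the latter while respecting the no-cloning constraint on the auxiliary register is exactly where the content of the theorem lives, and it is what your proposal would have to supply before it constitutes a proof.
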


\subsection{Pseudorandom functions}
This subsection is adapted from \cite{chevalier2023semi,C:CLLZ21}.
A pseudorandom function ~\cite{FOCS:GolGolMic84} consists of a keyed
function \(\prf\) and a set of keys \(\kspace\) such that for a randomly chosen key
\(\key \in \kspace\), the output of the function \(\prf(\key, x)\) for any input \(x\) in
the input space \(\xspace\) ``looks'' random to a \qpt{} adversary, even when
given a polynomially many evaluations of \(\prf(\key, \cdot)\).
Puncturable pseudorandom functions have an additional property that some keys can be generated
\emph{punctured} at some point, so that they allow to evaluate the pseudorandom function at all
points except for the punctured points.
Furthermore, even with the punctured key, the pseudorandom function evaluation at a punctured
point still looks random.

Punctured pseudorandom functions are originally introduced
in~\cite{AC:BonWat13,PKC:BoyGolIva14,CCS:KPTZ13}, who observed that it is
possible to construct such puncturable pseudorandom functions for the construction
from~\cite{FOCS:GolGolMic84}, which can be based on any one-way
function~\cite{HILL99}.

\begin{definition}[Puncturable Pseudorandom Function]
  A pseudorandom function \(\prf : \kspace \times \xspace \to \yspace\) is a
  \emph{puncturable pseudorandom function} if there is an addition key space
  \(\kspace_{p}\) and three \ppt{} algorithms
  \(\prf = \scheme{\keygen, \puncture, \eval}\) such that:
  \begin{itemize}
    \item \(\key \gets \keygen(1^{\secpar})\).
      The key generation algorithm \(\keygen\) takes the security parameter
      \(1^{\secpar}\) as input and outputs a random key \(\key \in \kspace\).
    \item \(\key\{x\} \gets \puncture(\key, x)\).
      The puncturing algorithm \(\puncture\) takes as input a pseudorandom function key
      \(\key \in \kspace\) and \(x \in \xspace\), and outputs a key
      \(\key\{x\} \in \kspace_{p}\).
    \item \(y \gets \eval(\key\{x\}, x')\).
      The evaluation algorithm takes as input a punctured key
      \(\key\{x\} \in \kspace_{p}\) and \(x' \in \xspace\), and outputs a classical
      string \(y \in \yspace\).
  \end{itemize}

  We require the following properties of \(\prf\).
  \begin{itemize}
    \item \textbf{Functionality preserved under puncturing.}
      For all \(\secpar \in \NN \), for all \(x \in \xspace\),
      \begin{equation*}
        \pr\left[\forall x' \in \xspace \setminus \{x\}: \eval(\key\{x\}, x') = \eval(\key, x')
            \ \middle\vert
            \begin{array}{r}
              \key \sample \keygen(1^{\secpar}) \\
              \key\{x\} \sample \puncture(\key, x)
          \end{array}
        \right] = 1.
      \end{equation*}
    \item \textbf{Pseudorandom at punctured points.}
      For every \qpt{} adversary \(\adv \coloneqq (\adv_{1}, \adv_{2})\), and
      every \(\secpar \in \NN \), the following holds:
      \begin{align*}
        & \left\lvert\pr\left[
          \begin{array}{c}
            1 \gets \adv_{2}(\key\{x^{*}\}, y, \tau)
          \end{array}
          \ \middle\vert
          \begin{array}{r}
            (x^{*}, \tau) \gets \adv_{1}(1^{\secpar}, \tau) \\
            \key \sample \keygen(1^{\secpar}) \\
            \key\{x^{*}\} \sample \puncture(\key, x^{*}) \\
            y \gets \eval(\key, x^{*})
          \end{array}
        \right]\right.
        \\
        & \left.\quad - \pr\left[
          \begin{array}{c}
            1 \gets \adv_{2}(\key\{x^{*}\}, y, \tau)
          \end{array}
          \ \middle\vert
          \begin{array}{r}
            (x^{*}, \tau) \gets \adv_{1}(1^{\secpar}, \tau) \\
            \key \sample \keygen(1^{\secpar}) \\
            \key\{x^{*}\} \sample \puncture(\key, x^{*}) \\
            y \sample \yspace
          \end{array}
        \right]\right\rvert
        \leq \negl,
      \end{align*}
      where the probability is taken over the randomness of \(\keygen\),
      \(\puncture\), and \(\adv_1\).
  \end{itemize}
  Denote the above probability as \(\adv^{\prf}(\secpar, \adv)\).
  We further say that \(\prf\) is \(\delta\)-secure, for some concrete negligible
  function \(\delta(\secpar)\), if for all \qpt{} adversaries \(\adv\), the advantage
  \(\adv^{\prf}(\secpar, \adv)\) is smaller than \(\delta(\secpar)^{\Omega(1)}\).
\end{definition}

\begin{definition}[Statistically Injective Pseudorandom Function]
  \label{def:stat-inj-prf}
  A family of statistically injective (puncturable) pseudorandom functions with
  (negligible) failure probability \(\varepsilon(\cdot)\) is a (puncturable) pseudorandom
  functions family \(\prf\) such that with probability \(1 - \varepsilon(\secpar)\) over
  the random choice of key \(\key \gets \keygen(1^\secpar)\), we have that
  \(\prf(\key, \cdot)\) is injective.
\end{definition}

\begin{definition}[Extracting Pseudorandom Function]
  \label{def:extr-prf}
  A family of extracting (puncturable) pseudorandom functions with error
  \(\varepsilon(\cdot)\) for min-entropy \(k(\cdot)\) is a (puncturable) pseudorandom functions
  family \(\prf\) mapping \(n(\secpar)\) bits to \(m(\secpar)\) bits such that
  for all \(\secpar \in \NN\), if \(X\) is any distribution over \(n(\secpar)\)
  bits with min-entropy greater than \(k(\secpar)\), then the statistical
  distance between \((\key, \prf(\key, X))\) and
  \((\key, r \gets \bin^{m(\secpar)})\) is at most \(\varepsilon(\cdot)\), where
  \(\key \gets \keygen(1^\secpar)\).
\end{definition}

\section{A New Monogamy-of-Entanglement Game for Coset States}
\label{sec:new-moe}
In this section, we present a new monogamy-of-entanglement game for coset states and prove an upper-bound on the probability of winning this game.
Along the way, we present a BB84 version of this game with the same upper-bound.

\subsection{The Coset Version}
\begin{definition}[Monogamy-of-Entanglement Game with Identical Basis (Coset Version)]
  \label{def:new-moe-coset}
  This game is between a challenger and a triple of adversaries $(\adv, \bdv, \cdv)$ - where $\bdv$ and $\cdv$ are not communicating, and is parametrized by a security parameter $\secpar$.

  \begin{itemize}
    \item The challenger samples a subspace $A \gets \bin^{\secpar \times \frac{\secpar}{2}}$ and two vectors $(s, s') \gets \FF_2^n \times \FF_2^n$.
    Then the challenger prepares the coset state $\ket{A_{s, s'}}$ and sends $\ket{A_{s, s'}}$ to $\adv$.
    \item $\adv$ prepares a bipartite quantum state $\sigma_{12}$, then sends $\sigma_1$ to $\bdv$ and $\sigma_2$ to $\cdv$.
    \item The challenger samples $b \gets \bin$, then sends $(A, b)$ to both $\bdv$ and $\cdv$.
    \item $\bdv$ returns $u_1$ and $\cdv$ returns $u_2$.
  \end{itemize}
  We say that $\bdv$ makes a correct guess if $(b = 0\ \land\ u_1 \in A + s)$ or if $(b = 1\ \land\ u_1 \in A^\perp + s')$.
  Similarly, we say that $\cdv$ makes a correct guess if $(b = 0\ \land\ u_2 \in A + s)$ or if $(b = 1\ \land\ u_2 \in A^\perp + s')$.
  We say that $(\adv, \bdv, \cdv)$ win the game if both $\bdv$ and $\cdv$ makes a correct guess.
  For any triple of adversaries $(\adv, \bdv, \cdv)$ and any security parameter $\secpar \in \NN$ for this game, we note $\mathsf{MoE}_{coset}(\secparam, \adv, \bdv, \cdv)$ the random variable indicating whether $(\adv, \bdv, \cdv)$ win the game or not.
\end{definition}

We note that there is a trivial way for a triple of adversaries to win this game with probability $1/2$, by applying the following strategy.
$\adv$ samples a random bit $b^*$.
$\adv$ measures $\ket{A_{s, s'}}$ in the computational basis if $b^* = 0$, or in the Hadamard basis if $b^* = 1$.
In both cases, $\adv$ sends the outcome $u$ to both $\bdv$ and $\cdv$.
Regardless of the value of $A$ and $b$, $\bdv$ and $\cdv$ both return $u$.
Because when $b^* = b$ (which happens with probability $1/2$), the outcome of the measurement is a vector of the expected coset space, the adversaries win the game with probability $1/2$.
In the rest of this section we prove that no triple of adversaries can actually win the game with a probability significantly greater than $1/2$.

\begin{theorem}
  \label{th:new_moe_coset}
  There exists a negligible function $\negl[\cdot]$ such that, for any triple of algorithms $(\adv, \bdv, \cdv)$ and any security parameter $\secpar \in \NN$, $\prob{\mathsf{MoE}_{coset}(\secparam, \adv, \bdv, \cdv) = 1} \leq 1/2 + \negl$.
\end{theorem}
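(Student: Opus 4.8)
The strategy is to reduce the coset version of the game to its BB84 analogue and then establish the $1/2 + \negl$ bound for the BB84 version using the three-step template sketched in the technical overview, which adapts the extended-non-local-game machinery of \cite{Culf2022monogamyof} and \cite{Tomamichel_2013}. For the reduction, I would invoke (a coset-to-BB84 version of) the argument of \cite{Culf2022monogamyof}: a coset state $\ket{A_{s,s'}}$ with $A$ sampled as an $(n/2)$-dimensional subspace is, after applying a suitable Clifford (change of basis putting $A$ in standard form and then a Hadamard pattern), a tensor product of $n$ BB84 states $\bigotimes_i \ket{x_i}^{\theta_i}$ in which the basis pattern $\theta$ is determined by $A$ and the bits $x_i$ by $(s,s')$. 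Under this correspondence, outputting a vector in $A+s$ (when $b=0$) corresponds to outputting a string agreeing with $x$ on all coordinates $i$ with $\theta_i = 0$, and outputting a vector in $A^\perp + s'$ (when $b=1$) corresponds to agreeing with $x$ on all coordinates with $\theta_i = 1$. Since $A$ is revealed as the question, Alice/Bob/Charlie know $\theta$, so the two games have the same optimal winning probability; it therefore suffices to bound the BB84 version.

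For the BB84 version I would proceed in three steps. \emph{Step 1:} introduce the associated extended non-local game: Bob and Charlie prepare a tripartite state $\rho_{012}$, send register $2$ to the challenger, who measures each qubit in a uniformly random basis $\theta \in \bin^n$ obtaining outcome $x$, then reveals $(\theta, b)$ with $b \sample \bin$; each player must return $x^*$ agreeing with $x$ on $T_b \coloneqq \{i : \theta_i = b\}$. A standard purification/deferred-measurement argument shows the optimal monogamy winning probability equals that of this extended non-local game. \emph{Step 2:} write an arbitrary strategy as $(\rho_{012}, \{B^{\theta,b}_{x_{T_b}}\}, \{C^{\theta,b}_{x_{T_b}}\})$ with projective measurements, define $\rmPi_{\theta,b} = \sum_{x} \ketbra{x}{x}^\theta \otimes B^{\theta,b}_{x_{T_b}} \otimes C^{\theta,b}_{x_{T_b}}$, so that $p_{win} = \expectop_{\theta,b}[\trace(\rmPi_{\theta,b}\,\rho_{012})]$. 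Using the operator-norm bounding technique of \cite{Tomamichel_2013} via a mutually orthogonal family of permutations $\{\pi_{k,\alpha}\}_{k \in \intval{1,N}, \alpha \in \bin}$ on the index set $\thetaset_n \times \bin$, one obtains
\begin{align*}
  p_{win} &\leq \frac{1}{2N}\sum_{\substack{1 \leq k \leq N \\ \alpha \in \bin}} \max_{\theta,b} \bigl\lVert \rmPi_{\theta,b}\,\rmPi_{\pi_{k,\alpha}(\theta \Vert b)} \bigr\rVert = \frac12 + \frac{1}{2N}\sum_{1 \leq k \leq N} \max_{\theta,b} \bigl\lVert \rmPi_{\theta,b}\,\rmPi_{\pi_{k,1}(\theta \Vert b)} \bigr\rVert,
\end{align*}
where the $\alpha = 0$ terms contribute exactly $1/2$ because $\rmPi_{\theta,b}$ is a projector ($\lVert\rmPi_{\theta,b}\rmPi_{\theta,b}\rVert = 1$) for the identity permutation, and the remaining $\alpha=0$ permutations must also be chosen to give norm-$1$ contributions (or, more carefully, the family is designed so the $\alpha=0$ part sums to $N$).

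\emph{Step 3:} I would show that whenever $b' \neq b$, $\lVert \rmPi_{\theta,b}\,\rmPi_{\theta',b'}\rVert \leq 2^{-d(\theta,\theta')/4}$ where $d(\theta,\theta')$ is the Hamming distance between $\theta$ and $\theta'$ — this is where the measurement incompatibility of the two bases enters, via a Hadamard/overlap estimate as in \cite{Tomamichel_2013,Culf2022monogamyof} applied coordinate-wise on the symmetric difference of $T_b$ and $T_{b'}$. Then, building on the mutually orthogonal permutation family $\{\pi_k\}_{k \leq N}$ of \cite{Culf2022monogamyof} (for which $d(\theta, \pi_k(\theta))$ is a large constant, say $\Theta(n)$, for all $k$ and all $\theta$), I define $\widetilde{\pi}_{k,b}(\theta, b) \coloneqq \pi_k(\theta) \Vert (1-b)$. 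One checks this new family is still mutually orthogonal (the bit-flip on the last coordinate preserves orthogonality, and the permutation $(\theta,b) \mapsto \pi_k(\theta)\Vert(1-b)$ has no fixed points, etc.), that its last bit is always $1-b$, and that $d(\theta, \pi_k(\theta))$ stays a large constant. Plugging into the Step-2 bound, each of the $N$ terms is at most $2^{-\Theta(n)/4} = \negl[n]$, so $p_{win} \leq 1/2 + \negl$, which by the Step-1 equivalence and the Step-0 reduction gives $\prob{\mathsf{MoE}_{coset}(\secparam, \adv, \bdv, \cdv) = 1} \leq 1/2 + \negl$.

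\textbf{Main obstacle.} The routine parts are the purification argument (Step 1) and the norm-decay estimate (Step 3's distance bound), both close to \cite{Tomamichel_2013,Culf2022monogamyof}. The delicate point is the construction and verification of the permutation family $\{\widetilde{\pi}_{k,b}\}$: one must simultaneously guarantee (i) mutual orthogonality of the full family over the enlarged index set $\thetaset_n \times \bin$, (ii) that the $\alpha=0$ half of the sum contributes exactly $N$ (giving the clean $1/2$), and (iii) that for $\alpha=1$ the last bit always flips \emph{and} the $\theta$-distance remains $\Omega(n)$ — reconciling the ``last bit flips'' requirement with the orthogonality bookkeeping is the subtle bit, and is the part of the proof I would expect to need the most care. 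I would also need to be careful that the coset-to-BB84 reduction is tight, i.e. that the Clifford change of basis is efficiently computable from $A$ so that the adversaries in one game genuinely simulate the other with no loss.
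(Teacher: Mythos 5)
Your overall approach is the same as the paper's: reduce the coset version to the BB84 version (via the correspondence of \cite{Culf2022monogamyof}), pass to the extended non-local game following \cite{Tomamichel_2013}, bound the winning probability by a sum of projector-overlap norms $\lVert \rmPi_{\theta,b}\rmPi_{\theta',b'}\rVert$ indexed by a mutually orthogonal family of permutations on $\thetaset_n \times \bin$, prove the norm decays as $2^{-d(\theta,\theta')/4}$ whenever $b' \neq b$, and augment the CV permutation family with a last-bit flip. The $\alpha$-labeling convention you use is swapped relative to the paper's \cref{coro:permutations} (you take $\alpha=1$ to be the bit-flipping case, the paper takes $\alpha=0$), but this is immaterial.

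There is, however, one concrete error in your Step 3 that would make the final estimate fail as written. You assert that the mutually orthogonal family $\{\pi_k\}_{k \leq N}$ of \cite{Culf2022monogamyof} satisfies $d(\theta, \pi_k(\theta)) = \Theta(n)$ \emph{for all} $k$ and $\theta$, and then conclude each of the $N$ summands is $\negl$. This property is impossible to have: for any mutually orthogonal family of $N = \binom{n}{n/2}$ permutations of $\thetaset_n$, the orbit $\{\pi_k(\theta)\}_k$ of any fixed $\theta$ must be all of $\thetaset_n$, which includes elements at Hamming distance $2$ from $\theta$. Indeed, \cref{lem:cv21-3.4} guarantees that for every $i \in \intval{1,n/2}$ there are exactly $\binom{n/2}{i}^2$ permutations $\pi_k$ with $d(\theta,\pi_k(\theta)) = 2i$, so in particular $(n/2)^2$ of them give distance $2$ and contribute $\approx 2^{-1/2}$ each — not negligible. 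The correct argument bounds the \emph{average} over $k$, not each term: one substitutes the distance distribution from \cref{lem:cv21-3.4} into $\frac{1}{2N}\sum_k 2^{-d_k/4}$ and invokes the combinatorial estimate $\frac{1}{N}\sum_{i} \binom{n/2}{i}^2 2^{-i/2} \leq \sqrt{e}(\cos\tfrac{\pi}{8})^n$ (Lemma 3.6 of \cite{Culf2022monogamyof}). You should replace the ``each term is small'' claim with this averaging step; the rest of the proof (including your lifted family $\widetilde{\pi}_{k,b}(\theta\Vert b) = \pi_k(\theta)\Vert(1-b)$ and the check of its orthogonality) is sound and matches the paper.
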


The proof of this theorem is given in subsequent sections.

\subsection{The BB84 Version}
\label{subsec:bb84-ver}
We introduce below the BB84 version of this game.
We show in the following that it is sufficient to study the BB84 version (which is simpler) to prove \cref{th:new_moe_coset}, as any triple of adversaries for the BB84 version can be turned into a triple of adversaries for the coset version without changing the probability of winning.

\paragraph{Notations.}
Through all \cref{subsec:bb84-ver} and \cref{subsec:bb84-ver-proof}, we use the following notations.
Let $n \in \NN$, we note $\thetaset_n := \{\theta \in \bin^n\ :\ \lvert \theta \rvert = n / 2\}$ - where $\lvert \cdot \rvert$ denotes the Hamming weight - and $N := {n \choose n/2}$.
Thus, $\thetaset_\secpar$ has exactly $N$ elements.

\begin{definition}[Monogamy-of-Entanglement Game with Identical Basis (BB84 Version)]
  \label{def:new-moe-bb84}
  This game is between a challenger and a triple of adversaries $(\adv, \bdv, \cdv)$ - where $\bdv$ and $\cdv$ are non-communicating, and is parametrized by a security parameter $\secpar$.
  An illustration of this game is depicted in \cref{fig:new-moe-bb84}.

  \begin{itemize}
    \item The challenger samples $x \gets \bin^\secpar$ and $\theta \gets \thetaset_\secpar$.
    Then the challenger prepares the state $\ket{x^\theta} := \bigotimes \limits_{i \in \intval{1, \secpar}} \Hgate^{\theta_i} \ket{x_i}$ and sends $\ket{x^\theta}$ to $\adv$.
    \item $\adv$ prepares a bipartite quantum state $\sigma_{12}$, then sends $\sigma_1$ to $\bdv$ and $\sigma_2$ to $\cdv$.
    \item The challenger samples $b \gets \bin$, then sends $(\theta, b)$ to both $\bdv$ and $\cdv$.
    \item $\bdv$ returns $x_1$ and $\cdv$ returns $x_2$.
  \end{itemize}
  Let $x_{T_b} := \{x_i\ \vert\ \theta_i = b\}$.
  We say that $(\adv, \bdv, \cdv)$ win the game if $x_1 = x_2 = x_{T_b}$.
  For any triple of adversaries $(\adv, \bdv, \cdv)$ and any security parameter $\secpar \in \NN$ for this game, we note $\mathsf{MoE}_{BB84}(\secparam, \adv, \bdv, \cdv)$ the random variable indicating whether $(\adv, \bdv, \cdv)$ win the game or not.
\end{definition}

We note that the trivial strategy for the coset version can be easily adapted for the BB84 one.
Hence, the greatest probability of winning this game is also lower bounded by $1/2$.

\begin{figure}
    \centering
    \begin{tikzpicture}[framed]

        \node[draw, square, align=center] (challenger) {$\mathsf{Challenger}$\\
            \scriptsize $x \gets \bin^\secpar$\\
            \scriptsize $\theta \gets \thetaset_\secpar$\\
            \scriptsize $b \gets \bin$
        };

        \node[draw, minimum width={width("AAAA")}, square, right=25mm of challenger] (alice) {$\adv$};
        \draw[->] (challenger) 
            -- node[above, midway] {$\ket{x^\theta}$}
            node[below, midway] {} (alice);

        \node[draw, minimum width={width("AAAA")}, square, above right=5mm and 4mm of alice] (bob) {$\bdv$};
        \draw[->] (alice) |- node[above]{$\sigma_1$} (bob);
        \node[above left=5mm and 15mm of bob] (xb) {$(\theta, b)$};
        \draw[->, dashed] (challenger) |- (xb) -| (bob);
        \node[right=7mm of bob] (yb) {$x_1$};
        \draw[->] (bob) -- (yb);

        \node[draw, minimum width={width("AAAA")}, square, below right=5mm and 3mm of alice] (charlie) {$\cdv$};
        \draw[->] (alice) |- node[below]{$\sigma_2$} (charlie);
        \node[below left=10mm and 15mm of charlie] (xc) {$(\theta, b)$};
        \draw[->, dashed] (challenger) |- (xc) -| (charlie);
        \node[right=5mm of charlie] (yc) {$x_2$};
        \draw[->] (charlie) -- (yc);

        \node[right=of yb, align=left, rectangle, draw] () {
            Winning Condition:\\
            $x_1 = x_2 = x_{T_b}$};

    \end{tikzpicture}

    \caption{Monogamy-of-Entanglement Game with Identical Basis (BB84 Version)}
    \label{fig:new-moe-bb84}
\end{figure}
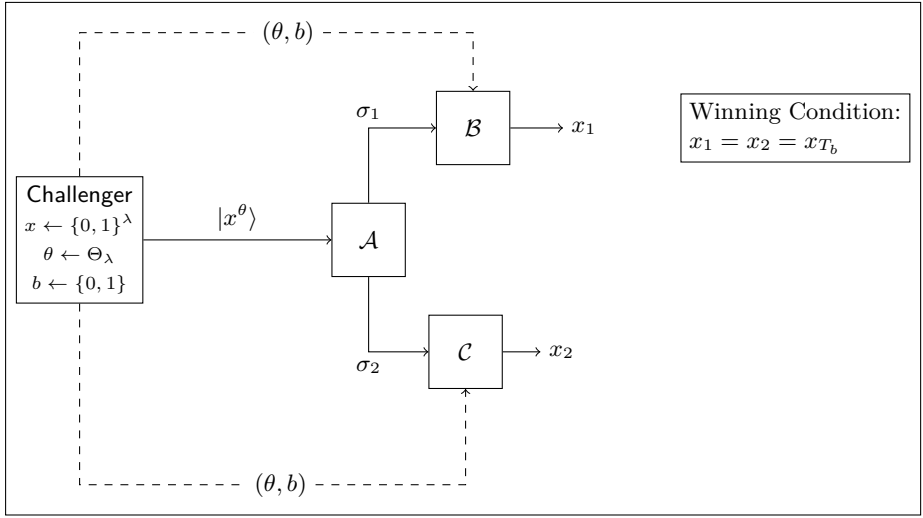
 
\begin{theorem}
  \label{th:new_moe_bb84}
  There exists a negligible function $\negl[\cdot]$ such that, for any triple of algorithms $(\adv, \bdv, \cdv)$ and any security parameter $\secpar \in \NN$, $\prob{\mathsf{MoE}_{BB84}(\secparam, \adv, \bdv, \cdv) = 1} \leq 1/2 + \negl$.
\end{theorem}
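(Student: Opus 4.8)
The plan is to follow the three-step template of Culf and Vidick~\cite{Culf2022monogamyof} for the original coset monogamy game, adapting it to the extra challenge bit $b$ that is revealed only after $\adv$ has split the state. \emph{Step 1: passing to an extended non-local game.} First I would bound the value of $\mathsf{MoE}_{BB84}$ by that of its associated extended non-local game~\cite{JMRW16}: two players prepare a tripartite state $\rho_{012}$, send register $2$ to the challenger and then stop communicating; the challenger samples $\theta \gets \thetaset_\secpar$, measures each qubit of register $2$ in basis $\theta_i$ to obtain an outcome $x$, samples $b \gets \bin$, sends $(\theta,b)$ to both players, and each must reply with $x_{T_b}$ where $T_b = \{i : \theta_i = b\}$. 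This follows from a deferred-measurement argument: the maximally mixed state on $\secpar$ qubits is, for every fixed $\theta$, the average over $x \gets \bin^\secpar$ of $\ketbra{x^\theta}{x^\theta}$, so the challenger may equivalently hand $\adv$ one half of a maximally entangled pair and only measure the other half in basis $\theta$ after $\adv$ has acted, the global state at that point being a legal $\rho_{012}$. Hence it suffices to bound the value of the extended non-local game.

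\emph{Step 2: an operator-norm bound.} Next I would describe an arbitrary strategy by a state $\rho_{012}$ together with two families of projective measurements, $\{B^{\theta,b}_u\}_u$ on register $1$ and $\{C^{\theta,b}_u\}_u$ on register $2$, indexed by $\theta \in \thetaset_\secpar$ and $b \in \bin$, with outcomes $u$ ranging over assignments to the coordinates in $T_b$. With
\[
  \rmPi_{\theta,b} \;=\; \sum_{x \in \bin^\secpar} \ketbra{x}{x}^\theta \otimes B^{\theta,b}_{x_{T_b}} \otimes C^{\theta,b}_{x_{T_b}},
\]
the winning probability is $p_{win} = \expectop_{\theta,b}[\trace(\rmPi_{\theta,b}\,\rho_{012})]$. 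I would then apply the overlap bound of Tomamichel, Fehr, Kaniewski and Wehner~\cite{Tomamichel_2013}: for any mutually orthogonal family $\{\pi_{k,\alpha}\}_{1 \le k \le N,\ \alpha \in \bin}$ of permutations of the index set $\thetaset_\secpar \times \bin$,
\[
  p_{win} \;\le\; \frac{1}{2N}\sum_{k,\alpha} \max_{\theta,b} \lVert \rmPi_{\theta,b}\,\rmPi_{\pi_{k,\alpha}(\theta\Vert b)} \rVert .
\]
Choosing the family so that the permutations with $\alpha = 0$ preserve $b$, the corresponding $N$ terms are each at most $1$ (a norm of a product of projectors), contributing at most $\tfrac12$; so $p_{win} \le \tfrac12 + \tfrac{1}{2N}\sum_{k} \max_{\theta,b}\lVert \rmPi_{\theta,b}\,\rmPi_{\pi_{k,1}(\theta\Vert b)}\rVert$. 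The remaining task is to pick the $\pi_{k,1}$ making the off-diagonal overlaps tiny.

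\emph{Step 3: small overlaps and the permutation family.} The key estimate is: if $b' \ne b$, then $\lVert \rmPi_{\theta,b}\,\rmPi_{\theta',b'} \rVert \le 2^{-d/4}$, where $d$ is the number of coordinates on which $\theta$ and $\theta'$ disagree. The intuition is that on such a coordinate one projector pins the computational-basis value and the other the Hadamard-basis value of the same qubit, giving a factor $1/\sqrt 2$; since $b \ne b'$ forces the disagreement set to meet $T_b$ or $T_{b'}$, iterating over the bad coordinates produces the exponential decay, as in~\cite{Culf2022monogamyof}. I would then take the Culf--Vidick mutually orthogonal family $\{\pi_k\}_{1 \le k \le N}$ of permutations of $\thetaset_\secpar$ with $d(\theta, \pi_k(\theta)) = d_0$ a fixed value $\Omega(\secpar)$ for all $\theta$, and lift it to $\widetilde{\pi}_{k,\alpha}(\theta \Vert b) := \pi_k(\theta) \Vert (b \oplus \alpha)$. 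This family is still mutually orthogonal (if $k \ne k'$ the $\thetaset_\secpar$-part differs by orthogonality of $\{\pi_k\}$; if $k = k'$ and $\alpha \ne \alpha'$ the last bit differs), the $\alpha = 0$ part fixes $b$, and the $\alpha = 1$ part flips it while keeping $d = d_0$ — precisely what Step 2 demands. Substituting, $p_{win} \le \tfrac12 + \tfrac12 \cdot 2^{-d_0/4} = \tfrac12 + \negl$, which gives \cref{th:new_moe_bb84}.

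I expect Step 3 to be the main obstacle: one must carefully track how the target substrings $x_{T_b}$ and $x_{T_{b'}}$ interact with the disagreement set of $\theta,\theta'$ when $b \ne b'$ in order to get the overlap bound $2^{-d/4}$ — this is exactly where the ``identical basis'' variant departs from~\cite{Culf2022monogamyof}, since there Bob's and Charlie's targets lay in the two different bases rather than the same one — and one must verify that the lifted family $\{\widetilde{\pi}_{k,\alpha}\}$ really is a mutually orthogonal family of permutations of the enlarged index set $\thetaset_\secpar \times \bin$. Steps 1 and 2 are, by contrast, routine adaptations of~\cite{JMRW16,Tomamichel_2013,Culf2022monogamyof}.
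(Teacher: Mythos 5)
Your Steps 1 and 2, and the key overlap bound $\lVert \rmPi_{\theta,b}\,\rmPi_{\theta',b'} \rVert \le 2^{-d/4}$ for $b'\ne b$ in Step 3, match the paper's proof almost exactly. The issue is in the final step, where you claim to take the Culf--Vidick family $\{\pi_k\}_{1\le k\le N}$ with $d(\theta,\pi_k(\theta)) = d_0$ a fixed value $\Omega(\secpar)$ for all $\theta$ and all $k$, and then conclude $p_{win} \le \tfrac12 + \tfrac12\cdot 2^{-d_0/4}$. No such family exists. For a mutually orthogonal family of $N = \lvert\thetaset_\secpar\rvert$ permutations, the map $k\mapsto\pi_k(\theta)$ is a bijection for each fixed $\theta$, so the distances $d(\theta,\pi_k(\theta))$ as $k$ ranges over $\intval{1,N}$ must sweep through all achievable values (including $0$). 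Culf and Vidick's Lemma 3.4 indeed gives a family in which, for each $k$, the distance $d(\theta,\pi_k(\theta))$ is independent of $\theta$, but it varies with $k$: for each $i\in\intval{1,n/2}$ there are exactly $\binom{n/2}{i}^2$ permutations with distance $2i$, and in particular many have small distance. Consequently $\max_{\theta,b}\lVert\rmPi_{\theta,b}\rmPi_{\pi_{k,1}(\theta\Vert b)}\rVert$ is not uniformly small over $k$, and your substitution fails.

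The correct conclusion (which is what the paper does) keeps the average over $k$ rather than a single bound: lifting the CV family exactly as you propose, one gets
\[
  p_{win} \;\le\; \frac12 + \frac{1}{2N}\sum_{i=1}^{n/2}\binom{n/2}{i}^2 2^{-i/2},
\]
and then invokes Culf--Vidick's Lemma 3.6, which bounds this sum by $\tfrac12 + \tfrac{\sqrt{e}}{2}\bigl(\cos\tfrac{\pi}{8}\bigr)^n$, a negligible advantage. So your proof has the right architecture and your lifted permutation family $\widetilde{\pi}_{k,\alpha}(\theta\Vert b) := \pi_k(\theta)\Vert(b\oplus\alpha)$ is the right object (modulo a harmless relabelling of $\alpha$ relative to the paper), but you must replace the ``constant $d_0$'' claim and the resulting one-line estimate with the averaging computation via Lemma 3.6.
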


Proof of~\cref{th:new_moe_coset} follows similarly as that
of~\cite{Culf2022monogamyof}, in which the winning probability of cloning
adversaries in the monogamy-of-entanglement game of coset states reduces to the
winning probability of the adversaries in the game of BB84 states.
We thus provide the proof of~\cref{th:new_moe_bb84} below.

\subsection{Proof of \cref{th:new_moe_bb84}}
\label{subsec:bb84-ver-proof}
This proof follows the same structure as \cite{Culf2022monogamyof}.
We can separate the proof in four main steps.
\begin{enumerate}
  \item In the first step, we define the \emph{extended non-local game} \cite{JMRW16} associated the monogamy-of-entanglement game (BB84 version), and show that the greatest winning probability of the monogamy game is the same as the one of this extended non-local game.
  This step allows us to use a technique from \cite{Tomamichel_2013} to bound the winning probability.
  \item In the second step, we express any strategy for this extended non-local game with security parameter $n \in \NN$ as a tripartite quantum state $\rho_{012}$ as well as two families of projective measurements, $\{B^{\theta, b}\}$ and $\{C^{\theta, b}\}$, both indexed by $\theta \in \thetaset_n$ and $b \in \bin$.
  We define the projector $\rmPi_{\theta, b} = \sum_{x \in \bin^n} \ketbra{x}{x}^\theta \otimes B^{\theta, b}_{x_{T_b}} \otimes C^{\theta, b}_{x_{T_b}}$ such that the winning probability of this strategy is $p_{win} = \expectop_{\theta, b}\left[ \trace\left(\rmPi_{\theta, b}\  \rho_{012}\right)\right]$.
  Then, we show the following upper-bound:
  \begin{align*}
    p_{win} & \leq \frac{1}{2N}\sum_{\substack{1 \leq k \leq N\\\alpha \in \bin}}\ \max_{\theta, b} \big \lVert \rmPi_{\theta, b} \rmPi_{\pi_{k, \alpha}(\theta \Vert b)} \rVert\\
    ~       & = \frac{1}{2} + \frac{1}{2N}\sum_{1 \leq k \leq N}\ \max_{\theta, b} \big \lVert \rmPi_{\theta, b} \rmPi_{\pi_{k, 1}(\theta \Vert b)} \rVert
  \end{align*}
  where $\{\pi_{k, \alpha}\}_{k \in \intval{1, N}, \alpha \in \bin}$ is a family of permutations to be defined later in the proof.
  \item In the third step, we show that the quantity $\lVert \rmPi_{\theta, b}\ \rmPi_{\theta', b'} \rVert$ is upper-bounded by a small quantity as long as $b' \neq b$.
  \item Finally, in the fourth step, we show that there exists a family of permutations such that, when $\alpha = 0$, $\pi_{k, \alpha}(\theta, b) = (\theta', b')$ for some $\theta'$ and $b' \neq b$, and conclude the proof.
\end{enumerate}

\paragraph{Step 1: extended non-local game.}
We define the following extended non-local game, and show that any triple of adversaries that win the monogamy-of-entanglement game with same basis (BB84 version) with probability $p$ can be turned into another triple of adversaries that win this extended non-local game with the same probability $p$.

\begin{definition}[Extended Non-Local Game]
  \label{def:ext-non-local-game}
  This game is between a challenger and two adversaries $\adv$ and $\bdv$, and is parametrized by a security parameter $\secpar$.
  \begin{itemize}
    \item $\bdv$ and $\cdv$ jointly prepare a quantum state $\rho_{012}$ - where $\rho_0$ is a $\secpar$-qubits quantum state, then send $\rho_0$ to the challenger.
    $\bdv$ and $\cdv$ keep $\rho_1$ and $\rho_2$ respectively.
    From this step $\bdv$ and $\cdv$ cannot communicate.
    \item The challenger samples $\theta \gets \thetaset_n$ and $b \gets \bin$.
    Then, for all $i \in \intval{1, \secpar}$, the challenger measures the $i^{\text{th}}$ qubit of $\rho_0$ in computational basis if $\theta_i = 0$ or in Hadamard basis if $\theta_i = 1$.
    Let $m \in \bin^n$ denote the measurement outcome.
    Finally, the challenger sends $(\theta, b)$ to $\bdv$ and $\cdv$.
    \item $\bdv$ returns $m_1$ and $\cdv$ returns $m_2$.
  \end{itemize}
  Let $m_{T_b} := \{m_i\ \vert\ \theta_i = b\}$.
  We say that $(\bdv, \cdv)$ win the game if $m_1 = m_2 = m_{T_b}$.
\end{definition}

\begin{lemma}
\label{lem:ext-non-local}
  Let $n \in \NN$ and $(\adv, \bdv, \cdv)$ a triple of adversaries for the monogamy-of-entanglement game (\cref{def:new-moe-bb84}) parametrized by $n$, that win with probability $p_n$.
  Then there exists a quantum state $\rho_{012}$ and a pair of adversaries $(\adv'_1, \adv'_2)$ for the extended non-local game (\cref{def:ext-non-local-game}) that win with the same probability $p_n$.
\end{lemma}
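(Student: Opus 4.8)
The claim is essentially that the monogamy-of-entanglement game with identical basis (BB84 version) and its associated extended non-local game have the same optimal winning probability, via a one-directional reduction: from a good triple $(\adv, \bdv, \cdv)$ for the monogamy game we build a good pair $(\adv'_1, \adv'_2)$ for the extended non-local game. The key conceptual point is that in the monogamy game the challenger measures $n$ honestly-prepared BB84 states in the \emph{known} bases used to prepare them, so the measurement outcome is exactly $x$ with certainty; whereas in the extended non-local game the state $\rho_0$ sent to the challenger is adversarially prepared. So the plan is to have the players \emph{simulate the challenger of the monogamy game inside the extended non-local game}: player $\adv'_1$ plays the role of both $\adv$ (the original splitter) and the monogamy challenger's state-preparation step.

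Concretely, here is how I would carry it out. First, $\adv'_1$ (together with $\adv'_2$, who is passive at this stage) prepares the tripartite state $\rho_{012}$ as follows: internally sample $x \gets \bin^n$ and $\theta \gets \thetaset_n$ exactly as the monogamy challenger would, but instead of sending $\ket{x^\theta}$ out, keep it; run $\adv$ on $\ket{x^\theta}$ to obtain the bipartite state $\sigma_{12}$; and finally set the challenger's register $\rho_0$ to be a fresh $n$-qubit register prepared in the state $\ket{x^\theta}$ (independent of the copy used by $\adv$ — note $\adv'_1$ knows $x,\theta$ classically, so it can prepare as many copies of $\ket{x^\theta}$ as it likes), set $\rho_1$ to hold $\sigma_1$ together with the classical record of $(x,\theta)$, and set $\rho_2$ to hold $\sigma_2$ together with the classical record of $(x,\theta)$. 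Then $\adv'_1$ sends $\rho_0$ to the challenger and keeps $\rho_1$; $\adv'_2$ keeps $\rho_2$. Crucially, after this point $\adv'_1$ and $\adv'_2$ do not communicate, which is consistent with the non-communication requirement since the classical values $x,\theta$ were fixed and copied before the split, exactly mirroring how $\bdv$ and $\cdv$ are separated in the monogamy game.

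Second, I observe that when the challenger measures $\rho_0 = \ketbra{x^\theta}{x^\theta}$ qubit-by-qubit in the bases dictated by its freshly sampled $\theta' \gets \thetaset_n$, the distribution of the outcome $m$ is: on coordinates where $\theta'_i = \theta_i$, $m_i = x_i$ deterministically, and on the remaining coordinates $m_i$ is uniform. But this is a problem only if the players must match $m$ and not $x$ — so the right move is for the players to ignore the challenger's $\theta'$ entirely for answering and instead reuse their own $\theta$... except the winning condition of the extended non-local game is phrased in terms of the challenger's sampled $(\theta,b)$. I would resolve this by noting the clean special case used in the proof: since both $\adv'_1$'s internal $\theta$ and the challenger's $\theta$ are drawn from the same distribution $\thetaset_n$, I can instead have $\adv'_1$ \emph{not} sample its own $\theta$ and instead prepare $\rho_0$ as the maximally-entangled-in-$\thetaset_n$ purification trick — or more simply, prepare $\rho_0$ in the fixed computational basis as a uniformly random $x$ and let the challenger's measurement define $\theta$-consistent outcomes. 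On reflection, the cleanest route, and I expect the one the authors take, is: $\adv'_1$ samples $x$ uniformly and prepares $\rho_0 = \ket{x}$ (computational basis, no $\theta$ chosen yet), hands it to the challenger; the challenger picks $(\theta,b)$ and measures, yielding outcome $m$ with $m_i = x_i$ whenever $\theta_i = 0$ and $m_i$ uniform whenever $\theta_i = 1$; this is precisely the distribution induced by preparing $\ket{x^\theta}$ and measuring in the $\theta$ bases. Then $(\theta, b)$ is forwarded; $\adv'_1$ runs $\adv$ on its held copy to split, runs $\bdv$ on $(\theta,b)$ to get $x_1$; $\adv'_2$ runs $\cdv$ on $(\theta,b)$ to get $x_2$. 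By construction the joint distribution of $(x^\theta, \theta, b, \sigma_1, \sigma_2, x_1, x_2)$ is identical to that in the monogamy game, and the winning condition $x_1 = x_2 = m_{T_b}$ coincides with $x_1 = x_2 = x_{T_b}$ since $m$ and $x$ agree on all coordinates with $\theta_i = b$ when $b=0$, and — handling $b=1$ symmetrically by having $\adv'_1$ instead prepare $\rho_0$ in a way that fixes the Hadamard-basis values — one gets equality on the relevant $T_b$ either way. Therefore the new pair wins with exactly probability $p_n$.

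\textbf{Main obstacle.} The delicate point, and the one I would be most careful about, is the asymmetry between $b=0$ and $b=1$: preparing $\rho_0$ as $\ket{x}$ in a fixed basis pins down the $\theta_i = 0$ coordinates of the outcome but randomizes the $\theta_i = 1$ ones, which is exactly wrong when $b = 1$. The fix is to have $\adv'_1$ prepare $\rho_0 = \ket{x^\theta}$ for a \emph{self-sampled} $\theta \in \thetaset_n$ and record $\theta$ in both $\rho_1$ and $\rho_2$; then the challenger's independently sampled $\theta'$ will, with the monogamy game's own distribution, be used — but since the players hold a copy of the \emph{same} $x^\theta$-preparation randomness, one must argue the outcome $m$ of the challenger's measurement in bases $\theta'$ still agrees with $x$ on $T_b = \{i : \theta'_i = b\}$. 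This fails in general. The actual resolution, which I believe is what the lemma intends, is subtler: the challenger's $\theta$ in the extended game is the \emph{only} $\theta$, and $\adv'_1$ must prepare $\rho_0$ so that \emph{whatever} $(\theta,b)$ the challenger picks, the post-measurement outcome on $T_b$ is information the players already ``know''. This is achieved by preparing $\rho_0$ to be \emph{half of $n$ EPR pairs}: then the challenger's measurement in bases $\theta$ yields outcome $m$, and $\adv'_1$'s retained halves, measured in the same bases $\theta$ (which it learns in the challenge phase), yield exactly $m$; so $\adv'_1$ can then simulate the monogamy challenger by preparing $\ket{m^\theta}$, running $\adv$, then $\bdv$, and $\adv'_2$ does the same with $\cdv$. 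I expect the write-up to use exactly this EPR-pair / steering argument, and the bulk of the formal work is checking that the induced distribution on $(\theta, b, m)$ and on the adversaries' transcripts matches the monogamy game's, from which $p_n$ is preserved.
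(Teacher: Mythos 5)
You correctly identify the essential ingredient — preparing the challenger's register as halves of $n$ EPR pairs so that its measurement in whatever basis $\theta$ it picks ``steers'' the retained halves onto $\ket{x^\theta}$ — but your final construction places Alice's splitting map at the wrong point in time, and as described it does not work. You have $\adv'_1$ retain all the EPR halves, wait until the challenge phase to learn $\theta$, measure to obtain $m$, re-prepare $\ket{m^\theta}$, and only then run $\adv$ to produce $\sigma_{12}$, followed by $\bdv$ on $\sigma_1$; you then say ``$\adv'_2$ does the same with $\cdv$''. But at that point $\adv'_1$ and $\adv'_2$ are forbidden from communicating, so $\sigma_2$ can never reach $\adv'_2$, and $\adv'_2$ holds neither the EPR halves nor any share of the split state. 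There is no way for $\adv'_2$ to ``do the same''; the strategy only reproduces Bob's half of the monogamy game and cannot win except trivially.

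The paper's resolution is to apply the splitting map \emph{during the joint preparation phase}, before anyone learns $\theta$ or $b$: model $\adv$ as a CPTP map $\rmPhi$, prepare $\rho_{00'}$ as $n$ EPR pairs, apply $\rmPhi$ to the retained halves $\rho_{0'}$ to obtain $\rho_{012}$, send $\rho_0$ to the challenger, and give $\rho_1$ to $\adv'_1$ and $\rho_2$ to $\adv'_2$; in the challenge phase each player simply applies the original POVM $B^{\theta,b}$ or $C^{\theta,b}$ to its register. The content of the lemma is then the commutation identity
\begin{align*}
\trace\Bigl[\bigl(\ketbra{x^\theta}{x^\theta} \otimes B_{x_{T_b}}^{\theta,b} \otimes C_{x_{T_b}}^{\theta,b}\bigr)\rho_{012}\Bigr]
= \frac{1}{2^n}\,\trace\Bigl[\bigl(B_{x_{T_b}}^{\theta,b} \otimes C_{x_{T_b}}^{\theta,b}\bigr)\rmPhi\bigl(\ketbra{x^\theta}{x^\theta}\bigr)\Bigr],
\end{align*}
i.e., measuring the challenger's halves after $\rmPhi$ has acted on the other halves yields the same statistics as collapsing the input to $\ket{x^\theta}$ first and then applying $\rmPhi$. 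This is exactly the ``bulk of the formal work'' you defer to, but it is also what lets the split happen early enough to respect the non-communication constraint — which is the point your construction misses. Your earlier attempts (a fresh self-sampled copy of $\ket{x^\theta}$, or a computational-basis $\ket{x}$) fail for the reasons you yourself give, so the gap is confined to this final placement issue, but it is a genuine one.
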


\begin{proof}
  Consider a triple of adversaries for the monogamy-of-entanglement game (\cref{def:new-moe-bb84}), parametrized by $n \in \NN$, that win with probability $p_n$.
  We can model these adversaries as a CPTP map $\rmPhi: \mathcal{H}_0 \rightarrow \mathcal{H}_1 \times \mathcal{H}_2$, and POVMs families $\{B^{\theta, b}\}$ and $\{C^{\theta, b}\}$, both indexed by $\theta \in \thetaset_n$ and $b \in \bin$.
  Then we have
  $$
    p_n = \expectop_{\substack{\theta \in \thetaset_n\\b \in \bin}} \expectop_{x \in \bin^n} Tr\left[ (B_{x_{T_b}}^{\theta, b} \otimes C_{x_{T_b}}^{\theta, b}) \rmPhi(\ketbra{x^\theta}{x^\theta}) \right].
  $$
  The strategy for the extended non-local game is as follows.
  $\bdv$ and $\cdv$ prepare the bipartite state $\rho_{00'} = \bigotimes_{1 \leq i \leq n} \ketbra{\phi^+}{\phi^+}$ where $\phi^+$ denotes the EPR state $(\ket{00} + \ket{11})/\sqrt{2}$, and where $\rho_0$ (resp. $\rho_{0'}$) is composed of the first halves (resp. second halves) of these EPR states.
  Then, they apply $\rmPhi$ to $\rho_{0'}$.
  Let $\rho_{012}$ denotes the resulting state.
  They send $\rho_0$ to the challenger, $\bdv$ keeps $\rho_1$ and $\cdv$ keeps $\rho_2$.
  Later, when $\bdv$ receives $(\theta, b)$, from the challenger, $\bdv$ applies the POVM $B^{\theta, b}$ to $\rho_1$ and returns the outcome.
  $\cdv$ does the same with POVM $C^{\theta, b}$ and $\rho_2$.
  The probability of winning of such strategy is then
  \begin{equation}
    \label{eq:pwin-ext-non-local}
    p'_n = \expectop_{\substack{\theta \in \thetaset_n\\b \in \bin}} \sum_{x \in \bin^n} Tr\left[ \left(\ketbra{x^\theta}{x^\theta} \otimes B_{x_{T_b}}^{\theta, b} \otimes C_{x_{T_b}}^{\theta, b}\right) \rho_{012} \right].
  \end{equation}
  We do the following calculation.
  \begin{align*}
    Tr\left[ \left(\ketbra{x^\theta}{x^\theta} \otimes B_{x_{T_b}}^{\theta, b} \otimes C_{x_{T_b}}^{\theta, b}\right) \rho_{012} \right]
    &= \frac{1}{2^n} \sum\limits_{r, r' \in \bin^n} Tr\left[  \left(\ketbra{x^\theta}{x^\theta} \otimes B_{x_{T_b}}^{\theta, b} \otimes C_{x_{T_b}}^{\theta, b}\right) \left(\ketbra{r}{r'} \otimes \rmPhi \left(\ketbra{r}{r'}\right)\right) \right]\\
    ~ & = \frac{1}{2^n} \sum\limits_{r, r' \in \bin^n} \braket{r \vert x^\theta} \braket{x^\theta \vert r'} Tr\left[ \left(B_{x_{T_b}}^{\theta, b} \otimes C_{x_{T_b}}^{\theta, b}\right) \rmPhi \left(\ketbra{r}{r'}\right) \right]\\
    ~ & = \frac{1}{2^n} \sum\limits_{r, r' \in \bin^n} Tr\left[ \left(B_{x_{T_b}}^{\theta, b} \otimes C_{x_{T_b}}^{\theta, b}\right) \rmPhi \left(\ket{r}\braket{r \vert x^\theta} \braket{x^\theta \vert r'}\bra{r'}\right) \right]\\
    ~ & = \frac{1}{2^n} Tr\left[ \left(B_{x_{T_b}}^{\theta, b} \otimes C_{x_{T_b}}^{\theta, b}\right) \rmPhi \left(\frac{1}{2^n} \sum\limits_{r \in \bin^n} \ketbra{r}{r}\ \ketbra{x^\theta}{x^\theta} \frac{1}{2^n} \sum\limits_{r' \in \bin^n} \ketbra{r'}{r'}\right) \right]\\
    ~ & = \frac{1}{2^n} Tr\left[ \left(B_{x_{T_b}}^{\theta, b} \otimes C_{x_{T_b}}^{\theta, b}\right) \rmPhi \left(\ketbra{x^\theta}{x^\theta}\right) \right]
  \end{align*}
By plugging this result into \cref{eq:pwin-ext-non-local}, we get $p'_n = p_n$, which concludes the proof.

\end{proof}

\paragraph{Step 2: first upper-bound of the winning probability.}
We prove an upper-bound for the extended non-local game above.
We need the following lemma.

\begin{lemma}[Lemma 2 of \cite{Tomamichel_2013}]
  \label{lem:lem2tfkw13}
  Let $\rmPi_1, \dots, \rmPi_n$ be projective positive semi-definite operators on a Hilbert space, and $\{\pi_i\}_{i \in \intval{1, n}}$ be a set of orthogonal permutations for some integer $n$.
  Then
  $$
  \Big \lVert \sum\limits_{i = 1}^n \rmPi_i \Big \rVert \leq \sum\limits_{i = 1}^n \max\limits_{j \in \intval{1, n}} \big \lVert \rmPi_j \rmPi_{\pi_i(j)} \big \rVert
  $$
\end{lemma}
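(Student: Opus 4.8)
The plan is to prove this as a purely operator-theoretic fact, following the argument of~\cite{Tomamichel_2013}. Write $R := \sum_{i=1}^n \rmPi_i$ and consider the linear map $V : \mathcal{H} \to \mathcal{H}^{\oplus n}$ defined by $V\psi := (\rmPi_1\psi, \dots, \rmPi_n\psi)$. Since each $\rmPi_i$ is an orthogonal projection (equivalently, idempotent and self-adjoint, which is what "projective positive semi-definite" buys us), we get $V^\dagger V = \sum_i \rmPi_i^\dagger \rmPi_i = \sum_i \rmPi_i = R$, while on the other hand $V V^\dagger$ is the $n \times n$ block operator matrix $M$ on $\mathcal{H} \otimes \mathbb{C}^n$ whose $(i,j)$-block is $\rmPi_i \rmPi_j$. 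Using $\lVert V^\dagger V\rVert = \lVert V V^\dagger\rVert$, it therefore suffices to bound $\lVert M\rVert$.

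The second step uses orthogonality of the permutation family to split $M$ into $n$ "block generalized-permutation" pieces. The family $\{\pi_i\}_{i \in \intval{1, n}}$ being mutually orthogonal means that for every fixed $j$ the values $\pi_1(j), \dots, \pi_n(j)$ are pairwise distinct, i.e.\ $i \mapsto \pi_i(j)$ is a bijection of $\intval{1, n}$; consequently the sets $\{(j, \pi_i(j)) : j \in \intval{1, n}\}$, for $i = 1, \dots, n$, form a partition of $\intval{1, n} \times \intval{1, n}$. Setting $M_i := \sum_{j=1}^n \rmPi_j \rmPi_{\pi_i(j)} \otimes \ketbra{j}{\pi_i(j)}$, this partition says exactly that $M = \sum_{i=1}^n M_i$, so the triangle inequality gives $\lVert M\rVert \le \sum_{i=1}^n \lVert M_i\rVert$.

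The final step computes $\lVert M_i\rVert$ exactly. The operator $M_i$ has the shape of a block permutation matrix decorated with the blocks $\rmPi_j \rmPi_{\pi_i(j)}$: in block-row $j$ there is a single nonzero block, at block-column $\pi_i(j)$, and since $\pi_i$ is a permutation the same holds column-wise. A direct computation, using $\rmPi_{\pi_i(j)}^2 = \rmPi_{\pi_i(j)}$ and the injectivity of $\pi_i$ to kill the off-diagonal terms, shows that $M_i M_i^\dagger = \bigoplus_{j=1}^n (\rmPi_j \rmPi_{\pi_i(j)})(\rmPi_j \rmPi_{\pi_i(j)})^\dagger$ is block-diagonal, whence $\lVert M_i\rVert^2 = \lVert M_i M_i^\dagger\rVert = \max_{j \in \intval{1,n}} \lVert \rmPi_j \rmPi_{\pi_i(j)}\rVert^2$. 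Chaining the three steps yields $\lVert \sum_i \rmPi_i\rVert = \lVert R\rVert = \lVert M\rVert \le \sum_i \lVert M_i\rVert = \sum_i \max_j \lVert \rmPi_j \rmPi_{\pi_i(j)}\rVert$, which is the claim.

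I do not expect a genuine obstacle here; the only delicate point is the bookkeeping in the block-matrix manipulations — making the $\mathbb{C}^n$-factor indices in $M_i M_i^\dagger$ line up so that the cross terms cancel (this rests on $\pi_i$ being a bijection together with idempotency of the $\rmPi$'s), and being explicit that "set of orthogonal permutations" is used with the Latin-square meaning that makes $\{(j,\pi_i(j))\}_{i,j}$ a genuine partition. Once those are pinned down, every estimate above is immediate.
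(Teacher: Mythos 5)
Your proof is correct. Note that the paper does not prove this lemma at all: it is imported verbatim as Lemma 2 of the cited work of Tomamichel et al., and your argument (stacking the projectors into $V$ so that $\lVert V^\dagger V\rVert = \lVert VV^\dagger\rVert$, splitting the block Gram matrix along the mutually orthogonal permutations, and computing the norm of each block-permutation piece as a maximum over its blocks) is essentially the original proof from that reference, so there is nothing in the paper to diverge from.
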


Let $\left(\{B^{\theta, b}\}_{\theta \in \thetaset_n, b \in \bin}, \{C^{\theta, b}\}_{\theta \in \thetaset_n, b \in \bin}, \rho_{012}\right)$ be a strategy for the extended non-local game.
Using Naimark's dilation theorem, we can assume without loss of generality that the $B^{\theta, b}$ and $C^{\theta, b}$ are all projective.
Let $\rmPi_{\theta, b}$ be the following projector: $\rmPi_{\theta, b} := \sum_{x \in \bin^n} \ketbra{x}{x}^\theta \otimes B^{\theta, b}_{x_{T_b}} \otimes C^{\theta, b}_{x_{T_b}}$.
Then the winning probability of this strategy is
\begin{align}
  p_{win} & = \expectop\limits_{\theta \in \thetaset_n, b \in \bin} \trace\left(\rmPi_{\theta, b}\  \rho_{012}\right) \nonumber\\
  ~       & \leq \expectop\limits_{\theta \in \thetaset_n, b \in \bin} \Vert \rmPi_{\theta, b} \Vert \nonumber\\
  ~       & \leq \frac{1}{2N}\sum_{\substack{1 \leq k \leq N\\\alpha \in \bin}}\ \max_{\theta, b} \big \lVert \rmPi_{\theta, b} \rmPi_{\pi_{k, \alpha}(\theta, b)} \rVert \label{eq:step2}
\end{align}
where the first inequality follows from the definition of the norm and the second from \cref{lem:lem2tfkw13}; and where $\{\pi_{k, \alpha}\}_{k \in \intval{1, N}, \alpha \in \bin}$ is a family of mutually orthogonal permutations.

\paragraph{Step 3: upper-bound of $\lVert \rmPi_{\theta, b} \rmPi_{\theta', 1-b} \rVert$.}
In this part, we show that for all $(\theta, \theta') \in \thetaset_n$ and all $b \in \bin$, we can upper-bound $\lVert \rmPi_{\theta, b} \rmPi_{\theta', 1-b} \rVert$ by a small quantity.

Let $(\theta, \theta') \in \thetaset_n^2$ and $b \in \bin$.
Note $R := \{i \in \intval{1, N}\ :\ \theta_i \neq \theta'_i\}$, $T := \{i \in \intval{1, N}\ :\ \theta_i = b\}$, $T' := \{i \in \intval{1, N}\ :\ \theta'_i = 1-b\}$ and $S := \{i \in R\ :\ \theta_i = b \text{ and } \theta'_i = 1-b\}$.
We define $\pbar$ and $\qbar$ as follows:
\begin{align*}
  \pbar & := \sum \limits_{x_T \in \bin^T} \Hgate^b \ketbra{x_S}{x_S} \Hgate^b \otimes \id_{\bar{S}} \otimes B^{\theta, b}_{x_T} \otimes \id_C\\
  \qbar & := \sum \limits_{x_{T'} \in \bin^{T'}} \Hgate^{1-b} \ketbra{x_S}{x_S} \Hgate^{1-b} \otimes \id_{\bar{S}} \otimes C^{\theta', 1-b}_{x_{T'}} \otimes \id_B
\end{align*}
where $\ketbra{x_S}{x_S}$ denotes the subsystem of $\ketbra{x_T}{x_T}$ whose indices belong to $S$, and $\id_{\bar{S}}$ denotes the rest of the system.

Remark that we have:
\begin{align*}
  \lVert \rmPi_{\theta, b} \rmPi_{\theta', 1-b} \rVert^2 & = \lVert \rmPi_{\theta', 1-b} \rmPi_{\theta, b} \rmPi_{\theta', 1-b} \rVert\\
                                                        & \leq \lVert \rmPi_{\theta', 1-b} \pbar \rmPi_{\theta', 1-b} \rVert\\
                                                        & = \lVert \pbar \rmPi_{\theta', 1-b} \pbar \rVert\\
                                                        & \leq \pbar\qbar\pbar
\end{align*}
where we have the first line because $\rmPi_{\theta, b}$ is a projection, the second because $\rmPi_{\theta, b} \leq \pbar$, the third because $\rmPi_{\theta, b}$ and $\pbar$ are projections and the last because $\rmPi_{\theta', 1-b} \leq \qbar$.

Consider now the quantity $\pbar \qbar \pbar$.
We compute the following upper-bound for $\pbar \qbar \pbar$:

\begin{align*}
  \pbar\qbar\pbar & = \sum \limits_{\substack{x_T, z_T \in \bin^T\\y_{T'} \in \bin^{T'}}} \Hgate^b \ketbra{x_S}{x_S} \Hgate^b \Hgate^{1-b} \ketbra{y_S}{y_S} \Hgate^{1-b} \Hgate^b \ketbra{z_S}{z_S} \Hgate^b \otimes \id_{\bar{S}} \otimes B^{\theta, b}_{x_T} \  B^{\theta, b}_{z_T} \otimes C^{\theta', 1-b}_{y_{T'}}\\
  ~                     & = \sum \limits_{\substack{x_T \in \bin^T\\y_{T'} \in \bin^{T'}}} \Hgate^b \ketbra{x_S}{x_S} \Hgate^b \Hgate^{1-b} \ketbra{y_S}{y_S} \Hgate^{1-b} \Hgate^b \ketbra{x_S}{x_S} \Hgate^b \otimes \id_{\bar{S}} \otimes B^{\theta, b}_{x_T} \otimes C^{\theta', 1-b}_{y_{T'}}\\
  ~                     & = 2^{-\lvert S \rvert}\sum \limits_{\substack{x_T \in \bin^T\\y_{T'} \in \bin^{T'}}} \Hgate^b \ketbra{x_S}{x_S} \Hgate^b \otimes \id_{\bar{S}} \otimes B^{\theta, b}_{x_T} \otimes C^{\theta', 1-b}_{y_{T'}}\\
  ~                     & = 2^{-\lvert S \rvert}\sum \limits_{x_T \in \bin^T} \Hgate^b \ketbra{x_S}{x_S} \Hgate^b \otimes \id_{\bar{S}} \otimes B^{\theta, b}_{x_T} \otimes \id_C
\end{align*}
where the first equality comes from $B^{\theta, b}_{x_T} \  B^{\theta, b}_{z_T} = B^{\theta, b}_{x_T}$ if $x_T = z_T$ and $0$ otherwise; the second comes from $\bra{x_S}\Hgate^b \Hgate^{1-b}\ketbra{y_S}{y_S}\Hgate^{1-b} \Hgate^{b}\ket{x_S} = \lvert \bra{x_S} \Hgate \ket{y_S} \rvert^2 = 2^{-\lvert S \rvert}$ for all $x_T, y_{T'}$ and the third from $\sum_{y_{T'}} C^{\theta', 1-b}_{y_{T'}} = \id_C$.
Notice that we can assume without loss of generality that $\lvert S \rvert$ is larger than $\lvert R \rvert / 2$: if it is not the case, we just swap the roles of $\theta$ and $\theta'$.
Thus, by linearity and from $\sum_{x_{T}} B^{\theta, b}_{x_{T}} = \id_B$, it comes
$
  \lVert \pbar\qbar\pbar \rVert \leq 2^{-\lvert S \rvert} \leq 2^{-\lvert R \rvert / 2}
$
hence
\begin{equation}
  \label{eq:step3-0}
  \lVert \rmPi_{\theta, b} \rmPi_{\theta', 1-b} \rVert \leq 2^{-\lvert R \rvert / 4}
\end{equation}

\begin{remark}
  Remark that, when considering $\lVert \rmPi_{\theta, b} \rmPi_{\theta', b} \rVert$ instead, we have $S = \emptyset$.
  Thus, the reasoning above yields the trivial upper-bound
  \begin{equation}
    \label{eq:step3-1}
    \lVert \rmPi_{\theta, b} \rmPi_{\theta', b} \rVert \leq 1
  \end{equation}
\end{remark}

\paragraph{Step 4: finding the permutation family.}
In this part, we construct a family of mutually orthogonal permutations $\{\pi_{k, \alpha}\}_{k \in \intval{1, N}, \alpha \in \bin}$ such for all $k \in \intval{1, N}$, $\pi_{k, 0}$ ``flips'' the last input's bit and $\pi_{k, 1}$ leaves it unchanged.

We use the following lemma, proven in \cite{Culf2022monogamyof}.
\begin{lemma}[Lemma 3.4 of \cite{Culf2022monogamyof}]
  \label{lem:cv21-3.4}
  Let $n$ be an even integer, $\thetaset_n := \{\theta \in \bin^n\ :\ \lvert \theta \rvert = n / 2\}$ and $N = {n \choose{n/2}}$.
  Then there is a family of $N$ mutually orthogonal permutations $\{\tilde{\pi}_k\}_{k \in \intval{1, N}}$ of $\thetaset_n$ such that the following holds.
  For each $i \in \intval{1, n/2}$, there are exactly ${n/2 \choose i}^2$ permutations $\tilde{\pi}_{k}$ such that the number of positions at which $\theta$ and $\tilde{\pi}_k(\theta)$ are both $1$ is $n/2 - i$.
\end{lemma}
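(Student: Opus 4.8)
The plan is to prove the lemma by an explicit combinatorial construction, viewing $\thetaset_n$ as a metric space under Hamming distance; write $m := n/2$ throughout and let $\lvert\cdot\rvert$ denote Hamming weight. The starting point is a reformulation of the displacement condition: for $\theta,\theta' \in \thetaset_n$, the number of coordinates at which $\theta$ and $\theta'$ are both equal to $1$ is $\frac{1}{2}\bigl(\lvert\theta\rvert + \lvert\theta'\rvert - \lvert\theta\oplus\theta'\rvert\bigr) = m - \frac{1}{2}\lvert\theta\oplus\theta'\rvert$. Hence a permutation $\tilde{\pi}$ of $\thetaset_n$ has the property that ``the number of positions at which $\theta$ and $\tilde{\pi}(\theta)$ are both $1$ equals $m-i$'' for every $\theta$ if and only if $\lvert\theta\oplus\tilde{\pi}(\theta)\rvert = 2i$ for every $\theta$; equivalently, the graph of $\tilde{\pi}$, seen as a set of ordered pairs, is contained in the ``Hamming-distance-$2i$'' relation on $\thetaset_n$.

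Next, for each $i \in \intval{0, m}$ I would introduce the directed graph $D_i$ on vertex set $\thetaset_n$ having an arc $\theta \to \theta'$ whenever $\lvert\theta\oplus\theta'\rvert = 2i$. Three facts are immediate: $D_0$ is exactly one loop at each vertex; every ordered pair of vertices lies in $D_i$ for exactly one $i$ (as $\lvert\theta\oplus\theta'\rvert$ is always even and at most $2m$); and $D_i$ is regular with in-degree and out-degree both equal to ${m \choose i}^2$, since a weight-$m$ vector at Hamming distance $2i$ from $\theta$ is obtained by flipping $i$ of the $m$ ones of $\theta$ to $0$ and $i$ of the $m$ zeros of $\theta$ to $1$. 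Then I would invoke König's edge-coloring theorem in its directed form: any directed graph in which every vertex has in-degree equal to out-degree equal to $d$ decomposes into $d$ arc-disjoint sets of arcs, each of which is the graph of a permutation of the vertex set. Concretely, split each vertex $v$ into an outgoing copy $v^{+}$ and an incoming copy $v^{-}$, turning $D_i$ into a ${m \choose i}^2$-regular bipartite graph; a proper edge-coloring with ${m \choose i}^2$ colors partitions it into perfect matchings of $\{v^{+}\}$ onto $\{v^{-}\}$, and each such matching is the graph of a permutation of $\thetaset_n$ all of whose arcs lie in $D_i$.

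Applying this to $D_i$ for every $i \in \intval{0, m}$ yields, for each $i$, exactly ${m \choose i}^2$ permutations of $\thetaset_n$, each mapping every $\theta$ to a vector at Hamming distance exactly $2i$ from it; in particular the single permutation produced from $D_0$ is the identity. Collecting all of them gives a family $\{\tilde{\pi}_k\}$ of total size $\sum_{i=0}^{m} {m \choose i}^2 = {2m \choose m} = N$ by Vandermonde's identity, and for each $i \in \intval{1, m}$ exactly ${m \choose i}^2$ of these permutations satisfy the ``both-$1$ count equals $m-i$'' property of the statement. It remains to verify mutual orthogonality, in two cases. Two permutations obtained from the same $D_i$ come from disjoint perfect matchings of the associated bipartite graph, so the outgoing arc at any given vertex receives different colors and hence points to two distinct vertices. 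Two permutations obtained from $D_i$ and $D_{i'}$ with $i \neq i'$ send any fixed $\theta$ to vectors at Hamming distances $2i$ and $2i'$ from $\theta$ respectively, which are distinct. This yields the desired family of $N$ mutually orthogonal permutations and completes the proof.

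I do not expect a genuine obstacle; once the reformulation is in place the construction is short. The points that deserve care are the degree count ${m \choose i}^2$ — which, via Vandermonde's identity, is precisely what makes the assembled family have size exactly $N$ with the right multiplicities — the passage from the digraph $D_i$ to a regular bipartite graph before invoking König's theorem (in particular, the loop in $D_0$ forcing the identity permutation), and the observation that no explicit algebraic formula for the $\tilde{\pi}_k$ is required, existence being all that later applications of the lemma use. If one insists on naming the crucial insight, it is that the meaningful reading of the statement is the uniform-in-$\theta$ one, which is equivalent to a one-factorization of each Hamming-distance-class digraph — and König's theorem supplies such one-factorizations for free.
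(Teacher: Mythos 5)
The paper does not prove this lemma itself; it imports it verbatim as Lemma~3.4 of Culf--Vidick, so there is no in-paper argument to compare against. Your proof is correct, and it is essentially the standard one from the cited source: translate the ``both-$1$ count equals $n/2-i$'' condition into the Hamming-distance-$2i$ relation on $\thetaset_n$, note that each such relation is $\binom{n/2}{i}^2$-regular, decompose it into perfect matchings via K\H{o}nig/Hall, and sum with Vandermonde to get exactly $N$ mutually orthogonal permutations.
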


We prove the following corollary.
\begin{corollary}
  \label{coro:permutations}
  Let $n$ be an even integer, $\thetaset_n := \{\theta \in \bin^n\ :\ \lvert \theta \rvert = n / 2\}$ and $N = {n \choose{n/2}}$.
  Then there is a family of $2N$ mutually orthogonal permutations $\{\pi_{k, \alpha}\}_{k \in \intval{1, N}, \alpha \in \bin}$ of $\thetaset_n \times \bin$ such that the two following properties hold.
  \begin{itemize}
    \item For each $i \in \intval{1, n/2}$, there are exactly ${n/2 \choose i}^2$ permutations $\pi_{k, 0}$ such that the number of positions at which $\theta$ and $\theta'$ are both $1$ is $n/2 - i$ (\ie{} $\theta$ and $\theta'$ differ in $2i$ positions).
    \item If $\alpha = 0$, then $b' = 1 - b$. Otherwise, $b' = b$.
  \end{itemize}
  where we use the notation $(\theta' \Vert b') := \pi_{k, \alpha}(\theta \Vert b)$.
\end{corollary}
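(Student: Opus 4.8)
The plan is to obtain $\{\pi_{k,\alpha}\}$ as a lift of the family $\{\tilde{\pi}_k\}_{k\in\intval{1,N}}$ supplied by \cref{lem:cv21-3.4}: the permutations $\tilde{\pi}_k$ act on the $\thetaset_n$-component, while the new index $\alpha\in\bin$ only governs whether the appended bit is flipped. Concretely, for $k\in\intval{1,N}$, $\theta\in\thetaset_n$ and $b\in\bin$, I would set
\[
  \pi_{k,1}(\theta\Vert b)\coloneqq \tilde{\pi}_k(\theta)\Vert b,\qquad
  \pi_{k,0}(\theta\Vert b)\coloneqq \tilde{\pi}_k(\theta)\Vert(1-b).
\]
Each $\pi_{k,\alpha}$ is a bijection of $\thetaset_n\times\bin$, being the product of the bijection $\tilde{\pi}_k$ on the first coordinate with either the identity (for $\alpha=1$) or the bit-flip (for $\alpha=0$) on the second; hence each $\pi_{k,\alpha}$ is indeed a permutation, and the second bullet ($b'=1-b$ iff $\alpha=0$, $b'=b$ otherwise) holds by construction.

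For the first bullet, I would observe that $\pi_{k,0}$ acts on the $\thetaset_n$-coordinate exactly as $\tilde{\pi}_k$, so writing $(\theta'\Vert b')=\pi_{k,0}(\theta\Vert b)$ we have $\theta'=\tilde{\pi}_k(\theta)$, independently of $b$. \cref{lem:cv21-3.4} then gives, for each $i\in\intval{1,n/2}$, exactly ${n/2 \choose i}^2$ indices $k$ for which $\theta$ and $\tilde{\pi}_k(\theta)$ are both $1$ in $n/2-i$ positions, hence exactly ${n/2 \choose i}^2$ permutations $\pi_{k,0}$ with that property. Finally, since $\theta,\theta'\in\thetaset_n$ both have Hamming weight $n/2$, agreeing in $n/2-i$ of the positions where both are $1$ is equivalent to differing in $2i$ positions, which yields the parenthetical reformulation.

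The only genuine verification is \emph{mutual orthogonality}, i.e.\ that $\pi_{k,\alpha}(\theta\Vert b)\neq\pi_{k',\alpha'}(\theta\Vert b)$ whenever $(k,\alpha)\neq(k',\alpha')$, for every $(\theta,b)$. I would split into two cases. If $\alpha\neq\alpha'$, the last bits of the two outputs are $b$ and $1-b$ in some order, so the outputs differ regardless of $k$ and $k'$. If $\alpha=\alpha'$ but $k\neq k'$, the last bits agree, but the $\thetaset_n$-parts are $\tilde{\pi}_k(\theta)$ and $\tilde{\pi}_{k'}(\theta)$, which differ because $\{\tilde{\pi}_k\}_{k}$ is mutually orthogonal by \cref{lem:cv21-3.4} and $k\neq k'$. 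In both cases the outputs differ, so the $2N$ permutations $\{\pi_{k,\alpha}\}$ are mutually orthogonal. I do not anticipate a real obstacle: the corollary is essentially a bookkeeping lift of \cref{lem:cv21-3.4}, and the only point warranting a line of care is the subcase $\alpha=\alpha'$, $k\neq k'$, where one must invoke orthogonality of the underlying $\tilde{\pi}_k$ rather than any property of the $\thetaset_n$-coordinate itself.
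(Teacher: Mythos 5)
Your construction is identical to the paper's (lift $\tilde{\pi}_k$ from \cref{lem:cv21-3.4} by appending either the identity or the bit-flip on the last coordinate, keyed by $\alpha$), and the mutual-orthogonality argument via the case split $\alpha\neq\alpha'$ vs.\ $\alpha=\alpha'$, $k\neq k'$ matches the paper's contrapositive argument. Your write-up is correct and essentially the same proof, just slightly more detailed on the counting in the first bullet.
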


\begin{proof}
  Let $\{\tilde{\pi}_k\}_{k \in \intval{1, N}}$ be a family of orthogonal permutations promised in \cref{lem:cv21-3.4}.
  Define the family $\{\pi_{k, \alpha}\}_{k \in \intval{1, N}, \alpha \in \bin}$ as follows.
  For all $k \in \intval{1, N}$:
  \begin{align*}
    \pi_{k, 0}(\theta \vert\vert b) & = \tilde{\pi}_k(\theta) \vert\vert (1 - b) \\
    \pi_{k, 1}(\theta \vert\vert b) & = \tilde{\pi}_k(\theta) \vert\vert b
  \end{align*}

  The two properties follow directly by construction.
  It remains to prove that these $2N$ permutations are mutually orthogonal.
  Assume $\pi_{k, \alpha}(\theta) = \pi_{k', \alpha'}(\theta)$.
  Then we have $\alpha = \alpha'$, and $\tilde{\pi}_k(\theta) = \tilde{\pi}_{k'}(\theta)$, hence $k = k'$ because $\{\tilde{\pi}_k\}_k$ is a family of orthogonal permutations.
\end{proof}

\paragraph{Concluding the proof.}
We make use of the following lemma from \cite{Culf2022monogamyof}.

\begin{lemma}[Lemma 3.6 of \cite{Culf2022monogamyof}]
  Let $n \geq 2$ an integer, and note $N = {n \choose n/2}$.
  Then we have
  $$
    \frac{1}{N}\sum_{i=0}^{n/2} {n/2 \choose i}^2 2^{-i/2} \leq \sqrt{e} \left(\cos\frac{\pi}{8}\right)^n
  $$
\end{lemma}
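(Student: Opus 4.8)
The plan is to recast the left-hand side as the probability generating function of a hypergeometric random variable evaluated at $2^{-1/2}$, and then to compare sampling without replacement with sampling with replacement. Write $m := n/2$, so that $N = \binom{2m}{m}$, and recall the Vandermonde identity $\sum_{i=0}^{m}\binom{m}{i}^{2} = \binom{2m}{m}$. Hence the numbers $p_{i} := \binom{m}{i}^{2}/N$ sum to $1$; moreover, since $\binom{m}{i}^{2} = \binom{m}{i}\binom{2m-m}{m-i}$, the sequence $(p_{i})_{i}$ is exactly the law of $I := \lvert S \cap T\rvert$, where $T \subseteq \{1,\dots,2m\}$ is a fixed subset of size $m$ and $S$ is a uniformly random subset of size $m$; that is, $I$ is hypergeometric with parameters $(2m, m, m)$. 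Consequently the quantity to be bounded is precisely $\mathbb{E}\!\left[2^{-I/2}\right]$.

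First I would write $I = \sum_{j=1}^{m} X_{j}$, where $(X_{1},\dots,X_{m})$ is obtained by drawing $m$ items \emph{without replacement} from a population consisting of $m$ ones and $m$ zeros. The map $t \mapsto 2^{-t/2} = e^{-(\ln 2)\,t/2}$ is continuous and convex, so Hoeffding's comparison inequality between sampling without and with replacement (Hoeffding, 1963) gives $\mathbb{E}\!\left[2^{-I/2}\right] \le \mathbb{E}\!\left[2^{-B/2}\right]$, where $B \sim \mathrm{Binomial}(m, 1/2)$ is the associated with-replacement sum. A one-line computation using independence and the half-angle identity then yields
\[
  \mathbb{E}\!\left[2^{-B/2}\right] = \left(\tfrac{1}{2} + \tfrac{1}{2}\cdot 2^{-1/2}\right)^{m} = \left(\frac{1+\cos\tfrac{\pi}{4}}{2}\right)^{m} = \left(\cos\tfrac{\pi}{8}\right)^{2m} = \left(\cos\tfrac{\pi}{8}\right)^{n}.
\]
Since $\left(\cos\tfrac{\pi}{8}\right)^{n} \le \sqrt{e}\,\left(\cos\tfrac{\pi}{8}\right)^{n}$, the claim follows, in fact with the factor $\sqrt{e}$ to spare, which is all that is needed downstream (the parallel-repetition step only uses that this quantity is at most $c^{\,n}$ for a constant $c<1$).

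I expect the only step requiring genuine care to be the appeal to Hoeffding's without-replacement comparison; everything else is bookkeeping. If one prefers a fully self-contained argument, an alternative is the generating-function identity $\sum_{i=0}^{m}\binom{m}{i}^{2}x^{i} = [y^{m}]\big((1+y)(1+xy)\big)^{m}$ together with the Cauchy coefficient estimate $[y^{m}]g(y)^{m} \le g(r)^{m}r^{-m}$ valid for any $r>0$; optimizing at $x = 2^{-1/2}$ gives $r = 2^{1/4}$ and hence $\sum_{i}\binom{m}{i}^{2}2^{-i/2} \le \big(1 + 2^{3/4} + 2^{-1/2}\big)^{m}$, after which one divides by $N=\binom{2m}{m}$ and absorbs the resulting Stirling factor into the constant $\sqrt{e}$, noting that $1 + 2^{3/4} + 2^{-1/2} < 2+\sqrt{2} = 4\cos^{2}\tfrac{\pi}{8}$. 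I would present the Hoeffding argument as the main proof, since it is shorter and yields a strictly better constant, and relegate the generating-function route to a remark.
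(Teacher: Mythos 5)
Your main argument is correct, and it necessarily differs from the paper's treatment because the paper gives no proof at all: the statement is imported verbatim as Lemma~3.6 of \cite{Culf2022monogamyof}. Your route is clean and self-contained: by Vandermonde, $\binom{n/2}{i}^2/N$ is exactly the hypergeometric law of $\lvert S\cap T\rvert$, the map $t\mapsto 2^{-t/2}$ is continuous and convex, so Hoeffding's comparison between sampling without and with replacement applies, and the binomial computation $\mathbb{E}\bigl[2^{-B/2}\bigr]=\bigl(\frac{1+2^{-1/2}}{2}\bigr)^{n/2}=\bigl(\cos\frac{\pi}{8}\bigr)^{n}$ is right (with equality already at $n=2$, so the comparison is sharp there). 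You in fact prove the stronger statement with the $\sqrt{e}$ removed, which is more than the downstream parallel-repetition step needs. One caveat on your closing remark: the generating-function alternative does not close as written. Dividing $\bigl(1+2^{3/4}+2^{-1/2}\bigr)^{m}$ by $N=\binom{2m}{m}\sim 4^{m}/\sqrt{\pi m}$ leaves a factor $\sqrt{\pi m}$ that grows with $m$ and cannot simply be ``absorbed into $\sqrt{e}$''; it must be traded against the strict gap $1+2^{3/4}+2^{-1/2}<4\cos^{2}\frac{\pi}{8}$, and the resulting inequality $\sqrt{\pi m}\le \sqrt{e}\,\bigl(4\cos^{2}\frac{\pi}{8}/(1+2^{3/4}+2^{-1/2})\bigr)^{m}$ fails for all $m$ up to roughly $400$, so that route would require a separate finite-range check. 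Present the Hoeffding argument as the proof and either repair or drop the remark.
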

The rest of the proof follows easily.
We first rewrite \cref{eq:step2} as
$$
  p_{win} \leq \frac{1}{2N}\sum_{k=1}^{N}\ \max_{\theta, b} \big \lVert \rmPi_{\theta, b} \rmPi_{\pi_{k, 1}(\theta, b)} \rVert + \frac{1}{2N}\sum_{k=1}^ {N}\ \max_{\theta, b} \big \lVert \rmPi_{\theta, b} \rmPi_{\pi_{k, 0}(\theta, b)} \rVert
$$
Then, by plugging the permutation's family of \cref{coro:permutations}, and using the upper-bounds proved in \cref{eq:step3-0} and \cref{eq:step3-1}, it comes
\begin{align*}
  p_{win} & \leq \frac{1}{2} + \frac{1}{2N}\sum_{i = 1}^{n/2} 2^{-i/2}\\
  ~       & \leq \frac{1}{2} + \frac{\sqrt{e}}{2} \left(\cos\frac{\pi}{8}\right)^n.
\end{align*}

\subsection{Computational Version}

We provide below a computational version of the monogamy-of-entanglement with identical basis.
The only difference is that the adversaries are given access to obfuscated membership programs for the coset space and its dual.
This game is still hard to win with probability significantly greater than $1/2$ if we make the assumption that the adversaries are polynomially bounded.
The proof of this statement follows directly from the proof of hardness of the computational version of the regular monogamy-of-entanglement game \cite{C:CLLZ21}.

\begin{definition}[Computational Monogamy-of-Entanglement Game with Identical Basis (Coset Version)]
  \label{def:new-cpt-moe-coset}
  This game is between a challenger and a triple of adversaries $(\adv, \bdv, \cdv)$ - where $\bdv$ and $\cdv$ are not communicating, and is parametrized by a security parameter $\secpar$.

  \begin{itemize}
    \item The challenger samples a subspace $A \gets \bin^{\secpar \times \frac{\secpar}{2}}$ and two vectors $(s, s') \gets \FF_2^n \times \FF_2^n$.
    Then the challenger prepares the coset state $\ket{A_{s, s'}}$ as well as two obfuscated membership programs $\obfmbr_{A+s} := \iO(A + s)$ and $\obfmbr_{A^\perp + s'} := \iO(A^\perp + s')$ and sends $\left(\ket{A_{s, s'}}, \obfmbr_{A+s}, \obfmbr_{A^\perp+s'}\right)$ to $\adv$.
    \item $\adv$ prepares a bipartite quantum state $\sigma_{12}$, then sends $\sigma_1$ to $\bdv$ and $\sigma_2$ to $\cdv$.
    \item The challenger samples $b \gets \bin$, then sends $(A, b)$ to both $\bdv$ and $\cdv$.
    \item $\bdv$ returns $u_1$ and $\cdv$ returns $u_2$.
  \end{itemize}
  For $i \in \{1, 2\}$, we say that $\adv_i$ makes a correct guess if $(b = 0\ \land\ u'_i \in A + s)$ or if $(b = 1\ \land\ u'_i \in A^\perp + s')$.
  We say that $(\adv, \bdv, \cdv)$ win the game if both $\bdv$ and $\cdv$ makes a correct guess.
  For any triple of adversaries $(\adv, \bdv, \cdv)$ and any security parameter $\secpar \in \NN$ for this game, we note $\mathsf{MoE}_{coset (comp)}(\secparam, \adv, \bdv, \cdv)$ the random variable indicating whether $(\adv, \bdv, \cdv)$ win the game or not.
\end{definition}

\begin{theorem}
  \label{th:new_moe_coset_comp}
  There exists a negligible function $\negl[\cdot]$ such that, for any triple of $\qpt$ algorithms $(\adv, \bdv, \cdv)$ and any security parameter $\secpar \in \NN$, $\prob{\mathsf{MoE}_{coset (comp)}(\secparam, \adv, \bdv, \cdv) = 1} \leq 1/2 + \negl$.
\end{theorem}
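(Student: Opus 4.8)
The plan is to follow the template of the proof of hardness of the computational monogamy-of-entanglement game of~\cite{C:CLLZ21}: the only ingredient beyond the information-theoretic \cref{th:new_moe_coset} is the pair of obfuscated membership programs $\obfmbr_{A+s} = \iO(A+s)$ and $\obfmbr_{A^\perp+s'} = \iO(A^\perp+s')$ handed to $\adv$, and we remove them through the subspace-hiding property of $\iO$ established in~\cite{C:CLLZ21}. Recall that subspace hiding states that, for a (possibly affine) subspace $V \subseteq \FF_2^\secpar$ of dimension $d$ and any $d < d' < \secpar$, the obfuscation by $\iO$ of the (suitably padded) membership program of $V$ is computationally indistinguishable from $\iO$ of the membership program of a uniformly random superspace $W \supseteq V$ of dimension $d'$. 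We take $d' = 3\secpar/4$, assuming for simplicity that $4 \mid \secpar$; the general case follows by a standard padding argument.

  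First I would pass to a hybrid game $G'$ in which the challenger, who knows $(A, s, s')$, replaces $\obfmbr_{A+s}$ by $\iO$ of the membership program of $T + s$ for a uniformly random subspace $T \supseteq A$ with $\dim T = 3\secpar/4$, and likewise replaces $\obfmbr_{A^\perp+s'}$ by $\iO$ of the membership program of $T' + s'$ for a uniformly random subspace $T' \supseteq A^\perp$ with $\dim T' = 3\secpar/4$. Two applications of subspace hiding and a triangle inequality show that the winning probability of any $\qpt$ triple $(\adv, \bdv, \cdv)$ in $G'$ differs from its winning probability in the original game (\cref{def:new-cpt-moe-coset}) by at most a negligible amount. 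Note that the challenge bit $b$ is sampled only after $\adv$ outputs $\sigma_{12}$ and is independent of everything else, so it is simply forwarded by this reduction and plays no role in this step.

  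Next I would show that no triple -- not even a computationally unbounded one -- wins $G'$ with probability more than $1/2 + \negl$, by reducing to an instance of \cref{th:new_moe_coset} in dimension $\secpar/2$. Since we are only after an upper bound, we may as well reveal $T + s$ and $T' + s'$ to $\adv$ in the clear, which only strengthens the adversary. The key point is the quotient structure of cosets: from $A \subseteq T$ and $A^\perp \subseteq T'$ one gets automatically $(T')^\perp \subseteq A \subseteq T$ and $T^\perp \subseteq A^\perp \subseteq T'$, with $\dim (T')^\perp = \secpar/4$; hence, conditioned on the cosets $T + s$ and $T' + s'$, the residual uncertainty is an $(\secpar/4)$-dimensional subspace $\bar A := A/(T')^\perp$ of the $(\secpar/2)$-dimensional quotient $T/(T')^\perp$, together with the induced shifts $\bar s, \bar s'$, and the coset state $\ket{A_{s, s'}}$ decomposes, under the corresponding isometry, as a fixed state determined by the revealed data tensored with a fresh coset state $\ket{\bar A_{\bar s, \bar s'}}$ in dimension $\secpar/2$ (here one uses that the bilinear form on $\FF_2^\secpar$ descends to a perfect pairing between $T/(T')^\perp$ and $T'/T^\perp$, so that $\bar A^\perp$ is identified with $A^\perp/T^\perp$). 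Sampling the instance as $(T, T')$ first (which forces $T^\perp \subseteq T'$) and then $A$ uniform in the sandwich, one checks that $\bar A, \bar s, \bar s'$ are uniform and independent of $(T, T')$, and that a winning answer for $G'$ projects to a winning answer for the dimension-$(\secpar/2)$ monogamy game with identical basis -- for the branch $b = 0$ via the coset $\bar A + \bar s$, for $b = 1$ via the dual $\bar A^\perp + \bar s'$ -- simultaneously for $\bdv$ and $\cdv$. Therefore any strategy for $G'$ yields a strategy for the game of \cref{th:new_moe_coset} with security parameter $\secpar/2$ with the same winning probability, so the bound $1/2 + \negl$ follows; combining this with the previous step proves the theorem.

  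The main obstacle will be this last step, and in particular verifying that the coset-quotient reduction is fully compatible with the identical-basis winning condition: one has to check that \emph{both} branches $b \in \bin$ of the winning predicate translate correctly and that the answers of $\bdv$ and $\cdv$ lift back and forth without loss, and one has to carry out the routine-but-delicate bookkeeping showing that sampling $(T, T')$ and then $A$ in the sandwich induces the correct marginal distribution on $(A, T, T')$.
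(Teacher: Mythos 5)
Your sketch is correct and takes exactly the route the paper intends: the paper gives no detailed proof of \cref{th:new_moe_coset_comp}, stating only that it ``follows directly from the proof of hardness of the computational version of the regular monogamy-of-entanglement game''~\cite{C:CLLZ21}, and your two steps --- subspace hiding to replace the membership programs by those of random superspaces $T \supseteq A$ and $T' \supseteq A^\perp$ of dimension $3\secpar/4$, followed by the sandwich/quotient reduction to the information-theoretic \cref{th:new_moe_coset} in dimension $\secpar/2$ --- are precisely that argument. The one genuinely new ingredient relative to~\cite{C:CLLZ21}, namely that the challenge bit $b$ is sampled after the split and that both branches of the identical-basis winning predicate must project correctly onto $\bar A + \bar s$ and $\bar A^\perp + \bar s'$ in the quotient, is identified and handled correctly in your sketch.
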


\subsection{Parallel Repetition of the Game}
\label{sec:moe-parallel}
For our proof of anti-piracy of copy-protection, we actually need a parallel version of this game, where the challenger samples $\kappa \in \NN$ independent cosets and an independent basis choice for each coset; and the adversaries are supposed to return a vector in the correct space for all the cosets to win the game.
We show that the winning probability of this game is negligible.

\begin{definition}[$\kappa$-Parallel Computational Monogamy-of-Entanglement Game with Identical Basis (Coset Version)]
  \label{def:new-moe-parallel}
  This game is between a challenger and a triple of adversaries $(\adv, \bdv, \cdv)$ - where $\bdv$ and $\cdv$ are not communicating, and is parametrized by a security parameter $\secpar$.

  \begin{itemize}
    \item The challenger samples $\kappa$ subspaces $\{A_i\}_{i \in \intval{1, \kappa}}$ and $\kappa$ pairs of vectors $\{(s_i, s'_i)\}_{i \in \intval{1, \kappa}}$ where $A_i \gets \bin^{\secpar \times \frac{\secpar}{2}}$ and $(s_i, s'_i) \gets \FF_2^n \times \FF_2^n$ for all $i \in \intval{1, \kappa}$.
    Then the challenger prepares the coset states $\{\ket{A_{i, s_i, s'_i}}\}_{i \in \intval{1, \kappa}}$ as well as the associated obfuscated membership programs $\obfmbr_{A_i+s_i} := \iO(A_i + s_i)$ and $\obfmbr_{A^\perp_i + s'_i} := \iO(A^\perp_i + s'_i)$ for $i \in \intval{1, \kappa}$; and sends $\left(\{\ket{A_{i, s_i, s'_i}}\}_{i \in \intval{1, \kappa}}, \{\obfmbr_{A_i+s_i}, \obfmbr_{A^\perp_i+s'_i}\}_{i \in \intval{1, \kappa}}\right)$ to $\adv$.
    \item $\adv$ prepares a bipartite quantum state $\sigma_{12}$, then sends $\sigma_1$ to $\bdv$ and $\sigma_2$ to $\cdv$.
    \item The challenger samples $r \gets \bin^\kappa$, then sends $\{A_i\}_{i \in \intval{1, \kappa}}$ and $r$ to both $\bdv$ and $\cdv$.
    \item $\bdv$ returns $\kappa$ vectors $\{u_i\}_{i \in \intval{1, \kappa}}$ and $\cdv$ returns $\kappa$ vectors $\{u'_i\}_{i \in \intval{1, \kappa}}$.
  \end{itemize}
  We say that $\bdv$ makes a correct guess if $(r_i = 0\ \land\ u_i \in A_i + s_i)$ or if $(r_i = 1\ \land\ u_i \in A^\perp_i + s'_i)$ for all $i \in \intval{1, \kappa}$.
  Similarly, we say that $\cdv$ makes a correct guess if $(r_i = 0\ \land\ u'_i \in A_i + s_i)$ or if $(r_i = 1\ \land\ u'_i \in A^\perp_i + s'_i)$ for all $i \in \intval{1, \kappa}$.
  We say that $(\adv, \bdv, \cdv)$ win the game if both $\bdv$ and $\cdv$ makes a correct guess.
  For any triple of adversaries $(\adv, \bdv, \cdv)$ and any security parameter $\secpar \in \NN$ for this game, we note $\kappa-\mathsf{MoE}_{coset (comp)}(\secparam, \adv, \bdv, \cdv)$ the random variable indicating whether $(\adv, \bdv, \cdv)$ win the game or not.
\end{definition}

\begin{theorem}
  \label{th:new-moe-parallel}
  There exists a negligible function $\negl[\cdot]$ such that, for any triple of $\qpt$ algorithms $(\adv, \bdv, \cdv)$ and any security parameter $\secpar \in \NN$, $\prob{\kappa-\mathsf{MoE}_{coset (comp)}(\secparam, \adv, \bdv, \cdv) = 1} \leq \negl$.
\end{theorem}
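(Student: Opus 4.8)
\emph{Proof plan.} The plan is to reduce the claim, in two moves that mirror earlier parts of the paper, to a purely information-theoretic statement about a $\kappa$-fold BB84 game, and then to analyse that game by the norm-bounding technique of \cref{th:new_moe_bb84}, but now globally over all $\kappa$ blocks. First, exactly as in the reduction of \cref{th:new_moe_coset} to \cref{th:new_moe_bb84} (which applies~\cite{Culf2022monogamyof} block by block), the $\kappa$-parallel coset game reduces to its $\kappa$-parallel BB84 analogue: the challenger prepares $\kappa$ independent blocks $\bigotimes_{\ell}\Hgate^{\theta^{(j)}_{\ell}}\ket{x^{(j)}_{\ell}}$ for $j\in\intval{1,\kappa}$, samples $r\gets\bin^{\kappa}$ \emph{after} $\adv$ has split the state, reveals $(\theta^{(1)},\dots,\theta^{(\kappa)},r)$, and asks each of $\bdv,\cdv$ to output, for every block $j$, the substring of $x^{(j)}$ on the positions $\ell$ with $\theta^{(j)}_{\ell}=r_{j}$. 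Second, the obfuscated membership programs $\obfmbr_{A_{i}+s_{i}},\obfmbr_{A^{\perp}_{i}+s'_{i}}$ given to $\adv$ are stripped off exactly as in \cref{th:new_moe_coset_comp} and~\cite{C:CLLZ21}: using sub-exponential security of $\iO$ together with complexity leveraging, the winning probability of the computational game exceeds that of the information-theoretic game by at most a negligible additive term --- this step needs the information-theoretic bound below to be sub-exponentially small, which holds provided $\kappa=\kappa(\secpar)$ grows at least polynomially with $\secpar$ (in all our applications $\kappa\ge\secpar$).

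For the information-theoretic $\kappa$-parallel BB84 game I would follow Steps~1--4 of the proof of \cref{th:new_moe_bb84} verbatim, at the level of all $\kappa$ blocks at once. Passing to the associated extended non-local game (as in \cref{lem:ext-non-local}, now with a $\kappa\secpar$-qubit register and question $(\vec{\theta},r)\in\thetaset_{n}^{\kappa}\times\bin^{\kappa}$, where $n=\secpar$ and $\vec{\theta}=(\theta^{(1)},\dots,\theta^{(\kappa)})$) and writing $\rmPi_{\vec{\theta},r}:=\sum_{x}\ketbra{x}{x}^{\vec{\theta}}\otimes B^{\vec{\theta},r}_{x_{T}}\otimes C^{\vec{\theta},r}_{x_{T}}$ with $T=T(\vec{\theta},r)$ the set of positions $(j,\ell)$ satisfying $\theta^{(j)}_{\ell}=r_{j}$, the winning probability is at most $\expectop_{\vec{\theta},r}\lVert\rmPi_{\vec{\theta},r}\rVert$. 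The crucial choice is the permutation family: I would take the $\kappa$-fold product $\pi_{\vec{k},\vec{\alpha}}:=\pi_{k_{1},\alpha_{1}}\times\cdots\times\pi_{k_{\kappa},\alpha_{\kappa}}$ of the single-block family of \cref{coro:permutations}, giving $(2N)^{\kappa}$ permutations of $\thetaset_{n}^{\kappa}\times\bin^{\kappa}$; a product of mutually orthogonal families is mutually orthogonal, and $\pi_{\vec{k},\vec{\alpha}}$ flips the basis bit $r_{j}$ precisely on the coordinates $j$ with $\alpha_{j}=0$. Applying \cref{lem:lem2tfkw13} and running the $\pbar\qbar\pbar$ computation of Step~3 of \cref{th:new_moe_bb84} block by block --- which uses only that $B^{\vec{\theta},r}_{x_{T}}$ depends on $x$ through its bits in $T$, so no tensor-product structure of $B,C$ across blocks is needed --- one gets $\lVert\rmPi_{\vec{\theta},r}\,\rmPi_{\pi_{\vec{k},\vec{\alpha}}(\vec{\theta},r)}\rVert\le\prod_{j:\alpha_{j}=0}2^{-\lvert R_{j}\rvert/4}$, where $\lvert R_{j}\rvert$ is the number of positions at which $\theta^{(j)}$ and $\tilde{\pi}_{k_{j}}(\theta^{(j)})$ differ, while blocks with $\alpha_{j}=1$ contribute only the trivial factor $1$; the global ``swap the roles of $\vec{\theta}$ and $\vec{\theta}'$'' normalisation still ensures $\sum_{j:\alpha_{j}=0}\lvert S_{j}\rvert\ge\tfrac12\sum_{j:\alpha_{j}=0}\lvert R_{j}\rvert$. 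Since by \cref{lem:cv21-3.4}, for each $k_{j}$ the quantity $\lvert R_{j}\rvert$ is independent of $\theta^{(j)}$ and, as $k_{j}$ ranges over $\intval{1,N}$, equals $2i$ for exactly $\binom{n/2}{i}^{2}$ values of $k_{j}$, the sum over $(\vec{k},\vec{\alpha})$ factorises across blocks, yielding
\[
  \expectop_{\vec{\theta},r}\bigl\lVert\rmPi_{\vec{\theta},r}\bigr\rVert \;\le\; \frac{1}{(2N)^{\kappa}}\Bigl(N+\sum_{i=0}^{n/2}\binom{n/2}{i}^{2}2^{-i/2}\Bigr)^{\!\kappa} = \Bigl(\tfrac12+\tfrac{1}{2N}\sum_{i=0}^{n/2}\binom{n/2}{i}^{2}2^{-i/2}\Bigr)^{\!\kappa} \le \Bigl(\tfrac12+\tfrac{\sqrt{e}}{2}\bigl(\cos\tfrac{\pi}{8}\bigr)^{n}\Bigr)^{\!\kappa},
\]
by Lemma~3.6 of~\cite{Culf2022monogamyof}. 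For $n=\secpar$ this is bounded by $2^{-\Omega(\kappa)}$, hence negligible whenever $\kappa=\omega(\log\secpar)$, and sub-exponentially small once $\kappa=\secpar^{\Omega(1)}$, which is exactly what the leveraging step of the first paragraph requires.

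The main obstacle, and the reason the parallel statement is not immediate, is that the single-copy game has value only $\approx\tfrac12$, so hardness does not propagate for free: restricting any $\kappa$-fold adversary to one coordinate merely yields an adversary against \cref{th:new_moe_coset_comp}, bounding the value by $\tfrac12+\negl$ rather than by $\negl$. One genuinely has to redo the norm-bound analysis globally with a permutation family that flips \emph{every} basis bit simultaneously, so that the would-be additive ``$\tfrac12$'' --- the contribution of permutations fixing a basis bit, for which only the trivial bound is available --- instead appears as $\bigl(\tfrac12+\negl\bigr)^{\kappa}$. The remaining work is bookkeeping: verifying that the block-wise $\pbar\qbar\pbar$ estimate is legitimate without assuming $B^{\vec{\theta},r}$ and $C^{\vec{\theta},r}$ factor over blocks (it is, by the same operator-inequality argument $\rmPi\le\pbar$, $\rmPi\le\qbar$ as in the single-block case), and that the swap-of-roles normalisation behaves well when applied simultaneously over the flipped coordinates.
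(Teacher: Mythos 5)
Your proof takes essentially the same route as the paper: reduce to a $\kappa$-parallel BB84 game via the extended non-local game, bound the operator norm via \cref{lem:lem2tfkw13} with the $\kappa$-fold product of the single-block permutation family from \cref{coro:permutations}, compute the $\pbar\qbar\pbar$ estimate block-wise, and factorise the resulting sum. Two remarks are in order.

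First, a point in your favour: your binomial-expansion step is actually more careful than the paper's. For a fixed $\vec{\alpha}$ with $w$ non-flipped coordinates, summing the trivial bound $1$ over the $N$ values of $k_j$ for each non-flipped block contributes a factor $N^{w}$; your expression $\frac{1}{(2N)^{\kappa}}\bigl(N+\sum_i\binom{n/2}{i}^2 2^{-i/2}\bigr)^{\kappa}=\bigl(\tfrac12+\tfrac{S}{2N}\bigr)^{\kappa}$ retains it, whereas the paper's displayed computation $\frac{1}{(2N)^{\kappa}}\sum_w\binom{\kappa}{w}S^w=\frac{(1+S)^{\kappa}}{(2N)^{\kappa}}$ drops it. The paper's expression is a slip: taken literally at $\kappa=1$ it gives a bound below $1/2$, contradicting the trivial strategy and \cref{th:new_moe_coset_comp}. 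Your $\bigl(\tfrac12+\tfrac{\sqrt e}{2}(\cos\tfrac{\pi}{8})^n\bigr)^{\kappa}$ is the correct form and, as you note, is $2^{-\Omega(\kappa)}$, hence negligible once $\kappa=\omega(\log\secpar)$.

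Second, a point against: you invoke \emph{sub-exponentially secure} $\iO$ and complexity leveraging to strip the obfuscated membership programs, arguing that ``this step needs the information-theoretic bound below to be sub-exponentially small.'' It does not. The hybrid argument of \cite{C:CLLZ21} for the computational-to-information-theoretic reduction introduces only a negligible \emph{additive} loss using standard (polynomial) $\iO$ and one-way functions; since the information-theoretic parallel bound is already $\negl$, the computational winning probability is bounded by $\negl+\negl$. No leveraging is required, and $\kappa=\omega(\log\secpar)$ (rather than $\kappa\ge\secpar^{\Omega(1)}$) already suffices. This does not break your argument but needlessly strengthens its hypotheses relative to what the paper claims.
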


\paragraph{Comparison with \cite{cryptoeprint:2023/410}.}
In \cite{cryptoeprint:2023/410}, the authors also present a new monogamy-of-entanglement game for coset states.
Their game is similar too our parallel version except that, instead of receiving the same challenge bitstring $r$, $\bdv$ and $\cdv$ receive respectively $r_1$ and $r_2$, two independently sampled challenge bitstrings, and must answer accordingly.
Note that the hardness of the parallel version of our game can be proven using lemma 18 of \cite{C:AnaKalLiu23} on their game\footnotemark.
We still provide a direct proof for this theorem in \cref{subsec:moe-parallel} for completeness.
We emphasize that for the single-instance version, however, the same lemma cannot be applied.
\footnotetext{We thank Alper Çakan and Vipul Goyal for pointing out this shorter proof.}

\subsection{Proof of Parallel Version of the Monogamy Game}
\label{subsec:moe-parallel}

In this subsection, we prove \cref{th:new-moe-parallel}.
We do it by proving that a parallel version of the BB84 version of the monogamy game has negligible security, as the coset version follows as for the single instance.
As the proof follows the same structure as the one of \cref{th:new_moe_bb84}, we only describe here the important steps of the proof.

\paragraph{Step 1: extended non-local game.}
We first describe the extended non-local game for this parallel version of the game.
This game is between a challenger and two adversaries $\adv$ and $\bdv$, and is parametrized by a security parameter $\secpar$ and a number of repetitions $\kappa := \poly$.
  \begin{itemize}
    \item $\bdv$ and $\cdv$ jointly prepare a quantum state $\rho_{012}$ - where $\rho_0$ is composed of $\kappa$ $\secpar$-qubits registers, denoted as $\rho^1_0, \dots, \rho^\kappa_0$ - then send $\rho_0$ to the challenger.
    $\bdv$ and $\cdv$ keep $\rho_1$ and $\rho_2$ respectively.
    From this step $\bdv$ and $\cdv$ cannot communicate.
    \item For $j \in \intval{1, \kappa}$, the challenger samples $\theta^j \gets \thetaset_n$, then the challenger samples $r \gets \bin^\kappa$.
    Then, for all $i \in \intval{1, \secpar}$ and $j \in \intval{1, \kappa}$, the challenger measures the $i^{\text{th}}$ qubit of $\rho^j_0$ in computational basis if $\theta^j_i = 0$ or in Hadamard basis if $\theta^j_i = 1$.
    Let $m^j \in \bin^n$ denote the measurement outcome for every $j$.
    Finally, the challenger sends $\theta := (\theta^1, \dots, \theta^\kappa)$ and $r$ to $\bdv$ and $\cdv$.
    \item $\bdv$ returns $\{m^j_1\}_{j \in \intval{1, \kappa}}$ and $\cdv$ returns $\{m^j_2\}_{j \in \intval{1, \kappa}}$.
  \end{itemize}
  Let $m^j_{T_{r_j}} := \{m^j_i\ \vert\ \theta^j_i = r_j\}$.
  We say that $(\bdv, \cdv)$ win the game if $m^j_1 = m^j_2 = m_{T_{r_j}}$ for all $j \in \intval{1, \kappa}$.

\paragraph{Step 2: first upper-bound.}
Let $\theta = (\theta^1, ..., \theta^\kappa)$, we define $\rmPi_{\theta, r} := \bigotimes \limits_{j = 1}^\kappa \sum_{x \in \bin^n} \ketbra{x}{x}^{\theta^j} \otimes B^{\theta, r}_{x_{T_{r}}} \otimes C^{\theta, r}_{x_{T_r}}$.
We then prove in the same way as in \cref{th:new_moe_bb84} that
$$
  p_{win} \leq \frac{1}{(2N)^\kappa}\sum_{\substack{k = k_1 \Vert \dots \Vert k_\kappa\\1 \leq k_j \leq N\ \forall j\\\alpha \in \bin^\kappa}}\ \max_{\theta, r} \big \lVert \rmPi_{\theta, r} \rmPi_{\pi_{k, \alpha}(\theta, r)} \rVert
$$
where $\{\pi_{k, \alpha}\}$ is a family of mutually orthogonal permutations indexed by $k = k_1 \Vert \dots \Vert k_\kappa$ - where each $k_j \in \intval{1, N}$ - and $r \in \bin^\kappa$.

\paragraph{Step 3: upper-bound of $\lVert \rmPi_{\theta, r} \rmPi_{\theta', \bar{r}} \rVert$.}
Let $\theta = (\theta^1, ..., \theta^\kappa)$ and $\theta' = (\theta^{'1}, ..., \theta^{'\kappa})$ where each $\theta^j$ and $\theta^{'j}$ belongs to $\thetaset_n$.
Let $r \in \bin^\kappa$.
For every $j \in \intval{1, \kappa}$,
note $R^j := \{i \in \intval{1, N}\ :\ \theta^j_i \neq \theta^{'j}_i\}$, $T^j := \{i \in \intval{1, N}\ :\ \theta^j_i = r_j\}$, $T^{'j} := \{i \in \intval{1, N}\ :\ \theta^{'j}_i = 1-r_j\}$ and $S^j := \{i \in R\ :\ \theta^j_i = r_j \text{ and } \theta^{'j}_i = 1-r_j\}$.
We define $\pbar$ and $\qbar$ as follows:
\begin{align*}
  \pbar & = \sum_{\substack{j \in \intval{1, \kappa}\\x_{T^j} \in \bin^{T^j}}} \bigotimes \limits_{j = 1}^\kappa \Hgate^{r_j} \ketbra{x_{S^j}}{x_{S^j}} \Hgate^{r_j} \otimes \id_{\bar{S^j}} \otimes B^{\theta, r}_{x_{T}} \otimes \id_C\\
  \qbar & = \sum_{\substack{j \in \intval{1, \kappa}\\x_{T^{'j}} \in \bin^{T^{'j}}}} \bigotimes \limits_{j = 1}^\kappa \Hgate^{1-r_j} \ketbra{x_{S^j}}{x_{S^j}} \Hgate^{1-r_j} \otimes \id_{\bar{S^j}} \otimes \id_B \otimes C^{\theta', 1-\bar{r}}_{x_{T'}}
\end{align*}
where $T := T^1 \Vert \dots \Vert T^\kappa$, $\ketbra{x_{S^j}}{x_{S^j}}$ denotes the subsystem of $\ketbra{x_{T^j}}{x_{T^j}}$ whose indices belong to $S^j$, and $\id_{\bar{S^j}}$ denotes the rest of the system.

Following the same reasoning as in \cref{th:new_moe_bb84} (step 3), it comes
$$
  \lVert \rmPi_{\theta, r} \rmPi_{\theta', \bar{r}} \rVert \leq 2^{-\frac{\sum_j \lvert R^j \rvert}{4}}
$$

\paragraph{Step 4: finding the permutation family.}
Let $\{\pi^\star_{k, \alpha}\}_{k \in \intval{1, N}, \alpha \in \bin}$ denotes the permutation family defined in step 4 of \cref{th:new_moe_bb84}.
We define the permutation family $\{\pi_{k, \beta}\}$ - indexed by $k = k_1 \Vert \dots \Vert k_\kappa$ where each $k_j \in \intval{1, N}$ and $\beta \in \bin^\kappa$ - as $\pi_{k, r}(\theta_1 \Vert \dots \Vert \theta_\kappa, r) = \pi^\star_{k_1, \beta_1}(\theta_1, r_1) \Vert \dots \Vert \pi^\star_{k_\kappa, \beta_\kappa}(\theta_\kappa, r_\kappa)$.
It is easy to see that this family is orthogonal and has the same required properties as in the single instance proof, that is that for every $j \in \intval{1, \kappa}$ and $i \in \intval{1, n/2}$, there are exactly ${n/2 \choose i}^2$ permutations $\pi_{k, 0}$ such that the number of positions at which $\theta^j$ and $\theta^{'j}$ are both $1$ is $n/2 - i$ (\ie{} $\vert R^j \vert = 2^i$).
Using this set of permutations we have:
\begin{align*}
  p_{win} & \leq \frac{1}{(2N)^\kappa} \sum_{\substack{k = k_1 \Vert \dots \Vert k_\kappa\\\beta \in \bin^\kappa}} \max_{\substack{\theta = \theta_1 \Vert \dots, \Vert \theta_\kappa\\r \in \bin^\kappa}} \Vert \rmPi_{\theta, r} \rmPi_{\theta' ,r'} \Vert\\
  ~       & = \frac{1}{(2N)^\kappa} \sum_{w = 0}^\kappa \sum_{\substack{k = k_1 \Vert \dots \Vert k_\kappa\\\beta \in \bin^\kappa, \vert \beta \vert = w}} \max_{\substack{\theta = \theta_1 \Vert \dots, \Vert \theta_\kappa\\r \in \bin^\kappa}} \Vert \rmPi_{\theta, r} \rmPi_{\theta' ,r'} \Vert\\
  ~       & \leq \frac{1}{(2N)^\kappa} \sum_{w = 0}^\kappa \binom{\kappa}{w} \left(\sum_{\ell = 0}^{n/2} \binom{n/2}{\ell}^2 2^{-\ell / 2}\right)^w\\
  ~       & = \frac{1}{(2N)^\kappa} \left(1 + \sum_{\ell = 0}^{n/2} \binom{n/2}{\ell}^2 2^{-\ell / 2}\right)^{\kappa}\\
  ~       & \leq \frac{1}{(2N)^\kappa} \left(1 + \binom{n/2}{n/4}^2 \sum_{\ell = 0}^{n/2} 2^{-\ell / 2}\right)^{\kappa}\\
  ~       & = \frac{1}{(2N)^\kappa} \left(1 + \binom{n/2}{n/4}^2 \frac{1 - 2^{-n/4 - 1/2}}{1 - 2^{-1/2}}\right)^{\kappa}
\end{align*}

Where in the first equality, we split the sum over the possible weights of $\beta$; the first inequality comes from \cref{coro:permutations}; we obtain the second equality by applying the binomial theorem; the second inequality comes from $\binom{n}{k} \leq \binom{n}{n/2}$ for all $n ,k$; and the last inequality comes from the fact that the sum is the sum of a geometric series.

Using both Stirling approximation and asymptotic development of logarithm, we get that the logarithm of this last inequality decreases linearly in $k$, meaning that the upper bound is negligible in $n$ which concludes the proof.
\section{Conjectures on Simultaneous Compute-and-Compare Obfuscation}
\label{sec:conjecture}
In this section, we present our conjectures.
We first give an overview of the conjectures, then we define them formally, and finally we discuss their relation to similar conjectures in a recent work \cite{EPRINT:AnaBeh23}.

\subsection{Overview}
Recall that for (sub-exponentially) unpredictable distribution $\distrib$, there
exists a compute-and-compare obfuscator (\cref{sec:preli-cc}).
We are interested in whether this result still holds in a non-local context.
More precisely, consider the two following tasks, which we call \emph{simultaneous distinguishing}
and \emph{simultaneous predicting}.
Simultaneous predicting asks two players, Bob and Charlie, given a function
associated to a compute-and-compare program, and a quantum state as auxiliary
information on the program, to output the associated lock value.
Note that the function given to Bob and the one given to Charlie are not
necessarily the same, and that the same goes for the quantum states they are
given.
Also, crucially, Bob's and Charlie's quantum states can be entangled.
In simultaneous distinguishing, Bob and Charlie are given either an obfuscated
compute-and-compare program, or the outcome of a simulator on this program's
parameters\footnotemark.
As in simultaneous predicting, they are also given a quantum state each, but
here, they are asked to tell whether they received the obfuscated program, or
the simulated one.
\footnotetext{By parameters, we mean input size, output size, and depth of a
  given circuit.}

These two tasks are parameterized by a distribution over triple of the form $(\ccprog_1, \ccprog_2, \sigma_{12})$ - where the two first elements are compute-and-compare programs used to create the challenges in the challenge phase and the last one is the bipartite quantum state shared by Bob and Charlie
We say that such a distribution is simultaneously unpredictable if no adversaries can succeed in the associated simultaneous predicting task; and that simultaneous compute-and-compare obfuscation exists for this distribution if there is a compute-and-compare obfuscator with respect to which no adversaries can succeed in the associated simultaneous distinguishing task.
The question we ask now is:
\begin{center}
  \emph{\textbf{Question.}
  Is there simultaneous compute-and-compare obfuscation for any simultaneous unpredictable distribution ?}
\end{center}
As discussed in \cite{C:CLLZ21}, this question is far from trivial.
Indeed, consider its contraposition: \emph{if all candidate algorithms for simultaneous compute-and-compare obfuscation fail to obfuscate the programs as desired, does it mean that the distribution is simultaneously predictable for a certain pair of algorithms ?}
Intuitively, the difficulty here stems from whether the challenges are independent or not: if they are,
then one can analyze the two adversaries in the distinguishing game independently, and thus say that if the first adversary succeeds in their part of the task, then they can predict their lock value, and that same goes for the second adversary.
If the challenges are not independent in the other hand, it is not clear what happens when the first adversary predicts the lock value: as, concretely, the prediction is a measurement, perhaps this measurement perturbs the other register in a way that prevents the other adversary to predict their lock value.

In this work, we break down this question in the following way: we parameterize the distinguishing task by a distribution over pairs of coins used as random tape by the compute-and-compare obfuscator, and by a distribution over bits used to determine whether Bob and Charlie receive the obfuscated program or the simulated one.
We consider two types of distributions for the coins' distribution and the bits' distribution:
\begin{itemize}
  \item the uniform distribution, where the pairs $(r_1, r_2)$ (resp. $(b_1, b_2)$) are such that $r_1$ and $r_2$ (resp. $b_1$ and $b_2$) are uniformly and independently sampled;
  \item the identical distribution where the pairs $(r_1, r_2)$ (resp. $(b_1, b_2)$) are such that $r_1$ (resp. $b_1$) is uniformly  sampled and $r_2 = r_1$ (resp. $b_2 = b_1$).
  We then simply write these pairs as $(r, r)$ and $(b, b)$.
\end{itemize}

In \cite{C:CLLZ21}, the authors show that the answer of the question above is yes when both the coins' and the bits' distributions are uniform.
In particular, they use a technique called ``threshold projective implementation'' to show that, with these parameters, one can analyze the two adversaries independently, hence Bob prediction does not perturb Charlie's one.
We conjecture that this is still the case when the coins' distribution is identical and the bits' one is either uniform or also identical.

\paragraph{Relation with \cite{EPRINT:AnaBeh23}.}
In a recent work of Ananth and Behera \cite{EPRINT:AnaBeh23}, the authors make a similar conjecture, this time on simultaneous Goldreich-Levin prediction.
Roughly, the usual Goldreich-Levin theorem states that if $F$ is a one way function (meaning that a random $x$ is not predictable given $F(x)$), then no adversary can distinguish the dot product $x \cdot r$ from a random bit, given $F(x)$ - where $x$ is a random input and $r$ a random bitstring of same length.
The authors of \cite{EPRINT:AnaBeh23} consider the simultaneous version of this task, that is, assuming that $(x_1, x_2)$ are simultaneously unpredictable given $(F(x_1), F(x_2))$ (in the same sense as our definitons above), then $x_1 \cdot r_1$ and $x_2 \cdot r_2$ are simultaneously indistinguishable from two random strings - where the pairs $(x_1, x_2)$, $(r_1, r_2)$, and $(b_1, b_2)$, are sampled from different types of distributions (uniform or identical), similarly as in our case - and they finally describe two conjectures.
Note that, as there is a construction of compute-and-compare obfuscation \cite{C:CLLZ21} (based on $\iO$ and hardness of LWE assumptions) that ultimately relies on the Goldreich-Levin theorem, then we expect that the conjectures of \cite{EPRINT:AnaBeh23}, combined with $\iO$ and LWE assumptions, imply our conjectures.

\subsection{Definitions}
We present formally the notions of simultaneous distinguishing and simultaneous predicting games.
For ease of reading, we first introduce what we call simultaneous compute-and-compare distributions.

\begin{definition}[Simultaneous Compute-and-Compare Distribution]
  We call \emph{simultaneous compute-and-compare distribution} a family of distributions $\distrib_\ccprog = \{\distrib_\secpar\}_{\secpar \in \NN}$ over triple of the form $(\ccprog_1[f_1, y_1, m_1],\allowbreak \ccprog_2[f_2, y_2, m_2], \sigma_{12})$ where $\ccprog_1[f_1, y_1, m_1]$ and $\ccprog_2[f_2, y_2, m_2]$ are both compute-and-compare programs, and $\sigma_{12}$ is a bipartite quantum state representing some auxiliary information.
  In the following, we denote the first and second registers of $\sigma_{12}$ as $\sigma_1$ and $\sigma_2$.
\end{definition}

We are now ready to define simultaneous distinguishing and predicting games.

\begin{definition}[Simultaneous Distinguishing Game]
  We define below a \emph{simultaneous distinguishing game}, parameterized by a pair of efficient algorithms $(\ccobf, \simul)$ - where $\ccobf$ uses a bitstring $r \in \bin^{\ell}$ as random coins for some $\ell \in \NN$, a simultaneous compute-and-compare distribution $\distrib_\ccprog$, a ``coins' distribution'' $\distrib_R$ over $\bin^{2 \ell}$, a ``bits' distribution'' $\distrib_B$ over $\bin^2$, and a security parameter $\secpar$.
  This game is between a challenger and a pair of adversaries $\bdv$ and $\cdv$.
  \label{def:simul-dist}
  \begin{itemize}
    \item The challenger samples $(\ccprog_1, \ccprog_2, \sigma_{12}) \gets \distrib_\ccprog$, $(b_1, b_2) \gets \distrib_B$ and $(r_1, r_2) \gets \distrib_R$.
    \item The challenger sends $\sigma_1$ to $\bdv$.
    The challenger also sends $\ccobf(\ccprog_1; r_1)$ if $b_1 = 0$, or $\simul(\secparam, \ccprog_1.\params)$ if $b_1 = 1$ to $\bdv$.
    \item Similarly, the challenger sends $\sigma_2$ to $\cdv$.
    Then they also send $\ccobf(\ccprog_2; r_2)$ if $b_2 = 0$, or $\simul(\secparam, \ccprog_2.\params)$ if $b_2 = 2$ to $\cdv$.
  \end{itemize}
  $\bdv$, and $\cdv$ win the game if $\bdv$ returns $b'_1 = b_1$ and $\cdv$ returns $b'_2 = b_2$.

  We denote the random variable that indicates whether a pair of adversaries $(\bdv, \cdv)$ wins the game or not as $\funcfont{Simul-Dist}^{(\ccobf, \simul)}_{\distrib_\ccprog, \distrib_R, \distrib_B}(\secparam, \bdv, \cdv)$.
\end{definition}

\begin{definition}[Simultaneous Predicting Game]
  We define below a \emph{simultaneous predicting game}, parametrized by a simultaneous compute-and-compare distribution $\distrib_\ccprog$, and a security parameter $\secpar$.
  This game is between a challenger and a pair of adversaries $\bdv$ and $\cdv$.
  \label{def:simul-predict}
  \begin{itemize}
      \item The challenger samples $(\ccprog_1[f_1, y_1, m_1], \ccprog_2[f_2, y_2, m_2], \sigma_{12}) \gets \distrib_\ccprog$.
      \item Then, the challenger sends $(f_1, \sigma_1)$ to $\bdv$ and $(f_2, \sigma_2)$ to $\cdv$.
  \end{itemize}
  $\bdv$, and $\cdv$ win the game if $\bdv$ returns $y'_1 = y_1$ and $\cdv$ returns $y'_2 = y_2$.

  We denote the random variable that indicates whether a pair of adversaries $(\bdv, \cdv)$ wins the game or not as $\funcfont{Simul-Predict}_{\distrib_\ccprog}(\secparam, \bdv, \cdv)$.
\end{definition}

\subsection{Conjectures}
We now state our two conjectures.
An informal description of the conjectures is illustrated in \cref{fig:cc-dist}.

\begin{conjecture}
  \label{conj:uniformB}
  Let $\distrib_\ccprog$ a simultaneous compute-and-compare distribution, $\distrib_R$ the identical distribution over $\bin^{2 \ell}$, and $\distrib_B$ the uniform distribution over $\bin^2$.
  Assume that, for all pair of $\qpt$ adversaries $(\bdv, \cdv)$,
  $$
    \prob{\funcfont{Simul-Predict}_{\distrib_\ccprog}(\secparam, \bdv, \cdv) = 1} \leq \negl
  $$
  Then, there exists a compute-and-compare obfuscator $\ccobf$ and its associated simulator $\simul$ such that, for all pair of $\qpt$ adversaries $(\bdv, \cdv)$,
  $$
    \prob{\funcfont{Simul-Dist}^{(\ccobf, \simul)}_{\distrib_\ccprog, \distrib_R, \distrib_B}(\secparam, \bdv, \cdv) = 1} \leq \frac{1}{2} + \negl
  $$
\end{conjecture}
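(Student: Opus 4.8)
The plan is to instantiate $\ccobf$ and $\simul$ with the compute-and-compare obfuscator of~\cite{C:CLLZ21}, whose distributional indistinguishability from a simulator is obtained from $\iO$ together with LWE via a Goldreich--Levin step, and then to prove the conjecture by a reduction from simultaneous distinguishing to simultaneous predicting. Concretely, suppose a pair of $\qpt$ adversaries $(\bdv, \cdv)$ wins the simultaneous distinguishing game with probability at least $1/2 + 1/\poly$; the goal is to build a $\qpt$ pair $(\bdv', \cdv')$ winning the simultaneous predicting game with non-negligible probability, contradicting the hypothesis. The first, easy, observation is that the joint winning event $b'_1 = b_1 \wedge b'_2 = b_2$ implies each of $b'_1 = b_1$ and $b'_2 = b_2$, so both \emph{marginal} success probabilities are also at least $1/2 + 1/\poly$; since $\distrib_B$ is uniform, this means that $\bdv$ (resp. $\cdv$) individually distinguishes a real obfuscation of $\ccprog_1$ (resp. $\ccprog_2$) from the simulated one with inverse-polynomial advantage, using only its own register of $\sigma_{12}$ and its own challenge. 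It is this step that keeps the target bound at exactly $1/2$, and it relies on $b_1, b_2$ being independent.

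The second step is the extraction. Recall from~\cite{C:CLLZ21} that, in the purely local setting, a distinguisher for $\ccobf(\ccprog; r)$ versus $\simul(\secparam, \ccprog.\params)$ with advantage $\varepsilon$ can be converted, via the quantum Goldreich--Levin extractor, into a predictor for the lock value succeeding with probability $\poly(\varepsilon)$. I would run this extractor on $\bdv$'s side to define $\bdv'$ and on $\cdv$'s side to define $\cdv'$. The danger in doing both is that each extractor is a quantum operation on its party's register and may disturb $\sigma_{12}$ so badly that the other party's advantage is destroyed. To control this, import the threshold/projective-implementation machinery of~\cite{C:CLLZ21}: build a projective ``advantage-estimation'' measurement $\ti_B$ for $\bdv$, acting only on $\sigma_1$ and $\bdv$'s challenge, and likewise $\ti_C$ for $\cdv$. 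Because $\bdv$ and $\cdv$ are non-communicating, $\ti_B$ and $\ti_C$ act on disjoint registers and therefore commute; apply both to $\sigma_{12}$. Using that both marginal advantages are at least $1/\poly$, a Markov-type argument then shows that with inverse-polynomial probability the two (commuting) outcomes simultaneously certify advantage at least $1/\poly$, and conditioned on this event the two post-measurement reduced states are simultaneously ``good'' for the two extractors.

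The step I expect to be the main obstacle---and, in my reading, the precise reason this is stated as a conjecture rather than a theorem---is the passage from ``each Goldreich--Levin extractor succeeds with inverse-polynomial probability on its own register of a good joint state'' to ``the two extractors succeed \emph{simultaneously}''. The two reconstruction events live on disjoint registers but need not be positively correlated, and when the individual success probabilities are only inverse-polynomial the threshold-implementation and gentle-measurement toolbox does not by itself force a joint success. The natural way to close the gap is to exploit that $\distrib_R$ is the \emph{identical} distribution: since $r_1 = r_2 = r$, the Goldreich--Levin challenge embedded in $\bdv$'s and $\cdv$'s obfuscated programs is identical, so the two extractors run ``in lockstep'' on correlated inputs, and one should aim for a genuinely \emph{simultaneous} Goldreich--Levin list-decoding argument---tolerating entangled advice and a shared challenge---in the spirit of the simultaneous search-to-decision results of~\cite{C:CLLZ21} and closely analogous to the simultaneous Goldreich--Levin conjecture of~\cite{EPRINT:AnaBeh23}. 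Establishing such a statement is the crux; granting it (together with $\iO$ and LWE, which already give the obfuscator and its local analysis), the reduction above goes through and yields the claimed $1/2 + \negl$ bound, so in particular \cref{conj:uniformB} would follow from the corresponding simultaneous Goldreich--Levin statement.
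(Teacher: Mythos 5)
This statement is explicitly a \emph{conjecture} in the paper; no proof is given, and none is claimed to exist. Your analysis correctly reconstructs the paper's own informal discussion in Section~4.1 of why it is stated as a conjecture rather than a theorem: instantiate $\ccobf$ with the \cite{C:CLLZ21} obfuscator, use uniformity of $\distrib_B$ to decouple the joint distinguishing success into two marginal inverse-polynomial advantages, invoke the threshold/projective-implementation machinery of \cite{C:CLLZ21} to control measurement disturbance across the two registers, and then hit the wall that is precisely the open problem --- running two Goldreich--Levin style extractors on disjoint but entangled registers with a \emph{shared} challenge $r_1 = r_2$ and arguing that they succeed \emph{simultaneously}. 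You locate the same gap the authors locate, and you correctly identify its relationship to the simultaneous Goldreich--Levin conjecture of \cite{EPRINT:AnaBeh23}: the paper itself observes (``Relation with \cite{EPRINT:AnaBeh23}'') that those conjectures, together with $\iO$ and LWE, are expected to imply \cref{conj:uniformB}. In short, your write-up is not a proof, correctly does not claim to be one, and accurately reduces the conjecture to the same underlying simultaneous-extraction question the paper leaves open.
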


\begin{conjecture}
  \label{conj:identicalB}
  Let $\distrib_\ccprog$ a simultaneous compute-and-compare distribution, $\distrib_R$ the identical distribution over $\bin^{2 \ell}$, and $\distrib_B$ the identical distribution over $\bin^2$.
  Assume that, for all pair of $\qpt$ adversaries $(\bdv, \cdv)$,
  $$
    \prob{\funcfont{Simul-Predict}_{\distrib_\ccprog}(\secparam, \bdv, \cdv) = 1} \leq \negl
  $$
  Then, there exists a compute-and-compare obfuscator $\ccobf$ and its associated simulator $\simul$ such that, for all pair of $\qpt$ adversaries $(\bdv, \cdv)$,
  $$
    \prob{\funcfont{Simul-Dist}^{(\ccobf, \simul)}_{\distrib_\ccprog, \distrib_R, \distrib_B}(\secparam, \bdv, \cdv) = 1} \leq \frac{1}{2} + \negl
  $$
\end{conjecture}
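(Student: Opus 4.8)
The plan is to instantiate $\ccobf$ and $\simul$ with the compute-and-compare obfuscator of \cite{C:CLLZ21} (from $\iO$ and LWE; as in the local setting one should really read the unpredictability hypothesis as sub-exponential so that this obfuscator applies at all) and to lift its \emph{local} security proof to the simultaneous setting. Recall that, locally, distributional indistinguishability of that obfuscator is proven by a sequence of $\iO$ and puncturing hybrids that confine all information about a lock value $y$ to a Goldreich--Levin hardcore-bit term, so that any distinguisher between $\ccobf(\ccprog)$ and $\simul(\ccprog.\params)$ can be turned, by Goldreich--Levin extraction, into a predictor for $y$; crucially, each hybrid step only rewrites the classical program and leaves the auxiliary state untouched, hence stays sound even when that state is one half of an entangled bipartite state $\sigma_{12}$.

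Assume toward a contradiction a pair of $\qpt$ adversaries $(\bdv, \cdv)$ winning the game $\funcfont{Simul-Dist}^{(\ccobf, \simul)}_{\distrib_\ccprog, \distrib_R, \distrib_B}$ of \cref{def:simul-dist} with probability $1/2 + \varepsilon$ for a non-negligible $\varepsilon$, where $\distrib_R$ and $\distrib_B$ are both identical, so the coins $r$ and the challenge bit $b$ are common to $\bdv$ and $\cdv$. Running the $\iO$ and puncturing hybrids of \cite{C:CLLZ21} on $\ccprog_1$ and $\ccprog_2$ in parallel --- each step indistinguishable even given the full bipartite state --- I would pass to a game whose winning condition is, informally: $\bdv$ distinguishes $\inp{y_1}{r}$ from a fresh random bit, $\cdv$ distinguishes $\inp{y_2}{r}$ from a fresh random bit, the two ``real versus random'' choices are the same bit $b$, and both use the same $r$; the pair still wins with probability $1/2 + \varepsilon - \negl$.

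Next I would invoke the threshold-projective-implementation machinery of \cite{C:CLLZ21}. For each fixed value of the common pair $(r,b)$, $\bdv$'s binary ``guessed correctly'' measurement acts on $\sigma_1$ and $\cdv$'s on $\sigma_2$, so they commute, as do their projective implementations and the threshold variants; I can therefore apply $\bdv$'s threshold implementation and then $\cdv$'s and read off a pair of success-probability estimates $(p_B, p_C)$. Decomposing along the two values of $b$ and averaging over $r$, the winning probability $1/2 + \varepsilon - \negl$ should force, by a Markov/averaging argument, that with non-negligible probability both $p_B$ and $p_C$ exceed $1/2 + \varepsilon'$ for a non-negligible $\varepsilon'$ --- here one must argue that the shared $(r,b)$ can help only as a common classical coin, a contribution absorbed by the union over the two values of $b$. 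Conditioned on this good event, I would run Goldreich--Levin extraction on the post-measurement register held by $\bdv$ to recover $y_1$ and, on the disjoint register held by $\cdv$, the same to recover $y_2$; as these extractions only re-run the (by now $y$-independent) distinguishers on disjoint registers, they should succeed jointly with non-negligible probability, producing a $\qpt$ pair that wins $\funcfont{Simul-Predict}_{\distrib_\ccprog}$ of \cref{def:simul-predict} and contradicting the hypothesis. The argument for \cref{conj:uniformB} is the same, with the decomposition along $b$ replaced by the easier independent analysis of the two challenge bits; alternatively, one could short-circuit this part by reducing to a simultaneous Goldreich--Levin prediction statement in the spirit of \cite{EPRINT:AnaBeh23} composed with the $\iO$/LWE construction of \cite{C:CLLZ21}.

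The hard part, and the reason this is only conjectured, is the threshold-implementation step in this maximally correlated regime: because the coins and the challenge bit are shared, the success operator of the distinguishing game is a mixture of tensor products of the two players' operators indexed by the common $(r,b)$, rather than a single tensor product, so one cannot average out one player's challenge while keeping the other's fixed, as is done in the local proof of \cite{C:CLLZ21}. What is needed is a genuine ``simultaneous threshold projective implementation'' guaranteeing that freezing $\bdv$'s success probability --- a measurement on $\bdv$'s register that may disturb the entanglement with $\cdv$ --- does not destroy $\cdv$'s ability to extract, and that this composes with Goldreich--Levin extraction on both sides. Controlling this back-action on the shared state is exactly what we do not know how to do, and is precisely what the conjecture postulates.
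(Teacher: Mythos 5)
The statement you are trying to prove is stated in the paper as a \emph{conjecture}, not a theorem: the paper offers no proof of it, and its Section~\ref{sec:conjecture} overview explicitly identifies the obstruction as the same one you reach --- when the challenges given to the two players are correlated, the first player's prediction is a measurement that may perturb the entangled register held by the second player in a way that destroys the second prediction, so the local argument of \cite{C:CLLZ21} (which handles the case where both the coins' and the bits' distributions are uniform via threshold projective implementation) does not carry over. Your proposal is therefore not a proof but a plausibility sketch, and to your credit you say so: your final paragraph names precisely the missing ``simultaneous threshold projective implementation'' step and acknowledges that controlling the back-action of one player's measurement on the shared state is exactly what is not known. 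This diagnosis matches the paper's own discussion of why the question is nontrivial, and your closing remark about routing through a simultaneous Goldreich--Levin statement mirrors the paper's observation that the conjectures of \cite{EPRINT:AnaBeh23}, combined with $\iO$ and LWE, would be expected to imply theirs.

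The concrete gap, then, is the one you flagged: the claim that ``with non-negligible probability both $p_B$ and $p_C$ exceed $1/2+\varepsilon'$'' and that the two Goldreich--Levin extractions ``succeed jointly with non-negligible probability'' is unsupported in the identical-challenge regime. The success operator is a mixture over the common $(r,b)$ of tensor products rather than a single tensor product, so the averaging argument that decouples the two players in \cite{C:CLLZ21} is unavailable, and there is no known guarantee that applying $\bdv$'s threshold implementation leaves $\cdv$'s conditional success probability intact. Since that is the entire content of the conjecture, the proposal does not establish the statement; it only restates, in more technical terms, why the authors left it open.
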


As we in fact use the contrapositions of these conjectures in the following of the paper, we present these contrapositions as the following corollaries.

\begin{corollary}
  \label{coro:conj-uniformB}
  Let $\distrib_\ccprog$ a simultaneous compute-and-compare distribution, $\distrib_R$ the identical distribution over $\bin^{2 \ell}$, and $\distrib_B$ the uniform distribution over $\bin^2$.
  Assume that, for all efficient and functionality preserving algorithm $\ccobf$, and for all efficient simulator $\simul$, there exists a pair of $\qpt$ adversaries $(\bdv, \cdv)$ winning the distinguishing game - parametrized by $\distrib_\ccprog, \distrib_R$, and $\distrib_B$ - with non-negligible advantage over $1/2$.
  Then there exists a pair of $\qpt$ adversaries $(\bdv, \cdv)$ winning the associated predicting game with non-negligible probability.
\end{corollary}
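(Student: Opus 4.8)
The plan is to observe that Corollary~\ref{coro:conj-uniformB} is simply the contrapositive of Conjecture~\ref{conj:uniformB}, so that — modulo two small bookkeeping points about quantifiers — no argument beyond logical rewriting is required. Writing Conjecture~\ref{conj:uniformB} schematically as $A \Rightarrow B$, where $A$ says that every \qpt{} pair $(\bdv,\cdv)$ wins $\funcfont{Simul-Predict}_{\distrib_\ccprog}$ only with negligible probability, and $B$ says that some pair $(\ccobf,\simul)$ forces every \qpt{} pair to win $\funcfont{Simul-Dist}^{(\ccobf,\simul)}_{\distrib_\ccprog,\distrib_R,\distrib_B}$ with advantage at most $\negl$ over $1/2$, the contrapositive $\lnot B \Rightarrow \lnot A$ is exactly the implication asserted by the corollary: if every such $(\ccobf,\simul)$ can be broken in the distinguishing game with non-negligible advantage, then the predicting game can be broken by some \qpt{} pair with non-negligible probability.

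First I would make explicit the two places where the corollary is not a verbatim negation of the conjecture. The first is the class over which $\ccobf$ ranges: the corollary quantifies over \emph{all} efficient, functionality-preserving $\ccobf$ (together with all efficient $\simul$), whereas the existential in $B$ is over compute-and-compare obfuscators in the sense of \cref{def:ccobf}, which additionally demand distributional indistinguishability and hence form a (possibly strictly) smaller class of pairs $(\ccobf,\simul)$. Since that smaller class is contained in the one the corollary quantifies over, the corollary's hypothesis — ``for all $(\ccobf,\simul)$ in the large class the distinguishing game is broken'' — implies $\lnot B$ — ``for all $(\ccobf,\simul)$ in the small class the distinguishing game is broken'' — which is all we need. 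The second point is the routine translation between ``$\Pr[\cdot]\le\negl$ is false'' and ``some \qpt{} adversary wins with non-negligible probability'': if no fixed adversary keeps the winning probability negligible, then there is a polynomial $p$ and a (non-uniform, consistent with the \qpt{} convention of Section~2) adversary whose success probability is at least $1/p(\secpar)$ for infinitely many $\secpar$; this is the standard infinitely-often / non-uniform argument and it applies verbatim here to both the distinguishing and the predicting game.

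With those points in place, the proof itself is a one-line application of the contrapositive of Conjecture~\ref{conj:uniformB}. I do not expect a genuine mathematical obstacle here; the only care needed is in tracking the alternation of quantifiers — in particular confirming that the ``for all $(\ccobf,\simul)$, there exists $(\bdv,\cdv)$'' shape of the hypothesis is precisely what $\lnot B$ supplies, and that the adversaries one extracts for the predicting game are of the same type (and uniformity) as those assumed for the distinguishing game, so that the corollary can later be invoked as a black box. The same remarks will give Corollary for Conjecture~\ref{conj:identicalB} by replacing $\distrib_B$ with the identical distribution throughout.
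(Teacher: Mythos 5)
Your proposal is correct and matches the paper exactly: the paper offers no proof of this corollary beyond declaring it to be the contraposition of Conjecture~1, which is precisely your argument. Your additional care about the quantifier over the class of obfuscators (universal over the larger class of efficient functionality-preserving algorithms implies universal over the subclass of compute-and-compare obfuscators) and about the negation of ``$\le \negl$'' is sound and, if anything, more explicit than what the paper provides.
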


\begin{corollary}
  \label{coro:conj-identicalB}
  Let $\distrib_\ccprog$ a simultaneous compute-and-compare distribution, $\distrib_R$ the identical distribution over $\bin^{2 \ell}$, and $\distrib_B$ the identical distribution over $\bin^2$.
  Assume that, for all efficient and functionality preserving algorithm $\ccobf$, and for all efficient simulator $\simul$, there exists a pair of $\qpt$ adversaries $(\bdv, \cdv)$ winning the distinguishing game - parametrized by $\distrib_\ccprog, \distrib_R$, and $\distrib_B$ - with non-negligible advantage over $1/2$.
  Then there exists a pair of $\qpt$ adversaries $(\bdv, \cdv)$ winning the associated predicting game with non-negligible probability.
\end{corollary}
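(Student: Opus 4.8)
The plan is to obtain Corollary~\ref{coro:conj-identicalB} as the logical contrapositive of Conjecture~\ref{conj:identicalB}: no new quantitative argument is required, and the only thing to check is that negating the two quantified statements in the conjecture reproduces, up to strengthening the hypothesis, exactly the hypothesis and conclusion of the corollary.

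First I would rewrite Conjecture~\ref{conj:identicalB}, instantiated with $\distrib_R$ the identical distribution over $\bin^{2\ell}$ and $\distrib_B$ the identical distribution over $\bin^2$, as an implication $P \Rightarrow Q$. Here $P$ is the statement ``for every pair of $\qpt$ adversaries $(\bdv,\cdv)$, $\prob{\funcfont{Simul-Predict}_{\distrib_\ccprog}(\secparam,\bdv,\cdv)=1}\le\negl$'', and $Q$ is the statement ``there exist a compute-and-compare obfuscator $\ccobf$ and its associated simulator $\simul$ such that for every pair of $\qpt$ adversaries $(\bdv,\cdv)$, $\prob{\funcfont{Simul-Dist}^{(\ccobf,\simul)}_{\distrib_\ccprog,\distrib_R,\distrib_B}(\secparam,\bdv,\cdv)=1}\le \frac{1}{2}+\negl$''. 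Then I would unfold $\neg Q$: it says that for every compute-and-compare obfuscator $\ccobf$ together with its associated simulator $\simul$, some pair of $\qpt$ adversaries wins the simultaneous distinguishing game with advantage non-negligibly above $1/2$. The hypothesis of the corollary quantifies over \emph{all} efficient, functionality-preserving $\ccobf$ and \emph{all} efficient simulators $\simul$, which is a superset of the valid obfuscator/simulator pairs quantified over in $\neg Q$; hence the corollary's hypothesis implies $\neg Q$. Symmetrically, $\neg P$ is precisely the assertion that some pair of $\qpt$ adversaries wins the simultaneous predicting game with non-negligible probability, \ie{} the conclusion of the corollary. Since $P\Rightarrow Q$ is equivalent to $\neg Q\Rightarrow\neg P$, chaining (corollary hypothesis) $\Rightarrow \neg Q \Rightarrow \neg P \Rightarrow$ (corollary conclusion) completes the argument.

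The only delicate point, and the closest thing here to an obstacle, is the bookkeeping around the word ``non-negligible'': with the convention that $\negl$ denotes any $\secpar^{-\omega(1)}$ function, the negation of ``$\le \frac{1}{2}+\negl$ for all $\qpt$ adversaries'' is ``for some polynomial $p$, some $\qpt$ pair exceeds $\frac{1}{2}$ by at least $1/p(\secpar)$ for infinitely many $\secpar$'', and analogously for the predicting game; one must confirm that this is exactly what ``winning with non-negligible advantage over $1/2$'' and ``winning with non-negligible probability'' are taken to mean in the corollary statement. Granting the paper's standard conventions for negligible functions, this is immediate, the whole argument is purely propositional, and nothing further needs to be established.
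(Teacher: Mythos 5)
Your proposal is correct and matches the paper's approach exactly: the paper explicitly introduces this corollary (together with \cref{coro:conj-uniformB}) with the remark that ``we present these contrapositions as the following corollaries,'' i.e., the paper itself treats the corollary as nothing more than the contrapositive of \cref{conj:identicalB} and gives no further argument. Your careful note that the corollary's hypothesis quantifies over a larger class (all efficient, functionality-preserving algorithms and all efficient simulators, rather than only bona fide compute-and-compare obfuscator/simulator pairs) and is therefore a strengthening of $\neg Q$ is a point the paper passes over silently but that makes the ``contrapositive'' label precise, and your remark on unpacking the negation of ``negligible'' is the right sanity check.
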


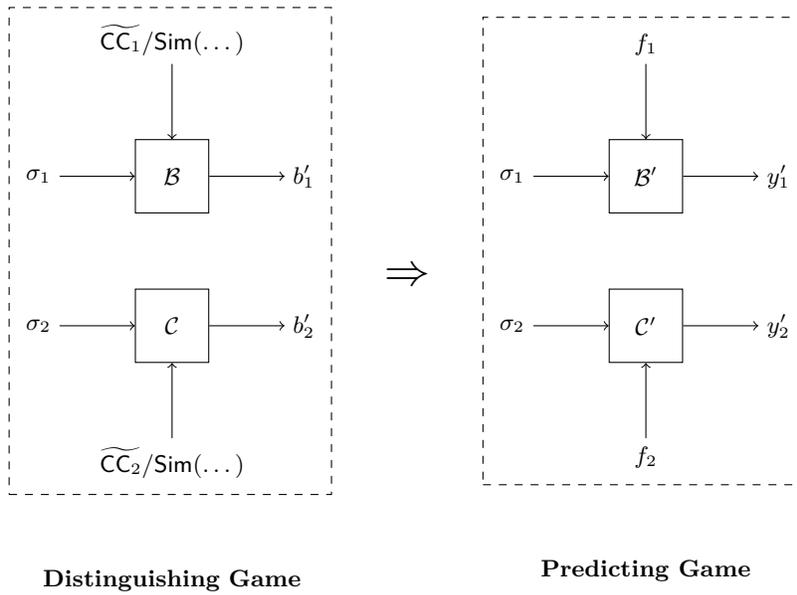
\begin{figure}
  \centering
\begin{tikzpicture}
  \node[draw, square, minimum width={width("AAAA")}] (Dbob) {$\bdv$};
  \node[draw, square, minimum width={width("AAAA")}, below=of Dbob] (Dcharlie) {$\cdv$};
  \node[above=of Dbob] (c1) {$\widetilde{\ccprog_1} / \simul(\dots)$};
  \node[below=of Dcharlie] (c2) {$\widetilde{\ccprog_2} / \simul(\dots)$};
  \draw[->] (c1) -- (Dbob);
  \draw[->] (c2) -- (Dcharlie);
  \node[left=of Dbob] (Ds1) {$\sigma_1$};
  \node[left=of Dcharlie] (Ds2) {$\sigma_2$};
  \draw[->] (Ds1) -- (Dbob);
  \draw[->] (Ds2) -- (Dcharlie);
  \node[right=of Dbob] (b1) {$b'_1$};
  \node[right=of Dcharlie] (b2) {$b'_2$};
  \draw[->] (Dbob) -- (b1);
  \draw[->] (Dcharlie) -- (b2);
  \node[draw, dashed, fit=(Ds1)(c1)(Ds2)(c2)(b1)] {};
  \node[below=of c2] {\textbf{Distinguishing Game}};

  \node[right=of b1] (rightofb1) {};
  \node[below=of rightofb1] {\LARGE$\Rightarrow$};

  \node[right=of rightofb1] (Ps1) {$\sigma_1$};
  \node[draw, square, minimum width={width("AAAA")}, right=of Ps1] (Pbob) {$\bdv'$};
  \node[draw, square, minimum width={width("AAAA")}, below=of Pbob] (Pcharlie) {$\cdv'$};
  \node[above=of Pbob] (f1) {$f_1$};
  \node[below=of Pcharlie] (f2) {$f_2$};
  \draw[->] (f1) -- (Pbob);
  \draw[->] (f2) -- (Pcharlie);
  \node[left=of Pcharlie] (Ps2) {$\sigma_2$};
  \draw[->] (Ps1) -- (Pbob);
  \draw[->] (Ps2) -- (Pcharlie);
  \node[right=of Pbob] (y1) {$y'_1$};
  \node[right=of Pcharlie] (y2) {$y'_2$};
  \draw[->] (Pbob) -- (y1);
  \draw[->] (Pcharlie) -- (y2);
  \node[draw, dashed, fit=(Ps1)(f1)(Ps2)(f2)(y1)] {};
  \node[below=of f2] {\textbf{Predicting Game}};
\end{tikzpicture}   \caption{Contraposition of the conjectures: if $\bdv$ and $\cdv$ win the distinguishing game on the left with significant advantage over $1/2$, then there exist $\bdv'$ and $\cdv'$ winning the predicting game on the right with non-negligible probability.
  $\widetilde{\ccprog_1}$ and $\widetilde{\ccprog_2}$ represent the compute-and-compare obfuscation of $\ccprog_1$ and $\ccprog_2$ with the same random coins.}
  \label{fig:cc-dist}
\end{figure}
\section{Single-Decryptor and Copy-Protection of Pseudorandom Functions}
\label{sec:sd}
In this section, we recall the notions of single-decryptor \cite{EPRINT:GeoZha20} and copy-protection of pseudorandom functions \cite{C:CLLZ21}.
These primitives are used later to prove the security of our constructions of copy-protection of point functions and unclonable encryption.
In \cite{C:CLLZ21}, the authors give a definition of anti-piracy security and provide a secure construction for these two primitives.
We give two variants of anti-piracy security of single-decryptor and of anti-piracy security of copy-protection of pseudorandom functions and show that their constructions are secure with respect to these two variants.

\subsection{Definition of a Single-Decryptor}
\begin{definition}[Single-Decryptor Encryption Scheme]
  A single-decryptor encryption scheme is a tuple of algorithms
  \(\scheme{\setup, \qkeygen, \allowbreak \enc, \dec}\) with the
  following properties:
  \begin{itemize}
    \item \((\sk, \pk) \leftarrow \setup(1^{\secpar})\).
      On input a security parameter \(\secpar\), the classical setup algorithm \(\setup\) outputs a classical secret key \(\sk\) and a public key \(\pk\).
    \item \(\qkey_{\sk} \leftarrow \qkeygen(\sk)\).
      On input a classical secret key \(\sk\), the quantum key generation
      algorithm \(\qkeygen\) outputs a quantum secret key \(\qkey_{\sk}\).
    \item \(c \leftarrow \enc(\pk, m)\).
      On input a public key \(\pk\) and a message~\(m\) in the message space
      \(\mspace\), the classical randomized encryption
      algorithm \(\enc\) outputs a classical ciphertext \(c\).
      We sometimes write \(\enc(\pk, m; r)\) to precise that we use the random bitstring $r$ as the randomness in the algorithm.
    \item \(m / \bot \leftarrow \dec(\qkey_{\sk}, c)\).
      On input a quantum secret key \(\qkey_{\sk}\), a classical ciphertext \(y\),
      the quantum decryption algorithm \(\dec\) outputs a classical message
      \(m\) or a decryption failure symbol \(\bot\).
  \end{itemize}
\end{definition}

\paragraph{Correctness.}
  We say that a single-decryptor scheme $\scheme{\setup, \qkeygen, \allowbreak \enc, \dec}$ has correctness if there exists a negligible function \(\negl[\cdot]\), such that for all
  \(\secpar \in \NN\), for all \(m \in \mspace\), the following holds:
  \begin{align*}
    \pr\left[
      \begin{array}{l}
        \dec(\qkey_{\sk}, c) = m
      \end{array}
      \ \middle\vert\
      \begin{array}{r}
        (\sk, \pk) \gets \setup(1^{\secpar}) \\
        \qkey_{\sk} \gets \qkeygen(\sk) \\
        c \leftarrow \enc(\pk, m)
      \end{array}
      \right] \geq 1 - \negl.
    \end{align*}

Note that correctness implies that an honestly generated quantum decryption key
can be used to decrypt correctly polynomially many times, from the gentle
measurement lemma~\cite{wilde2011classical}.

\paragraph{Anti-piracy security.}
We now define indistinguishable anti-piracy security of a single-decryptor scheme.
This notion is defined through a piracy game, in which a first adversary Alice is given a quantum decryption key and must split it and share it between two other adversaries, Bob and Charlie.
Bob and Charlie then receive an encryption of either the first, or the second message of a pair $(m_0, m_1)$ - known by the three adversaries - as a challenge, and must guess the encryption of which message they were given.
The game (and hence the anti-piracy security) is defined with respect to two distributions: $\distrib_B$ that yields two bits deciding which message will be encrypted for each test, and $\distrib_R$ that yields two strings to be used as the randomness for the encryption of each challenge.
In order to prove the security of our unclonable encryption and copy-protection schemes, we need to consider the security when $\distrib_R$ is the identical distribution\footnotemark and $\distrib_B$ is either the uniform or the identical distribution.
We denote the case where $\distrib_B$ is the uniform distribution as anti-piracy with respect to \emph{product distribution}, and the case where it is the identical distribution as anti-piracy with respect to \emph{identical distribution}.
\footnotetext{Recall (\cref{subsec:preli-distribs}) that a distribution is identical if it yields a pair of identical elements $(x, x)$ (where $x$ is sampled uniformly at random).}

\begin{remark}
  Note that the original anti-piracy security proposed in \cite{C:CLLZ21} is simply our definition where $\distrib_B$ and $\distrib_R$ are both uniform distributions.
\end{remark}

\begin{definition}[Piracy Game for Single-Decryptor]
  \label{def:sd-piracy-game}
  We define below a \emph{piracy game for single-decryptor}, parametrized by a single-decryptor scheme \(\mathcal{E} = \scheme{\setup, \qkeygen, \enc, \dec}\), and a security parameter \(\secpar\).
  This game is between a challenger and a triple of adversaries \((\adv, \bdv, \cdv)\).
  As the two variant of the games (with respect to product or identical distribution) differ only in the challenge phase, we describe below a different challenge phase for each variant.

  \begin{itemize}
    \item \textbf{Setup phase:}
    \begin{itemize}
      \item The challenger samples \((\sk, \pk) \gets \setup(\secparam)\).
      \item The challenger samples \(\qkey_\sk \gets \qkeygen(\sk)\).
      \item The challenger sends \((\pk, \qkey_\sk)\) to \(\adv\).
    \end{itemize}
    \item \textbf{Splitting phase:}
      $\adv$ prepares a bipartite quantum state $\sigma_{12}$, then sends $\sigma_1$ to $\bdv$, $\sigma_2$ to $\cdv$, and two pairs of messages $(m^1_0, m^1_1)$ and $(m^2_0, m^2_1)$ to the challenger.
      \item \textbf{Challenge phase (product distribution):}
      \begin{itemize}
        \item The challenger samples $(b_1, b_2) \sample \bin$, and $(r_1, r_2) \sample \bin^{\poly}$.
        \item The challenger sends $\enc(\pk, m^1_{b_1}; r_1)$ to $\bdv$, and $\enc(\pk, m^2_{b_2}; r_2)$ to $\cdv$.
      \end{itemize}
      \item \textbf{Challenge phase (identical distribution):}
      \begin{itemize}
        \item The challenger samples $b \sample \bin$, and $r \sample \bin^{\poly}$.
        \item The challenger sends $\enc(\pk, m^1_{b}; r)$ to $\bdv$, and $\enc(\pk, m^2_{b}; r)$ to $\cdv$.
      \end{itemize}
  \end{itemize}
  $\adv$, $\bdv$, and $\cdv$ win the game in the product distribution if $\bdv$ returns $b'_1 = b_1$ and $\cdv$ returns $b'_2 = b_2$; and in the identical distribution if both $\bdv$ and $\cdv$ return $b$.

  We denote the random variable that indicates whether a triple of adversaries
  \((\adv, \bdv, \cdv)\) win the game or not as \(\funcfont{SD-AP}^{\mathcal{E}}_{PD}(\secparam, \adv, \bdv, \cdv)\) or \(\funcfont{SD-AP}^{\mathcal{E}}_{ID}(\secparam, \adv, \bdv, \cdv)\) depending on which variant of the game we consider ($\text{PD}$ and $\text{ID}$ respectively denote the product and identical distributions).
\end{definition}

\begin{definition}[Indistinguishable Anti-Piracy Security]
  \label{def:sd-ap-cpa}
  A single-decryptor scheme $\mathcal{E}$ has \emph{indistinguishable anti-piracy security} with respect to the product distribution if no $\qpt$ adversary can win the piracy game above (with the product distribution challenge phase) with probability significantly greater than $1/2$.
  More precisely, for any triple of \(\qpt\) adversaries \((\adv, \bdv, \cdv)\):
  \[
    \prob{\funcfont{SD-AP}^{\mathcal{E}}_{PD}(\secparam, \adv, \bdv, \cdv) = 1} \leq 1/2 + \negl.
  \]

  Similarly, a single-decryptor scheme $\mathcal{E}$ has \emph{indistinguishable anti-piracy security} with respect to the identical distribution if no $\qpt$ adversary can win the piracy game above (with the identical distribution challenge phase) with probability significantly greater than $1/2$.
  More precisely, for any triple of \(\qpt\) adversaries \((\adv, \bdv, \cdv)\):
  \[
    \prob{\funcfont{SD-AP}^{\mathcal{E}}_{ID}(\secparam, \adv, \bdv, \cdv) = 1} \leq 1/2 + \negl.
  \]
\end{definition}

\subsection{Construction of Single-Decryptor}
In this section, we present the single-decryptor construction of \cite{C:CLLZ21}.

\begin{construction}{\cite{C:CLLZ21} Single-Decryptor Scheme}{sd}
  Given a security parameter \(\secpar\), let \(n := \secpar\) and \(\kappa\) be
  polynomial in \(\secpar\).
  \begin{itemize}
    \item \((\sk, \pk) \gets \setup(\secparam)\): ~
    \begin{itemize}
      \item Sample coset spaces \(\{A_i, s_i, s'_i\}_{i \in \intval{1, \kappa}}\) where
        each \(A_i\) is of dimension \(n / 2\);
      \item Construct the membership programs for each coset
        \(\{\obfmbr_{A_i + si}, \obfmbr_{A^\perp_i + s'i}\}_{i \in \intval{1, \kappa}}\);
      \item Return
        \(\left(\sk \coloneqq \{A_i, s_i, s'_i\}_{i \in \intval{1, \kappa}}, \pk \coloneqq \{\obfmbr_{A_i + si}, \obfmbr_{A^\perp_i + s'i}\}_{i \in \intval{1, \kappa}}\right)\).
    \end{itemize}
    \item \(\qkey_\sk \gets \qkeygen(\sk)\): ~
    \begin{itemize}
      \item Parse $\sk$ as \(\{A_i, s_i, s'_i\}_{i \in \intval{1, \kappa}}\);
      \item Return \(\bigotimes\limits_{i = 1}^\kappa \ket{A_{i, s_i, s'_i}}\).
    \end{itemize}
    \item \(c \gets \enc(\pk, m)\): ~
    \begin{itemize}
      \item Parse $\pk$ as \(\{\obfmbr_{A_i + si}, \obfmbr_{A^\perp_i + s'i}\}_{i \in \intval{1, \kappa}}\);
      \item Sample \(r \sample \bin^\kappa\);
      \item Generate an obfuscated program \(\iO(\Qmr)\) of program \(\Qmr\)
        described in~\cref{fig:sd-prog_p}.
      \item Return \(c \coloneqq \left(r, \iO(\Qmr)\right)\).
    \end{itemize}
    \item \(m/\bot \gets \dec(\qkey_{\sk}, c)\): ~
    \begin{itemize}
      \item Parse $\qkey_\sk$ as \(\bigotimes\limits_{i = 1}^\kappa \ket{A_{i, s_i, s'_i}}\) and
        \(c \gets \left(r, \iO(\Qmr)\right)\);
      \item For all \(i \in \intval{1, \kappa}\): if \(r_i = 1\), apply
        \(\Hgate^{\otimes n}\) to \(\ket{A_{i, s_i, s'_i}}\);
      \item Let \(\rho'\) be the resulting state, run \(\iO(\Qmr)\) coherently
        on \(\rho'\) and measure the final register to get \(m\);
      \item Return \(m\).
    \end{itemize}
  \end{itemize}
\end{construction}

\begin{figure}[!htbp]
  \label{fig:sd-prog_p}
  \centering
  \begin{tcolorbox}[arc=0pt,outer arc=0pt,colback=white,
    boxrule=0.2mm,left=0pt,right=0pt,top=0pt,bottom=0pt,hbox]
    \begin{varwidth}[t]{0.8\textwidth}
      \textbf{Hardcoded:} Programs $\{\progfont{P}_i\}_{i \in \intval{1, \kappa}}$ such that for all \(i \in \intval{1, \kappa}\):
      $\progfont{P}_i := \left\{\begin{array}{ll}
        \obfmbr_{A_i + si}        & \text{ if } r_i = 0\\
        \obfmbr_{A^\perp_i + s'i} & \text{ if } r_i = 1
      \end{array}\right.$.

      On input vectors \(u_1, u_2, \dots, u_\kappa\), do the following:
      \begin{enumerate}
        \item If for all \(i \in \intval{1, \kappa}\): \(\progfont{P}_i(u_i) = 1\), then
          output \(m\).
        \item Otherwise: output \(\bot\).
      \end{enumerate}
    \end{varwidth}
  \end{tcolorbox}
  \caption{Program \(\Qmr\).}
\end{figure}

\begin{remark}
  Note that the underlying $\iO$ algorithm used in the encryption algorithm of \cref{constr:sd} might use a random tape.
  In the following, we denote by $\enc(\pk, m; (r_{\iO}, r))$ the encryption of a message $m$ with the key $\pk$ and with random coins $r_{\iO}$ and $r$ respectively used for the $\iO$ algorithm and for the program $\Qmr$.
\end{remark}

\begin{theorem}
  \label{th:sd-ap-ir}
  Assuming the existence of post-quantum indistinguishability obfuscation,
  one-way functions, compute-and-compare obfuscation for the class of
  unpredictable distributions, and \cref{conj:uniformB}, \cref{constr:sd} has \emph{indistinguishable anti-piracy security with respect to the product distribution}.
\end{theorem}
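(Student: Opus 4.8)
The plan is to follow, almost verbatim, the single-decryptor anti-piracy proof of~\cite{C:CLLZ21}, replacing its two distribution-sensitive ingredients by the new tools of this paper: where~\cite{C:CLLZ21} invokes the distributional indistinguishability of a compute-and-compare obfuscator (equivalently, their extraction lemma, which only handles \emph{independent} challenges), I would instead invoke~\cref{coro:conj-uniformB}, the contrapositive of~\cref{conj:uniformB}; and where~\cite{C:CLLZ21} reduces to the original coset-state monogamy game, I would reduce to the $\kappa$-parallel computational monogamy-of-entanglement game with identical basis (\cref{th:new-moe-parallel}). Concretely, suppose towards a contradiction that a triple of $\qpt$ adversaries $(\adv, \bdv, \cdv)$ wins the product-distribution piracy game of~\cref{def:sd-piracy-game} with probability $1/2 + \epsilon$ for non-negligible $\epsilon$. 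The key structural observation is that decrypting $\enc(\pk, m; r)$ requires, for each index $i \in \intval{1, \kappa}$, a vector in $A_i + s_i$ when $r_i = 0$ and in $A_i^\perp + s'_i$ when $r_i = 1$; and since in the product-distribution game the two challenge ciphertexts are generated with the \emph{same} randomness $r$ (the identical coins distribution), this shared $r$ is precisely the shared basis choice of the identical-basis monogamy game.

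The first phase of the proof is the hybrid ladder of~\cite{C:CLLZ21}: using $\iO$ security, statistically injective puncturable PRFs, and the fact that the coset states $\bigotimes_i \ket{A_{i, s_i, s'_i}}$ are no longer needed after the splitting step, one rewrites the challenge ciphertext so that its obfuscated component becomes a compute-and-compare obfuscation $\ccobf(\ccprog[f_r, y, m])$, where $f_r$ maps $(u_1, \dots, u_\kappa)$ to the tuple of coset representatives $\bigl(\can_{A_i}(\Hgate^{r_i} u_i)\bigr)_i$ and $y = (\can_{A_i}(t_i))_i$ with $t_i = s_i$ if $r_i = 0$ and $t_i = s'_i$ otherwise. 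By the standard argument that such an obfuscation hides its embedded message (a triangle inequality against the simulator $\simul(\secparam, \ccprog.\params)$), the ability of $(\bdv, \cdv)$ to jointly guess which message was encrypted translates into a pair of adversaries winning the simultaneous distinguishing game of~\cref{def:simul-dist} with non-negligible advantage over $1/2$, for the simultaneous compute-and-compare distribution $\distrib_\ccprog$ that runs $\setup$ and then $\adv$ to produce $(\ccprog_1, \ccprog_2, \sigma_{12})$, with the shared $r$ sampled from the identical $\distrib_R$ and the two independent message-selection bits corresponding to the uniform $\distrib_B$ of~\cref{conj:uniformB}.

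Applying~\cref{coro:conj-uniformB} then produces a pair of $\qpt$ adversaries $(\bdv', \cdv')$ that win the associated simultaneous predicting game with non-negligible probability, i.e., on inputs $(f_r, \sigma_1)$ and $(f_r, \sigma_2)$ they both output the lock value $y$ --- the tuple of secret coset representatives. Interpreting each component of $y$ as a vector in $A_i + s_i$ or $A_i^\perp + s'_i$ according to $r_i$, the triple consisting of $\adv$ (which, given $\qkey_{\sk} = \bigotimes_i \ket{A_{i, s_i, s'_i}}$ and the obfuscated membership programs, prepares $\sigma_{12}$) together with $(\bdv', \cdv')$ is an admissible triple for the game of~\cref{def:new-moe-parallel} winning with non-negligible probability, contradicting~\cref{th:new-moe-parallel}.

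The main obstacle is the first phase. Two points need care: (i) reproducing the full~\cite{C:CLLZ21} hybrid ladder so that it goes through with the obfuscator promised by~\cref{conj:uniformB} and establishing the simultaneous distinguishing advantage against \emph{every} functionality-preserving $\ccobf$ and simulator $\simul$, as~\cref{coro:conj-uniformB} requires; and (ii) transferring the ``advantage $\epsilon$'' guarantee through these hybrids --- since~\cref{conj:uniformB} and~\cref{th:new-moe-parallel} speak of winning probabilities rather than of measured success probabilities, this step most likely requires the projective/threshold-implementation machinery of~\cite{C:CLLZ21}, applied separately to the (non-communicating) registers of $\bdv$ and $\cdv$, to argue that whenever the joint success exceeds $1/2 + \epsilon$ one can extract from both parties at once with non-negligible probability. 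Once the first phase is in place, the invocation of~\cref{coro:conj-uniformB} and the final reduction to~\cref{th:new-moe-parallel} are routine.
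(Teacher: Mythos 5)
Your proposal matches the paper's proof in all essentials: the same hybrid ladder rewriting the (identically-randomized) challenge ciphertexts as obfuscated compute-and-compare programs via $\iO$ security, the same invocation of \cref{coro:conj-uniformB} to turn the joint distinguishing advantage into simultaneous predictors of the lock value $c_r$, and the same final reduction to the $\kappa$-parallel computational monogamy game of \cref{th:new-moe-parallel}. The two concerns you flag at the end are handled more simply in the paper: the single-decryptor hybrids need only two $\iO$ hops (no puncturable-PRF machinery, which belongs to the copy-protection-of-PRF proof), and no threshold projective implementation is needed because \cref{conj:uniformB} is stated directly in terms of game-winning probabilities, so the standard "obfuscated vs.\ simulated" comparison suffices.
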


\begin{theorem}
  \label{th:sd-ap-id}
  Assuming the existence of post-quantum indistinguishability obfuscation,
  one-way functions, compute-and-compare obfuscation for the class of
  unpredictable distributions, and \cref{conj:identicalB}, \cref{constr:sd} has \emph{indistinguishable anti-piracy security with respect to the identical distribution}.
\end{theorem}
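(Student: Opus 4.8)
The plan is to argue by contraposition, mirroring the anti-piracy proof of \cite{C:CLLZ21} (which is the special case of the present statement where both $\distrib_B$ and $\distrib_R$ are uniform), the one genuinely new ingredient being that the extraction of coset vectors from a distinguishing adversary must now be carried out \emph{simultaneously} for $\bdv$ and $\cdv$ on correlated challenges, which is precisely what \cref{conj:identicalB} is designed to enable. Concretely, suppose a triple of \qpt{} adversaries $(\adv,\bdv,\cdv)$ wins the identical-distribution piracy game of \cref{def:sd-piracy-game} against \cref{constr:sd} with probability $1/2+\delta$ for some non-negligible $\delta$. I would transform it into a triple winning the $\kappa$-parallel computational monogamy-of-entanglement game with identical basis of \cref{def:new-moe-parallel} with non-negligible probability, contradicting \cref{th:new-moe-parallel}. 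Here it is essential that \cref{th:new-moe-parallel} bounds the winning probability by a \emph{negligible} quantity (not merely by $1/2+\negl$), so that a non-negligible extraction success suffices for a contradiction; and the \emph{parallel} game is the right target because decryption in \cref{constr:sd} requires producing a correct vector in each of the $\kappa$ coset spaces for the challenge string $r$.

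The first and main technical step is a sequence of hybrids, following the hidden-trigger and puncturing techniques of \cite{C:CLLZ21}, that rewrites each challenge ciphertext $\bigl(r,\iO(\Qmr)\bigr)$ --- where $\Qmr$ (\cref{fig:sd-prog_p}) releases $m$ exactly on $\kappa$-tuples lying in the prescribed cosets --- into a ciphertext whose obfuscated program internally evaluates a genuine compute-and-compare obfuscation $\ccobf(\ccprog[f_r,L_r,m])$, for a publicly computable function $f_r$ (built from $r$ and the membership programs in $\pk$) whose lock value $L_r$ encodes, for each $i\in\intval{1,\kappa}$, a vector of $A_i+s_i$ when $r_i=0$ and of $A_i^{\perp}+s'_i$ when $r_i=1$. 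This rewriting is applied to $\bdv$'s and to $\cdv$'s ciphertexts one at a time, each $\iO$-swap remaining valid in the presence of the other party's (possibly entangled) register and program, since $\iO$ indistinguishability holds against arbitrary quantum auxiliary input; the one-way-function hypothesis supplies the puncturable pseudorandom functions used along the way. Because the identical-distribution challenge phase feeds $\bdv$ and $\cdv$ the same randomness and the same secret bit, after this rewriting the only dependence on that bit is the message embedded in the two compute-and-compare programs, and the coins and bits of the resulting experiment are distributed exactly as the identical distributions $\distrib_R,\distrib_B$ of \cref{conj:identicalB}.

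Since $\ccprog[f_r,L_r,m^j_0]$ and $\ccprog[f_r,L_r,m^j_1]$ have identical parameters, a triangle inequality against the simulator turns $(\bdv,\cdv)$ --- which in the rewritten experiment jointly distinguish the two message bits with advantage $\delta-\negl$ --- into a pair winning the simultaneous distinguishing game of \cref{def:simul-dist}, for the simultaneous compute-and-compare distribution $\distrib_\ccprog$ induced by $\adv$ together with \cref{constr:sd} and with $\distrib_R,\distrib_B$ both identical, with non-negligible advantage over $1/2$, and this holds for every functionality-preserving $\ccobf$ and simulator $\simul$ used in the hybrid. Invoking \cref{coro:conj-identicalB} (the contrapositive of \cref{conj:identicalB}) then yields a pair $(\bdv',\cdv')$ winning the associated simultaneous predicting game of \cref{def:simul-predict} with non-negligible probability, i.e.\ outputting $L_r$ from $(f_r,\sigma_1)$ and $(f_r,\sigma_2)$ respectively. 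Decoding $L_r$ into the corresponding coset vectors (efficient given $\pk$), the triple formed by $\adv$ --- whose splitting produced $\sigma_{12}$ and is absorbed into $\distrib_\ccprog$ --- together with $\bdv'$ and $\cdv'$ wins the $\kappa$-parallel game of \cref{def:new-moe-parallel} with non-negligible probability, the contradiction sought; hence $\delta$ is negligible. The proof of \cref{th:sd-ap-ir} is identical, save that $\distrib_B$ is the uniform distribution and one invokes \cref{coro:conj-uniformB} in place of \cref{coro:conj-identicalB}.

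The main obstacle is the hybrid argument of the second paragraph: one must check that the $\iO$- and pseudorandom-function-based unwrapping of $\Qmr$ into a program wrapping a true compute-and-compare obfuscation goes through unchanged when the two \emph{correlated} ciphertexts are modified successively with the partner's entangled register present, and --- most delicately --- one must handle the fact that $\ccobf$ is only functionality preserving up to negligible per-input error while $\iO$ indistinguishability requires exact functional equivalence, which in \cite{C:CLLZ21} is circumvented by first replacing the coset membership programs by subspace-hiding obfuscations; finally one must verify that the distribution over $(\ccprog_1,\ccprog_2,\sigma_{12})$ so produced is exactly of the shape demanded by \cref{def:simul-dist}. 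Everything downstream of obtaining the simultaneous distinguisher is a direct combination of \cref{coro:conj-identicalB} with \cref{th:new-moe-parallel}.
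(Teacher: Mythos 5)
Your proposal follows the same overall strategy as the paper: hybridize the single-decryptor piracy game into a form where each challenge ciphertext contains a genuine compute-and-compare obfuscation of $\ccprog[\can_r, c_r, m^j_b]$, invoke the contrapositive of \cref{conj:identicalB} (i.e.\ \cref{coro:conj-identicalB}) to turn a distinguishing pair into a simultaneous-predicting pair $(\bdv',\cdv')$, then feed the extracted lock values into the $\kappa$-parallel game of \cref{def:new-moe-parallel} and contradict \cref{th:new-moe-parallel}. Your observation that the identical-distribution case is word-for-word the product-distribution case with \cref{conj:identicalB} swapped in for \cref{conj:uniformB} is exactly what the paper states.

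One remark on the hybrid argument: you invoke ``hidden-trigger and puncturing techniques'' and ``subspace-hiding obfuscations'' to get from $(r,\iO(\Qmr))$ to a ciphertext wrapping $\ccobf(\ccprog)$. The paper's single-decryptor hybrid chain (\cref{sec:sd-ap-proof}) is actually much shorter: two direct $\iO$ swaps, replacing $\iO(\Qmr)$ by $\iO(\ccprog[\can_r, c_r, m^j_b])$ and then by $\iO(\ccobf(\ccprog))$, each justified by functional equivalence. The hidden-trigger and puncturable-PRF machinery you describe belongs to the proof for the PRF copy-protection construction (\cref{app:cp-prf}), not to the single-decryptor proof — the $\Qmr$ program has no trigger branch. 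That said, your worry about the mismatch between $\ccobf$'s merely statistical functionality-preservation and $\iO$'s exact-equivalence requirement is a real subtlety that the paper elides in its $G_1 \to G_2$ hop; the paper simply asserts functional equivalence. Downstream of the hybrids, your argument — constructing the simultaneous distinguisher, invoking \cref{coro:conj-identicalB}, decoding lock values into coset vectors, and concluding via \cref{th:new-moe-parallel} — matches \cref{lem:cc-extract} and the subsequent reduction in the paper.
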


\subsection{Proof of \cref{th:sd-ap-ir}}
\label{sec:sd-ap-proof}

In this section, we prove \cref{th:sd-ap-ir}.
Our proof follows the structure of \cite{C:CLLZ21}.
We proceed in the proof through a sequence of hybrids.
For any pair of hybrids \((\pcgamename_i, \pcgamename_j)\), we say that \(\pcgamename_i\) is \(\emph{negligibly close to}\) \(\pcgamename_j\) if for triple of \(\qpt\) adversaries \((\adv, \bdv, \cdv)\), the probability that \((\adv, \bdv, \cdv)\) wins \(\pcgamename_i\) is negligibly close to the probability that they win \(\pcgamename_j\).

\begin{gamedescription}[nr=-1,arg=]
  \describegame This is the piracy game for the single-decryptor of \cref{constr:sd}, with respect to the product distribution.

  \begin{itemize}
    \item \textbf{Setup phase:}
    \begin{itemize}
      \item The challenger samples coset spaces \(\{A_i, s_i, s'_i\}_{i \in \intval{1, \kappa}}\) where
      each \(A_i\) is of dimension \(n / 2\).
      \item Then the challenger constructs the membership programs for each coset
      \(\{\obfmbr_{A_i + si}, \obfmbr_{A^\perp_i + s'i}\}_{i \in \intval{1, \kappa}}\).
      \item Finally, the challenger sends $\qkey_\sk := \{\ket{A_{i, s_i, s'_i}}\}_{i \in \intval{1, \kappa}}$ and $\pk := \{\obfmbr_{A_i + si}, \obfmbr_{A^\perp_i + s'i}\}_{i \in \intval{1, \kappa}}$ to $\adv$.
    \end{itemize}
    \item \textbf{Splitting phase:}
      $\adv$ prepares a bipartite quantum state $\sigma_{12}$, then sends $\sigma_1$ to $\bdv$, $\sigma_2$ to $\cdv$, and two pairs of messages $(m^1_0, m^1_1)$ and $(m^2_0, m^2_1)$ to the challenger.
    \item \textbf{Challenge phase:}
    \begin{itemize}
      \item The challenger samples random coins $r_{\iO} \sample \bin^{\poly}$ - to be used in the $\iO$ algorithm, and $r \sample \bin^{\poly}$ - to be used in the encryption algorithm, and two bits uniformly at random $b_1, b_2 \sample \bin$.
      \item The challenger computes $c_1 := (r, \progfont{Q}_1) \gets \enc(\pk, m^1_{b_1}; (r, r_\iO))$, and $c_2 := (r, \progfont{Q}_2) \gets \enc(\pk, m^2_{b_2}; (r, r_\iO))$ (note that the programs $\progfont{Q}_1$ and $\progfont{Q}_2$ have been obfuscated using $r_\iO$ as the randomness).
      \item The challenger sends $c_1$ to $\bdv$, and $c_2$ to $\cdv$.
    \end{itemize}
  \end{itemize}
  $\adv$, $\bdv$, and $\cdv$ win the game if $\bdv$ returns $b'_1 = b_1$ and $\cdv$ returns $b'_2 = b_2$.

  \describegame In this second hybrid, we replace the obfuscated programs $\progfont{Q}_1$ and $\progfont{Q}_2$ by obfuscated compute-and-compare programs.
  More formally, for $i \in \intval{1, \kappa}$, we define\footnotemark
  $$
  \can_{i, b}(\cdot) := \left\{\begin{array}{ll}
    \can_{A_i}(\cdot) &\text{ if } b = 0\\
    \can_{A^\perp_i}(\cdot) &\text{ if } b = 1
  \end{array}\right.
  \text{ and }
  c_{i, b} := \left\{\begin{array}{ll}
    \can_{A_i}(s_i) &\text{ if } b = 0\\
    \can_{A^\perp_i}(s'_i) &\text{ if } b = 1
  \end{array}\right.
  $$
  We similarly define $\can_r(u_1, ..., u_\kappa) = (\can_{1, r_1}(u_1), \dots, \can_{\kappa, r_\kappa}(u_\kappa))$ and $c_r = (c_{1, r_1}, \dots, c_{\kappa, r_\kappa})$ for any $r \in \bin^\kappa$.
  Finally, we write $\ccprog_1$ and $\ccprog_2$ to denote $\ccprog[\can_r, c_r, m^1_{b_1}]$ and $\ccprog[\can_r, c_r, m^2_{b_2}]$.

  Then, we replace $\progfont{Q}_1$ by $\iO(\ccprog_1)$ and $\progfont{Q}_2$ by $\iO(\ccprog_2)$.
  Because the programs $\progfont{Q}_1$ and $\progfont{Q}_2$ are respectively functionally equivalent to $\ccprog_1$ and $\ccprog_2$, then from $\iO$ security, $\pcgamename_0$ and $\pcgamename_1$ are negligibly close.
  \footnotetext{Recall that for a subspace $A$ and a vector $u$, $\can_A(u)$ - defined in \cref{par:can-coset} - is the coset representative of $A + u$.
  Recall also that $\can_A$ can be efficiently implemented given a description of $A$.}

  \describegame In this last hybrid, we replace $\iO(\ccprog_1)$ by $\iO(\ccobf(\secparam, \ccprog_1))$ and $\iO(\ccprog_2)$ by $\iO(\ccobf(\secparam, \ccprog_2))$.
  Because the programs $\ccprog_1$ and $\ccprog_2$ are respectively functionally equivalent to $\ccobf(\secparam, \ccprog_1)$ and $\ccobf(\secparam,\allowbreak \ccprog_2)$, then from $\iO$ security, $\pcgamename_1$ and $\pcgamename_2$ are negligibly close.
\end{gamedescription}

\paragraph{Leveraging compute-and-compare obfuscation.}
Before proceeding to the reduction, we introduce the two following lemmas.

\begin{lemma}
  \label{lem:cc-extract}
  Define the simultaneous compute-and-compare distribution $\distrib^\adv_\ccprog$, parametrized with a $\qpt$ algorithm $\adv$ for the hybrid $\pcgamename_2$ as follows:
  \begin{itemize}
    \item sample $\kappa$ cosets descriptions $(A_i, s_i, s'_i)_{i \in \intval{1, \kappa}}$;
    \item run $\adv$ on $\otimes_{i=1}^{\kappa}\ket{A_i, s_i, s'_i}$ to get $\sigma_{12}$ and $(m^1_0, m^1_1), (m^2_0, m^2_1)$;
    \item sample $r \sample \intval{1, \kappa}$ and $b_1, b_2 \sample \bin$;
    \item define the bipartite quantum state $\sigma'_{12}$ with $\sigma_1 \otimes \ketbra{b_1}{b_1}$ as first register and $\sigma_2 \otimes \ketbra{b_2}{b_2}$ as second register;
    \item return $\left(\ccprog[\can_r, c_r, m^1_{b_1}], \ccprog[\can_r, c_r, m^2_{b_2}], \sigma'_{12}\right)$.
  \end{itemize}

  Assume in addition that a triple of $\qpt$ algorithms $(\adv, \bdv, \cdv)$ win hybrid $\pcgamename_2$ with non-negligible advantage over $1/2$.
  Then there exists a pair of $\qpt$ algorithms $(\bdv', \cdv')$ that win the simultaneous predicting game, parametrized with $\distrib^\adv_\ccprog$, with non-negligible probability.
\end{lemma}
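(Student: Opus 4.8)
The plan is to produce the pair $(\bdv', \cdv')$ from the contrapositive of \cref{conj:uniformB} stated in \cref{coro:conj-uniformB}. Concretely, it is enough to show that for \emph{every} efficient functionality-preserving algorithm $\ccobf$ and every efficient simulator $\simul$ there is a pair of $\qpt$ adversaries winning the simultaneous distinguishing game associated to $\distrib^\adv_\ccprog$ — with $\distrib_R$ the identical distribution over $\bin^{2\ell}$ and $\distrib_B$ the uniform distribution over $\bin^2$ — with non-negligible advantage over $1/2$; \cref{coro:conj-uniformB} then hands us the winning predictors. The first step is the observation that the transition from $\pcgamename_1$ to $\pcgamename_2$ only invokes $\iO$ security, since $\Qmr$, $\ccprog_i$ and $\ccobf(\secparam, \ccprog_i)$ are pairwise functionally equivalent; hence it holds verbatim for any functionality-preserving $\ccobf$, and the triple $(\adv, \bdv, \cdv)$ of the hypothesis wins the variant of $\pcgamename_2$ built with an arbitrary such $\ccobf$ with advantage at least $\mu - \negl$ over $1/2$, where $\mu = 1/\poly$ denotes its advantage in $\pcgamename_2$. (Here I assume without loss of generality that $\adv$ forwards the message pairs $(m^1_0, m^1_1)$, $(m^2_0, m^2_1)$ inside the registers $\sigma_1, \sigma_2$, so that $\bdv, \cdv$ know them.)

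With $\ccobf, \simul$ fixed, I would build the distinguishing adversaries $(\tilde\bdv, \tilde\cdv)$ on top of $(\bdv, \cdv)$. In the distinguishing game, $\tilde\bdv$ receives $\sigma'_1 = \sigma_1 \otimes \ketbra{\beta_1}{\beta_1}$ — so it learns the bit $\beta_1$ that $\distrib^\adv_\ccprog$ hard-codes into the message of $\ccprog_1 = \ccprog[\can_r, c_r, m^1_{\beta_1}]$ — together with a program $P_1$ that is $\ccobf(\ccprog_1; t)$ if the hidden distinguishing bit $b_1$ equals $0$ and $\simul(\secparam, \ccprog_1.\params)$ if $b_1 = 1$. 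It forms $\iO(P_1)$ — which, when $b_1 = 0$, is exactly the ciphertext component $\bdv$ expects in $\pcgamename_2$ once the piracy bit is set to $\beta_1$ — runs $\bdv$ on it together with the remaining context ($r$ and $m^1_0, m^1_1$, all available as above), obtains a guess $\hat b_1$, and outputs $b'_1 := 0$ if $\hat b_1 = \beta_1$ and $b'_1 := 1$ otherwise (composing, if needed, with one non-uniform advice bit recording in which direction $\bdv$'s guess is correlated with $\beta_1$, so that $\tilde\bdv$ always bets in the favourable direction). The adversary $\tilde\cdv$ is built symmetrically from $\cdv$. Two structural facts feed the analysis: if $b_i = 1$, then $\simul(\secparam, \ccprog_i.\params)$ depends only on the input, output and circuit sizes, hence is independent of the uniform bit $\beta_i$, so $\hat b_i$ is independent of $\beta_i$; and if $b_i = 0$, the view of $\bdv$ (resp.\ $\cdv$) is distributed exactly as in the matching run of $\pcgamename_2$, so that, conditioned on $b_1 = b_2 = 0$, the pair jointly outputs $(0,0)$ — and therefore wins both sides — with probability at least $1/2 + \mu - \negl$.

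The step I expect to be the main obstacle is precisely the quantitative claim that $(\tilde\bdv, \tilde\cdv)$ beats $1/2$ in the simultaneous distinguishing game, rather than merely beating the ``both guess at random'' value $1/4$: a comparison-based reduction as above comfortably exceeds $1/4$ but, taken at face value, the four cases for $(b_1, b_2)$ average out only to a constant strictly below $1/2$, so the simultaneous structure must be exploited more carefully. This is where \cref{conj:uniformB} is genuinely used, and where the hypotheses $\distrib_R$ identical (so both shares see obfuscations produced with the \emph{same} coins $t$) and the precise shape of $\distrib^\adv_\ccprog$ enter. The approach I would pursue is to interpose the simulator as an intermediate hybrid on each side, i.e.\ $\ccobf(\ccprog[\can_r, c_r, m^1_0]; t) \approx_c \simul(\secparam, \ccprog_1.\params) \approx_c \ccobf(\ccprog[\can_r, c_r, m^1_1]; t)$, and to use the threshold-implementation machinery (as in the local case of \cite{C:CLLZ21}) to convert $(\bdv, \cdv)$'s joint advantage over the trivial ``concentrate all coset-opening power on one share'' strategy — whose success probability is exactly $1/2$ — into a genuinely simultaneous distinguishing advantage $\Omega(\mu)$ above $1/2$, before feeding the resulting predictors into \cref{coro:conj-uniformB} to obtain $(\bdv', \cdv')$ and conclude.
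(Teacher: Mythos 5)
Your reduction is the same as the paper's: invoke the contrapositive \cref{coro:conj-uniformB}, build simultaneous-distinguishing adversaries by running $\bdv$ (resp.\ $\cdv$) on the received program (re-wrapping it with $\iO$ together with $r$, a step you spell out and the paper elides), and have each output ``obfuscated'' exactly when the internal prediction matches the message-selection bit stored in the auxiliary register. So far this matches the paper's construction.

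But the quantitative worry you raise --- that the four $(b_1,b_2)$-cases average to a constant strictly below $1/2$ --- is a genuine gap, and the paper's own proof does not close it. The paper argues only that each of $\bdv'$ and $\cdv'$ has \emph{marginal} advantage over $1/2$, and then asserts the \emph{joint} success exceeds $1/2$, which does not follow. Concretely: conditioned on both getting the obfuscated program the pair wins with probability $1/2 + \mu$; conditioned on exactly one simulated side the pair wins with probability (Bob's or Charlie's marginal) $\times\ 1/2 \le 1/2$; conditioned on both simulated the pair wins with probability exactly $1/4$. Averaging, the joint success is at most $\tfrac{7}{16} + \tfrac{\mu}{4}$, which exceeds $1/2$ only if $\mu > 1/4$ --- so the $1/2$-threshold contrapositive simply cannot be applied when $\mu$ is a small inverse polynomial. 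The trivial baseline of the simultaneous distinguishing game with $\distrib_B$ uniform is $1/4$ (e.g.\ both always answer $0$), not $1/2$; with a $1/4 + \negl$ threshold in \cref{conj:uniformB} the constructed pair, which achieves at least $\tfrac{5}{16} + \tfrac{\mu}{2}$, would clear it and the argument would go through. As written, neither your sketch nor the paper resolves this mismatch, and the threshold-implementation route you gesture at is precisely the CLLZ machinery for the \emph{independent}-coins regime that the conjecture was introduced to replace, so it does not obviously apply here.
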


\begin{proof}
  The proof follows from the contraposition of \cref{conj:uniformB}.
  We construct a pair of $\qpt$ adversaries $(\bdv', \cdv')$ for the simultaneous distinguishing game parametrized with any efficient and functionality preserving $\ccobf$, any efficient simulator $\simul$, the simultaneous compute-and-compare distribution $\distrib^\adv_\ccprog$, the identical coins' distribution $\distrib_R$, and the uniform bits' distribution $\distrib_B$.
  \begin{itemize}
    \item $\bdv'$ receives the program $\progfont{C_1}$ from the challenger: $\progfont{C_1}$ is either a compute-and-compare obfuscation $\ccobf(\ccprog_1; r)$ - where $\ccprog_1 := \ccprog[\can_r, c_r, m^1_{b_1}]$ - or a simulated program $\simul(\ccprog_1.\params)$.
    $\bdv'$ also receives $\sigma_1 \otimes \ketbra{b_1}{b_1}$.
    \item $\bdv'$ runs $\bdv$ on $(\sigma_1, \progfont{C_1})$ to get the outcome $b'_1$.
    \item If $b'_1 = b_1$, $\bdv'$ returns $0$, otherwise $\bdv'$ return $1$.
  \end{itemize}
  $\cdv'$ is defined similarly by replacing the ``$1$'' indices by ``$2$'' indices.

  Because, $(\adv, \bdv, \cdv)$ win $\pcgamename_2$ with non-negligible advantage over $1/2$, and, when $\progfont{C_1}$ (resp.\@ $\progfont{C_2}$) is the obfuscated program, the challenge given to $\bdv$ (resp.\@ $\cdv$) comes from the same distribution as in $\pcgamename_2$, then $\bdv$ (resp.\@ $\cdv$) guesses $b_1$ (resp.\@ $b_2$) correctly with non-negligible advantage over $1/2$.
  On the other hand, when $\progfont{C_1}$ (resp.\@ $\progfont{C_2}$) is simulated, then it does not hold any information on $b_1$ (resp.\@ $b_2$), hence $\bdv$ (resp.\@ $\cdv$) guesses correctly only with probability $1/2$.
  Thus, $\bdv'$ and $\cdv'$ succeed in simultaneously distinguishing the simulated programs from the obfuscated ones with non-negligible advantage over $1/2$.
  Because this reasoning holds for all efficient functionality preserving $\ccobf$ and simulator $\simul$, then there is no compute-and-compare obfuscator for the distribution $\distrib_\ccprog$.
  Using the contraposition of \cref{conj:uniformB}, completes the proof.
\end{proof}

\paragraph{Reduction to monogamy-of-entanglement.}
We are now ready to proceed to the reduction.
Assume that there exists a triple of $\qpt$ algorithms $(\adv, \bdv, \cdv)$ that win the last hybrid $\game_2$ with non-negligible advantage over $1/2$.
Then, by \cref{lem:cc-extract}, there exists two $\qpt$ algorithms $\bdv'$ and $\cdv'$ that win the simultaneous predicting game defined above.
We construct a triple of $\qpt$ algorithms $(\adv'', \bdv'', \cdv'')$ for the $\kappa$-parallel computational version of monogamy-of-entanglement game (\cref{def:new-moe-parallel}).
\begin{itemize}
  \item $\adv''$, on input the coset states $\{\ket{A_i, s_i, s'_i}\}_{i \in \intval{1, \kappa}}$ and the obfuscated membership programs $\{\mbr_{A_i + s_i}, \allowbreak\mbr_{A^\perp_i + s'_i}\}_{i \in \intval{1, \kappa}}$:
  \begin{itemize}
    \item runs $\adv$ on these coset states and programs to get $\sigma_{12}$ and $(m^1_0, m^1_1), (m^2_0, m^2_1)$;
    \item sample $b_1, b_2 \sample \bin$;
    \item then prepares the bipartite quantum state $\sigma'_{12}$ with $\sigma_1 \otimes \ketbra{b_1}{b_1}$ as first register and $\sigma_2 \otimes \ketbra{b_2}{b_2}$ as second register;
    \item and finally sends $\sigma'_1$ to $\bdv''$ and $\sigma'_2$ to $\cdv''$.
  \end{itemize}
  \item $\bdv''$, on input $\sigma'_1$, the subspace descriptions $\{A_i\}_{i \in \intval{1, \kappa}}$ and the random basis $r$:
  \begin{itemize}
    \item construct a description of $\can_r$ (note that such a description can be computed efficiently given $\{A_i\}_{i \in \intval{1, \kappa}}$ and $r$);
    \item runs $\bdv'$ on $(\sigma_1, \can_r)$ to get the outcome $y'_1$;
    \item and finally returns $y'_1$.
  \end{itemize}
  \item $\cdv''$ is defined similarly as $\bdv''$ by replacing the ``$1$'' indices by ``$2$'' indices.
\end{itemize}
From \cref{lem:cc-extract}, we know that, with non-negligible probability, both $y'_1$ and $y'_2$ are the lock values of the compute-and-compare programs.
Then $\adv'', \bdv'', \cdv''$ win the game with non-negligible probability, contradicting \cref{th:new-moe-parallel} and concluding the proof.

\subsection{Proof of \cref{th:sd-ap-id}}
The proof of \cref{th:sd-ap-id} is almost the same as the one of \cref{th:sd-ap-ir}: we proceed with the same sequence of hybrids and use \cref{conj:identicalB} instead of \cref{conj:uniformB} to finish the proof.

\subsection{Copy-Protection of Pseudorandom Functions}
\label{sec:cp-prf}

In this subsection, we formally define copy-protection of pseudorandom function \cite{C:CLLZ21} and its correctness and anti-piracy notions.

\begin{definition}[Pseudorandom Function Copy-Protection Scheme]
  \label{def:cp-prf}
    A pseudorandom function copy-protection scheme for the pseudorandom function $\prf: \kspace \times \xspace \rightarrow \yspace$ (where $\yspace \subseteq \bin^m$) associated with the key generation procedure $\keygen$ is a tuple of algorithms $\scheme{\keygen, \cpprotect, \eval}$ with the following properties:
    \begin{itemize}
        \item $\key \gets \keygen(\secparam)$.
        This is the key generation procedure of the underlying pseudorandom function: on input a security parameter, the $\keygen$ algorithm outputs a key $\key$.
        \item $\qkey_\key \leftarrow \cpprotect(\secparam, \key)$.
          On input a pseudorandom function key $\key \in \kspace$, the quantum protection algorithm outputs a quantum state $\qkey_\key$.
        \item $y \gets \eval(\secparam, \rho, x)$.
          On input a quantum state $\rho$ and an input $x \in \xspace$, the quantum evaluation algorithm outputs $y \in \yspace$.
    \end{itemize}
\end{definition}

\paragraph{Correctness.}
A pseudorandom function copy-protection scheme has \emph{correctness} if the quantum protection of
any key $\key$ computes $\prf(\key, \cdot)$ on every $x$ with overwhelming probability.
$$
    \forall \key \in \kspace,\ \forall x \in \xspace,\ \pr\left[\eval(\secparam, \qkey_\key, x) = \prf(\key, x)\ :\ \qkey_\key \gets \cpprotect(\secparam, \key)\right] = 1 - \negl
$$

\paragraph{Anti-piracy security.}
We now define anti-piracy security of a pseudorandom function copy-protection scheme similarly as the anti-piracy of single-decryptor.

Anti-piracy security is defined through the following piracy game, in which the adversary is provided a quantum key and a pseudorandom function image, and must ``split'' the quantum key such that both shares can be used to distinguish between the input of this image or another ``fake'' input sampled uniformly at random.
More precisely, the game is played by a triple of adversaries (Alice, Bob and Charlie): Alice splits the quantum state and, in order to test both shares, they are sent to Bob and Charlie as well as the challenge (image's input of fake input) who are asked to guess which type of input they received.
We define two different variants for this security: security with respect to the \emph{product distribution} and security with respect to the \emph{identical distribution}.
When considering security with respect to the product distribution, for each share, a challenge is sampled independently to be either the image's input, or a freshly sampled fake input (both with probability $1/2$).
When considering security with respect to the identical distribution on the other hand, the challenge is still either image's input or a fake input, but it is the same for both Bob and Charlie.

\begin{definition}[Piracy Game for Pseudorandom Function Copy-Protection]
  \label{def:cp-prf-piracy-game}
  We define below a \emph{piracy game for pseudorandom function copy-protection}, parametrized by a pseudorandom function copy-protection scheme $\scheme{\keygen, \cpprotect, \eval}$ and a security parameter $\secpar$.
  As the two variants of the game (with respect to the product distribution or to the identical distribution) differ only in the challenge phase, we define a different challenge phase for each variant.
  This game is between a challenger and a triple of adversaries $(\adv, \bdv, \cdv)$.

  \begin{itemize}
    \item \textbf{Setup phase:}
    \begin{itemize}
      \item The challenger samples $\key \in \keygen(\secparam)$ and computes $\qkey_\key \gets \cpprotect(\secparam, \key)$.
      \item The challenger samples $x \sample \xspace$ and computes $y := \prf(\key, x)$.
      \item The challenger sends $\qkey_\key$ and $y$ to $\adv$.
    \end{itemize}
    \item \textbf{Splitting phase:}
      $\adv$ prepares a bipartite quantum state $\sigma_{12}$, then sends $\sigma_1$ to $\bdv$ and $\sigma_2$ to $\cdv$.
    \item \textbf{Challenge phase (product distribution):}
    \begin{itemize}
      \item The challenger samples two bits $b_1, b_2 \sample \bin$, and two inputs $x_1, x_2 \sample \xspace$.
      \item If $b_1 = 0$, the challenger sends $x$ to $\bdv$; otherwise, the challenger sends $x_1$.
      \item Similarly, if $b_2 = 0$, the challenger sends $x$ to $\cdv$; otherwise, the challenger sends $x_2$.
    \end{itemize}
    \item \textbf{Challenge phase (identical distribution):}
    \begin{itemize}
      \item The challenger samples a bit $b \sample \bin$, and an input $x_0 \sample \xspace$.
      \item If $b = 0$, the challenger sends $x$ to both $\bdv$ and $\cdv$; otherwise, the challenger send them $x_0$.
    \end{itemize}
  \end{itemize}
  $\adv$, $\bdv$, and $\cdv$ win the game with respect to the product distribution if $\bdv$ returns $b_1$ and $\cdv$ returns $b_2$; and with respect to the identical distribution if both $\bdv$ and $\cdv$ return $b$.

  We denote the random variable that indicates whether a triple of adversaries
  \((\adv, \bdv, \cdv)\) win the game or not as \(\funcfont{CP-PRF-AP}^{\scheme{\keygen, \cpprotect, \eval}}_{\text{PD}}(\secparam, \adv, \bdv, \cdv)\) or \(\funcfont{CP-PRF-AP}^{\scheme{\keygen, \cpprotect, \eval}}_{\text{ID}}(\secparam, \adv, \bdv, \cdv)\) depending on if we consider the security with respect to the product distribution (PD) or to the identical distribution (ID).
\end{definition}

\begin{definition}[Indistinguishable Anti-Piracy Security]
  \label{def:cp-prf-ap}
  A pseudorandom function copy-protection scheme $\scheme{\keygen, \cpprotect, \eval}$ has \emph{indistinguishable anti-piracy security} with respect to the product distribution if no $\qpt$ adversary can win the piracy game above (with identical distribution challenge phase) with probability significantly greater than $1/2$.
  More precisely, for any triple of \(\qpt\) adversaries \((\adv, \bdv, \cdv)\):
  \[
    \prob{\funcfont{CP-PRF-AP}^{\scheme{\keygen, \cpprotect, \eval}}_{\text{PD}}(\secparam, \adv, \bdv, \cdv) = 1} \leq 1/2 + \negl.
  \]

  Furthermore, we say that such a scheme has \emph{indistinguishable anti-piracy security} with respect to the product distribution if no $\qpt$ adversary can win the piracy game above (with identical distribution challenge phase) with probability significantly greater than $1/2$.
  More precisely, for any triple of \(\qpt\) adversaries \((\adv, \bdv, \cdv)\):
  \[
    \prob{\funcfont{CP-PRF-AP}^{\scheme{\keygen, \cpprotect, \eval}}_{\text{ID}}(\secparam, \adv, \bdv, \cdv) = 1} \leq 1/2 + \negl.
  \]
\end{definition}

\begin{theorem}
  \label{th:cp-prf-ap}
  Assuming the existence of post-quantum indistinguishability obfuscation,
  one-way functions, compute-and-compare obfuscation for the class of
  unpredictable distributions, and \cref{conj:uniformB} (resp.\@ \cref{conj:identicalB}), there exists a pseudorandom function copy-protection scheme with indistinguishable anti-piracy security with respect to the product distribution (resp.\@ with respect to the identical distribution).
\end{theorem}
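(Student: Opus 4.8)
The plan is to take the copy-protection scheme for puncturable pseudorandom functions of \cite{C:CLLZ21} — instantiable from one-way functions and post-quantum indistinguishability obfuscation, using in addition the statistically injective and extracting variants of puncturable PRFs recalled in \cref{def:stat-inj-prf} and \cref{def:extr-prf} — and to show that it satisfies \cref{def:cp-prf-ap} for both challenge distributions, under \cref{conj:uniformB} and \cref{conj:identicalB} respectively. Recall that in that scheme $\cpprotect(\secparam, \key)$ outputs a family of coset states $\bigotimes_i \ket{A_{i, s_i, s_i'}}$ together with an obfuscated program that, on an input $x$, derives a pseudorandom selector $r \in \bin^\kappa$ from $x$, coherently verifies that the quantum registers it is run on contain vectors lying in $A_i + s_i$ when $r_i = 0$ and in $A_i^\perp + s_i'$ when $r_i = 1$, and if so outputs $\prf(\key, x)$. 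Correctness is immediate: functionality preservation of $\iO$, together with the fact that the coset states pass the membership checks once the selector-controlled Hadamards are applied, guarantees $\eval(\secparam, \qkey_\key, x) = \prf(\key, x)$ with overwhelming probability, exactly as in \cite{C:CLLZ21}.

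For anti-piracy I would argue through a sequence of hybrids mirroring \cref{sec:sd-ap-proof}. Starting from the piracy game $\game_0$ of \cref{def:cp-prf-piracy-game}, the first hybrids puncture the PRF keys at the challenge input $x$ (and, in the product case, at the fake inputs as well) inside the obfuscated evaluation program — indistinguishable by $\iO$ security, since input/output behaviour is preserved — and then replace the image $y \coloneqq \prf(\key, x)$ by a uniformly random value, which is indistinguishable because the challenge input is drawn uniformly from $\xspace$ and $\prf$ is extracting (equivalently, pseudorandom at the punctured point), with statistical injectivity ensuring that this random value is whp not the image of a fake input. In the resulting hybrid, the obfuscated program restricted to the challenge point is, up to $\iO$, a compute-and-compare program $\ccprog[\can_r, c_r, y]$ over the coset representatives, precisely as in hybrid $\game_2$ of \cref{sec:sd-ap-proof}, and a successful triple of pirates induces a pair of non-communicating adversaries that, for each of their shares, distinguishes this obfuscated compute-and-compare program from its simulation. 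At this point the induced coins' distribution on the two shares is the identical one (the obfuscator is run once, with shared randomness, for the common coset structure), while the bits' distribution — whether Bob's and Charlie's real-versus-fake flags agree — is uniform in the product case and identical in the identical case.

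Applying the contrapositive of \cref{conj:uniformB} (resp.\@ \cref{conj:identicalB}), \ie{} the analogue of \cref{lem:cc-extract}, then yields two non-communicating $\qpt$ predictors that, given $\can_r$ and their shares, output the lock value, namely the tuple of coset representatives of the challenge cosets. Feeding these into a triple of adversaries for the $\kappa$-parallel computational monogamy-of-entanglement game with identical basis (\cref{def:new-moe-parallel}) — $\adv''$ runs the pirate $\adv$ on the coset states and membership programs and relays the shares, while $\bdv''$ and $\cdv''$ reconstruct $\can_r$ from the subspace descriptions together with the received basis and run the two predictors — produces a triple winning that game with non-negligible probability, contradicting \cref{th:new-moe-parallel}. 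The identical-distribution case maps most transparently onto \cref{def:new-moe-parallel}: when both pirates receive the marked input, the single selector it determines (now uniform after puncturing) plays exactly the role of the common challenge basis of the identical-basis monogamy game, forcing both pirates to return vectors in the \emph{same} coset space for each of the $\kappa$ instances — which is why \cref{conj:identicalB}, rather than \cref{conj:uniformB}, is the relevant hypothesis there and why the naive reduction to the \cite{C:CLLZ21} monogamy game (distinct cosets for Bob and Charlie) does not apply.

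The main obstacle is the simultaneous-extraction step, which is exactly the content of the conjectures and is genuinely non-trivial when the two pirates' challenges are correlated: one cannot analyse the two distinguishers independently, so ``Bob distinguishes his program from its simulation'' does not obviously give ``Bob predicts his lock value'', and this gap is what the conjectures are postulated to bridge. Making the argument rigorous requires (i) defining the induced simultaneous compute-and-compare distribution $\distrib_\ccprog^{\adv}$ and checking, via \cref{th:new-moe-parallel}, that it is simultaneously unpredictable, so that the contrapositive of the conjecture is not vacuous; and (ii) in the product-distribution case, verifying carefully that the correlation pattern of the challenges seen by the two pirates — which depends on the independent bits and the puncturing points — collapses, after the hidden-trigger reformulation, to the shared-coins/independent-bits configuration covered by \cref{conj:uniformB}. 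This bookkeeping is the most delicate part and is where I expect the bulk of the work to lie.
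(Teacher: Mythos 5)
Your proposal identifies the right overall chain (hybrids $\to$ compute-and-compare obfuscation $\to$ contrapositive of the conjecture $\to$ $\kappa$-parallel monogamy game of \cref{th:new-moe-parallel}), and that is indeed where the content of the theorem lives. However, you deviate from the paper in the key intermediate step, and the deviation hides a real gap. The paper does \emph{not} puncture $\key_1$ at the challenge point inside $\obfmbr$: it proves \cref{th:cp-prf-ap} as a clean reduction to the single-decryptor anti-piracy theorems (\cref{th:sd-ap-ir,th:sd-ap-id}), and the mechanism that makes this reduction possible is the \emph{hidden-trigger} technique. Concretely, the first hybrid replaces each challenge input $x$ by $\genTrigger(x^{(0)}, y)$ (justified by \cref{lem:7.17}, not by $\iO$ security), so that the challenge input itself carries an embedded program $\progfont{Q}$ that is exactly a single-decryptor ciphertext; the second hybrid uses the extracting property of $\prf_1$ to randomize the images; the third merely reorders sampling so the challenge phase literally runs $\sd.\enc$. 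Only then, inside the single-decryptor proof of \cref{sec:sd-ap-proof}, do the compute-and-compare programs $\ccprog[\can_r, c_r, \cdot]$, the conjectures, and the monogamy game appear.

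The puncturing step as you describe it fails on two counts. First, puncturing $\key_1$ at $x$ inside the obfuscated program is not functionality preserving (the program must still output $\prf_1(\key_1,x)$ on the honest input), so $\iO$ security alone does not justify the hop; you would need to hardcode the behaviour at $x$, which brings you back to embedding a compute-and-compare program somewhere. Second, and more importantly, the simultaneous distinguishing game of \cref{def:simul-dist} requires $\bdv$ and $\cdv$ to \emph{each} receive their own (identically obfuscated) challenge program at challenge time; if the compute-and-compare program is hardcoded once into the $\obfmbr$ handed to $\adv$ before the split, there is no per-player challenge object and the induced coins'/bits' distributions you invoke are not even defined. The hidden-trigger encoding is precisely what delivers a fresh obfuscated program to each of $\bdv$ and $\cdv$ inside their challenge input, with shared obfuscation coins (hence the identical coins' distribution) and independent versus identical real/fake flags (hence \cref{conj:uniformB} versus \cref{conj:identicalB}). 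You mention the ``hidden-trigger reformulation'' only in passing at the end; it needs to replace, not supplement, the puncturing hybrid. The rest of your argument — correctness, the role of the extracting PRF, the reduction to \cref{def:new-moe-parallel}, and the identification of the simultaneous-extraction step as the locus of the conjectures — matches the paper.
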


We present the construction that achieves this security in the two variants and the corresponding proof in \cref{app:cp-prf}.
\section{Copy-Protection of Point Functions in the Plain Model}
\label{sec:copy-prot}
In this section, we present the definition of copy-protection of point
functions \cite{aaronson2009quantum}.
Then we present a construction of this primitive from \cite{chevalier2023semi}.
This construction was proven secure for a non-colliding anti-piracy game's
challenge distribution.
We prove that the same construction is actually secure for the product challenge
distribution as well as the identical challenge distribution.\footnotemark
Through all this section, $\secpar$ denotes a security parameter and $n = \poly$.
\footnotetext{We actually present a more general version of the construction of \cite{chevalier2023semi}.}

\subsection{Definitions}
We consider copy-protection of point functions for a family of point functions $\{\pf_y\}_{y \in \bin^n}$, and denote $\pf_y$ the point function with point $y$, that is the function such that
$$
  \pf_y(x) = \left\{\begin{array}{ll}
    1 & \text{ if } x = y\\
    0 & \text{ otherwise }
  \end{array}\right.
$$

\begin{definition}[Point Functions Copy-Protection Scheme]
  \label{def:cp-general}
  A copy-protection scheme of a family of point functions $\{\pf\}_{y \in \bin^n}$ is a tuple of algorithms $\scheme{\cpprotect, \eval}$ with the following properties:
  \begin{itemize}
    \item $\qkey_y \leftarrow \cpprotect(\secparam, y)$.
      On input a point $y \in \bin^n$, the quantum protection algorithm outputs a quantum state $\qkey_y$.
    \item $b \gets \eval(\secparam, \rho, x)$.
      On input a quantum state $\rho$ and an input $x \in \bin^n$, the quantum evaluation algorithm outputs a bit $b \in \bin$.
  \end{itemize}
\end{definition}

\paragraph{Correctness.}
A point functions copy-protection scheme has \emph{correctness} if the quantum protection of
any point function $\pf_y$ computes $\pf_y$ on every $x$ with overwhelming probability.
$$
    \forall y \in \bin^n,\ \forall x \in \bin^n,\ \pr\left[\eval(\secparam, \qkey_y, x) = \pf_y(x)\ :\ \qkey_y \gets \cpprotect(\secparam, y)\right] = 1 - \negl
$$

\paragraph{Anti-piracy security.}
We now define anti-piracy security of a point functions copy-protection scheme.
This notion is defined through a piracy game, in which the adversary is given a quantum copy-protection of a point function $\prf_y$ and must split it such that both shares can be used to evaluate the function correctly.
More precisely, the game is played by a triple of adversaries (Alice, Bob and Charlie): Alice splits the quantum state and, in order to test both shares, they are sent to Bob and Charlie as well as the challenge (a point) who are asked to return the evaluation of the function on this point.
We consider two variants of this security notion, namely \emph{anti-piracy security with respect to the product distribution} and \emph{anti-piracy security with respect to the identical distribution}.
In the first variant, for each share, a challenge is sampled independently to be either the point $y$ or another freshly sampled random point; in the second variant, the challenges are either both $y$, or both $x$.

\begin{definition}[Piracy Game for Copy-Protection of Point Functions]
  \label{def:cp-piracy-game}
  We define below a \emph{piracy game for copy-protection of point functions}, parametrized by a copy-protection scheme $\schemefont{CP} = \scheme{\cpprotect, \eval}$ and a security parameter $\secpar$.
  This game is between a challenger and a triple of adversaries $(\adv, \bdv, \cdv)$.
  As the two variants of the game differ only in the challenge phase, we describe below a different challenge phase for each variant.
  \begin{itemize}
    \item \textbf{Setup phase:}
    \begin{itemize}
      \item The challenger samples $y \in \bin^n$ and computes $\rho_y \gets \cpprotect(\secparam, y)$.
      \item The challenger then sends $\rho_y$ to $\adv$.
    \end{itemize}
    \item \textbf{Splitting phase:}
      $\adv$ prepares a bipartite quantum state $\sigma_{12}$, then sends $\sigma_1$ to $\bdv$ and $\sigma_2$ to $\cdv$.
    \item \textbf{Challenge phase (product distribution):}
    \begin{itemize}
      \item The challenger samples $b_1, b_2 \sample \bin$, and $x_1, x_2 \sample \bin^n$.
      \item If $b_1 = 0$, the challenger sends $y$ to $\bdv$; otherwise the challenger sends $x_1$.
      \item Similarly, if $b_2 = 0$, the challenger sends $y$ to $\cdv$; otherwise the challenger sends $x_2$.
    \end{itemize}
    \item \textbf{Challenge phase (identical distribution):}
    \begin{itemize}
      \item The challenger samples $b \sample \bin$, and $x \sample \bin^n$.
      \item If $b = 0$, the challenger sends $y$ to both $\bdv$ and $\cdv$; otherwise, the challenger send them $x$.
    \end{itemize}
  \end{itemize}
  $\adv$, $\bdv$, and $\cdv$ win the game with respect to the product distribution if $\bdv$ returns $b'_1 = b_1$ and $\cdv$ returns $b'_2 = b_2$; and with respect to the identical distribution if both $\bdv$ and $\cdv$ return $b$.

  We denote the random variable that indicates whether a triple of adversaries
  \((\adv, \bdv, \cdv)\) win the game or not as $\funcfont{CP-AP}^{\scheme{\cpprotect, \eval}}_{\text{PD}}(\secparam, \adv, \bdv, \cdv)$ or as $\funcfont{CP-AP}^{\scheme{\cpprotect, \eval}}_{\text{ID}}(\secparam, \adv, \bdv, \cdv)$ depending on which variant of the game we consider ($\text{PD}$ and $\text{ID}$ respectively denote the product and identical distributions).
\end{definition}

\begin{definition}[Anti-Piracy Security]
  A point functions copy-protection scheme $\scheme{\cpprotect, \eval}$ has \emph{anti-piracy security} with respect to the product distribution if no triple of $\qpt$ adversaries can win the piracy game above (with the product distribution challenge phase) with probability significantly greater than $1/2$.
  More precisely, for any triple of \(\qpt\) adversaries \((\adv, \bdv, \cdv)\):
  \[
    \prob{\funcfont{CP-AP}^{\scheme{\cpprotect, \eval}}_{\text{PD}}(\secparam, \adv, \bdv, \cdv) = 1} \leq 1/2 + \negl.
  \]

  Similarly, we say that a point functions copy-protection scheme $\scheme{\cpprotect, \eval}$ has \emph{anti-piracy security} with respect to the identical distribution if no $\qpt$ adversary can win the piracy game above (with the identical distribution challenge phase) with probability significantly greater than $1/2$.
  More precisely, for any triple of \(\qpt\) adversaries \((\adv, \bdv, \cdv)\):
  \[
    \prob{\funcfont{CP-AP}^{\scheme{\cpprotect, \eval}}_{\text{ID}}(\secparam, \adv, \bdv, \cdv) = 1} \leq 1/2 + \negl.
  \]
\end{definition}

\subsection{Construction}

In this subsection, we present a construction for copy-protection of point functions.
This construction uses a pseudorandom functions copy-protection scheme $\prf.\scheme{\keygen, \cpprotect, \eval}$.

\begin{construction}{Copy-Protection of Point Functions}{cp-cc-from-cp-prf}
  \begin{itemize}
    \item $\cpprotect(\secparam, y)$:
    \begin{itemize}
      \item Sample $\key \gets \prf.\keygen(\secparam)$.
      \item Compute $\rho_\key \gets \prf.\cpprotect(\key)$.
      \item Compute $z := \prf(\key, y)$.
      \item Return $(\rho_\key, z)$.
    \end{itemize}
    \item $\eval(\secparam, (\rho, z), x)$:
    \begin{itemize}
      \item Compute $z' \gets \prf.\eval(\rho, x)$.
      \item If $z' = z$: return 1.
      \item Otherwise: return 0.
    \end{itemize}
  \end{itemize}
\end{construction}

\begin{theorem}
  \label{th:cp-ap-pd}
  Assuming the underlying pseudorandom functions copy-protection scheme has anti-piracy security with respect to the product distribution, \cref{constr:cp-cc-from-cp-prf} has correctness and anti-piracy security with respect to the product distribution.
\end{theorem}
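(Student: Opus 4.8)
The plan is to handle correctness and anti-piracy separately; each reduces directly to the corresponding property of the underlying pseudorandom function copy-protection scheme, with anti-piracy going through a tight, distribution-preserving reduction.

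\paragraph{Correctness.} First I would derive correctness from correctness of $\prf.\scheme{\keygen, \cpprotect, \eval}$ together with statistical injectivity of $\prf$ (which, following \cite{chevalier2023semi}, we impose on the underlying pseudorandom function; see \cref{def:stat-inj-prf}). Fix $x, y \in \bin^n$ and write $(\qkey_\key, z) \gets \cpprotect(\secparam, y)$, so $\key \gets \prf.\keygen(\secparam)$, $\qkey_\key \gets \prf.\cpprotect(\key)$ and $z = \prf(\key, y)$. By correctness of the PRF copy-protection scheme, $\prf.\eval(\secparam, \qkey_\key, x) = \prf(\key, x)$ with overwhelming probability, so $\eval(\secparam, (\qkey_\key, z), x)$ outputs $1$ exactly when $\prf(\key, x) = \prf(\key, y)$. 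If $x = y$ this gives $1 = \pf_y(x)$; if $x \neq y$, then with overwhelming probability over $\key$ the map $\prf(\key, \cdot)$ is injective, whence $\prf(\key, x) \neq \prf(\key, y) = z$ and the output is $0 = \pf_y(x)$. A union bound over the two negligible failure events yields correctness.

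\paragraph{Anti-piracy.} The key observation is that an honest copy-protection $(\qkey_\key, z)$ of a point $y$ produced by \cref{constr:cp-cc-from-cp-prf} is literally a copy-protected PRF key together with the PRF image of $y$ --- which is exactly the object Alice receives in the piracy game for PRF copy-protection (\cref{def:cp-prf-piracy-game}), where the challenger samples a uniform preimage (playing the role of the point), copy-protects the key, and hands over the pair. So, given a triple $(\adv, \bdv, \cdv)$ that wins $\funcfont{CP-AP}_{\mathrm{PD}}$ for \cref{constr:cp-cc-from-cp-prf} with probability $1/2 + \varepsilon$, I would construct $(\adv', \bdv', \cdv')$ against the product-distribution piracy game for the PRF copy-protection scheme as follows: $\adv'$ receives $(\qkey_\key, w)$ with $w = \prf(\key, x^{\ast})$ for uniform $x^{\ast}$, runs $\adv$ on $(\qkey_\key, w)$ --- which is distributed exactly as $\cpprotect(\secparam, x^{\ast})$ with $x^{\ast}$ a uniformly random point --- obtains $\sigma_{12}$, and forwards $\sigma_1$ to $\bdv'$ and $\sigma_2$ to $\cdv'$. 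In the challenge phase $\bdv'$ (resp.\ $\cdv'$) receives an input that is the real preimage $x^{\ast}$ when its hidden bit is $0$ and a fresh uniform input when it is $1$; it passes this input to $\bdv$ (resp.\ $\cdv$) and outputs its answer. Since the point of the simulated point-function game is $x^{\ast}$, the bit returned by $\bdv$ (resp.\ $\cdv$) is exactly its guess of the hidden bit, and the views of $\bdv$ and $\cdv$ are identically distributed to their views in the real $\funcfont{CP-AP}_{\mathrm{PD}}$ game. Hence $(\adv', \bdv', \cdv')$ win with probability $1/2 + \varepsilon$, and by the assumed anti-piracy of the PRF copy-protection scheme (\cref{def:cp-prf-ap}) $\varepsilon$ must be negligible.

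\paragraph{On the main difficulty.} Unlike the proofs in \cref{sec:sd}, this one needs no monogamy-of-entanglement or compute-and-compare machinery --- all of that is encapsulated in the anti-piracy hypothesis on the PRF copy-protection scheme --- so the reduction is essentially bookkeeping, and the product challenge distribution of the point-function game maps verbatim onto the product challenge distribution of the PRF game (independent bits $b_1, b_2$ and independent fresh inputs $x_1, x_2$), incurring no loss. The only points that require a moment of care are (i) imposing statistical injectivity on the underlying PRF so that $\eval$ reproduces $\pf_y$ on \emph{every} input, which is what the correctness clause needs; (ii) matching the fake-input distributions, i.e.\ checking that the PRF input space $\xspace$ equals $\bin^n$ so that a uniform fake input in the PRF game is a uniform fake input in the point-function game; and (iii) observing that the winning condition in both games is ``return the hidden bit $b_i$'' rather than ``return the point-function value of the challenge'', which is what makes the reduction perfectly tight --- the two phrasings differ only on the negligible-probability event that a fake input collides with the point, and that event is handled identically on both sides.
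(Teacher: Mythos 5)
Your proposal is correct and takes essentially the same route as the paper: correctness from correctness of the underlying PRF copy-protection scheme plus an argument that $\prf(\key,x)\ne\prf(\key,y)$ for $x\ne y$, and anti-piracy via an identity reduction in which Alice forwards $(\qkey_\key, z)$ unchanged and Bob and Charlie relay their challenge inputs and output bits. The paper's proof is far terser (``the reduction is simply the identity'') but is the same reduction you spell out, and your careful check that the challenger's distributions in the two games coincide (a uniform preimage and its PRF image being exactly $\cpprotect(\secparam,\cdot)$ on a uniform point) is the content the paper leaves implicit. The only divergence is in justifying the $x \ne y$ case of correctness: the paper argues that $\prf(\key,x)=\prf(\key,y)$ happens with negligible probability ``from the security of the underlying pseudorandom function'' (a distinguisher querying $x$ and $y$ would detect a collision, and a truly random function collides with probability $2^{-m}$), whereas you instead impose statistical injectivity (\cref{def:stat-inj-prf}) on the PRF. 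Your assumption is slightly stronger than what the paper invokes but is available here and makes the argument more immediate; either justification is fine.
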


\begin{theorem}
  \label{th:cp-ap-id}
  Assuming the underlying pseudorandom functions copy-protection scheme has anti-piracy security with respect to the identical distribution, \cref{constr:cp-cc-from-cp-prf} has correctness and anti-piracy security with respect to the identical distribution.
\end{theorem}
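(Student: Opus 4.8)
The plan is to reduce anti-piracy security with respect to the identical distribution of \cref{constr:cp-cc-from-cp-prf} to anti-piracy security with respect to the identical distribution of the underlying pseudorandom function copy-protection scheme, exactly mirroring the argument for \cref{th:cp-ap-pd}. Correctness is immediate: by correctness of $\prf.\scheme{\keygen, \cpprotect, \eval}$, the evaluation algorithm recovers $\prf(\key, x)$ with overwhelming probability, and since $z = \prf(\key, y)$ and $\prf(\key, \cdot)$ is (with overwhelming probability, using a statistically injective PRF) injective, we have $z' = z$ iff $x = y$, so $\eval$ outputs $\pf_y(x)$ with overwhelming probability.

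For anti-piracy, first I would assume a triple of $\qpt$ adversaries $(\adv, \bdv, \cdv)$ winning the point-function piracy game (identical challenge phase) with probability $1/2 + \epsilon$ for non-negligible $\epsilon$. I would then build a triple $(\adv', \bdv', \cdv')$ against the PRF copy-protection piracy game with respect to the identical distribution. Here $\adv'$ receives $(\qkey_\key, y^{\prf})$ where $y^{\prf} = \prf(\key, x^{\prf})$ for a random $x^{\prf} \sample \xspace$; it simulates the point-function challenger by treating $x^{\prf}$ as the protected point and $y^{\prf}$ as the value $z$, i.e.\@ it forms $(\qkey_\key, y^{\prf})$ as the point-function copy-protected state, runs $\adv$ on it, and forwards the resulting $\sigma_1, \sigma_2$ to $\bdv', \cdv'$. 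In the identical challenge phase of the PRF game, both $\bdv'$ and $\cdv'$ receive the same challenge: either $x^{\prf}$ (when $b = 0$) or a fresh random $x_0$ (when $b=1$). They feed this challenge to $\bdv$ (resp.\@ $\cdv$) as the point-function challenge input and obtain output bits; each outputs that bit (possibly flipped to match the convention that $b$ is the ``is it the real point'' indicator). When $b = 0$, the point-function adversaries see the protected point $y := x^{\prf}$, so a correct guess of $\pf_y$ corresponds to outputting $1$; when $b = 1$, with overwhelming probability $x_0 \neq x^{\prf}$, so $\pf_y(x_0) = 0$. Thus the simulated point-function game is statistically close to the real one, and $(\bdv', \cdv')$ win the PRF game with probability $1/2 + \epsilon - \negl$, contradicting anti-piracy of the PRF scheme.

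The main subtlety — and the step I would spend the most care on — is verifying that the distribution of the simulated point-function piracy experiment matches the genuine one. In the real point-function game, the challenger samples $y \sample \bin^n$ and sets $z = \prf(\key, y)$ with $\key$ its own fresh PRF key; in the simulation, $y$ plays the role of the PRF-game's hidden input $x^{\prf}$, which is also sampled uniformly, and $z$ is $\prf(\key, x^{\prf})$ with $\key$ the PRF-game challenger's key — so the joint distribution of $(\qkey_\key, z, y)$ is identical. The only gap is the event, in the $b = 1$ branch, that the fresh challenge collides with $y$; this occurs with probability $2^{-n} = \negl$, and on its complement the winning conditions of the two games coincide bit-for-bit under the stated output-bit convention. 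I would also note that $(\adv', \bdv', \cdv')$ need $\bdv', \cdv'$ to remain non-communicating, which holds since the reduction is purely local on each side. The proof of \cref{th:cp-ap-pd} is entirely analogous, using the product distribution throughout and sampling the two fresh challenges independently; I would present that case first and then remark that replacing ``product'' by ``identical'' everywhere yields Theorem~\ref{th:cp-ap-id}.
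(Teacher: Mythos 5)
Your proposal is correct and takes essentially the same approach as the paper, which proves both \cref{th:cp-ap-pd} and \cref{th:cp-ap-id} in one stroke by observing that the reduction to the PRF copy-protection piracy game is literally the identity after relabeling the protected point $y$ as the PRF-game's hidden input. Two small remarks: for correctness, the paper argues that $\prf.\eval(\rho_\key,x)=\prf(\key,y)$ for $x\neq y$ happens negligibly often by appealing to PRF security (pseudorandomness) rather than requiring statistical injectivity, so you need not add injectivity as an extra hypothesis; and in the anti-piracy reduction the simulated point-function experiment is not merely statistically close but \emph{identically} distributed to the real one (the collision event you worry about has the same probability in both games and is handled by the same winning condition), so the advantage transfers exactly.
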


\begin{proof}[Proof of \cref{th:cp-ap-pd,th:cp-ap-id}]
  \textit{(Correctness)} for any $y$, running the evaluation algorithm on the point $y$ yields $1$ with probability close to $1$ from the correctness of the underlying copy-protection of pseudorandom functions.
  Running the evaluation algorithm on a point $x \neq y$ yields $1$ only if $\prf.\eval(\rho_\key, x) = \prf(\key, y)$, which happens with negligible probability over $\key$ from the security of the underlying pseudorandom function.

  \textit{(Anti-piracy security)} the anti-piracy security with respect to the product distribution (resp.\@ identical distribution) comes directly from the anti-piracy security with respect to the product distribution (resp.\@ identical distribution) of the underlying copy-protection of pseudorandom function scheme.
  In both cases, the reduction is simply the identity.
\end{proof}

\begin{corollary}
  Assuming the existence of post-quantum indistinguishability obfuscation,
  one-way functions, compute-and-compare obfuscation for the class of
  unpredictable distributions, and \cref{conj:uniformB} (resp.\@ \cref{conj:identicalB}), there exists a point functions copy-protection scheme with correctness and anti-piracy security with respect to the product distribution (resp.\@ the identical distribution).
\end{corollary}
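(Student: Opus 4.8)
The plan is to derive this corollary purely by composing results already established above and tracking which assumptions are needed; no fundamentally new argument is required at this last step.

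First I would instantiate the underlying pseudorandom function copy-protection scheme. By \cref{th:cp-prf-ap}, the assumptions of post-quantum indistinguishability obfuscation, one-way functions, compute-and-compare obfuscation for the class of unpredictable distributions, together with \cref{conj:uniformB} (resp.\@ \cref{conj:identicalB}), already yield a pseudorandom function copy-protection scheme $\prf.\scheme{\keygen, \cpprotect, \eval}$ with indistinguishable anti-piracy security with respect to the product distribution (resp.\@ the identical distribution). For context I would recall that \cref{th:cp-prf-ap} is itself obtained by instantiating the scheme through the \cite{C:CLLZ21} single-decryptor (\cref{constr:sd}), whose anti-piracy in the product (resp.\@ identical) setting is \cref{th:sd-ap-ir} (resp.\@ \cref{th:sd-ap-id}); these in turn reduce, via \cref{lem:cc-extract} and the contrapositive of the relevant conjecture, to the hardness of the $\kappa$-parallel computational monogamy-of-entanglement game with identical basis (\cref{th:new-moe-parallel}).

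Second I would plug this scheme into \cref{constr:cp-cc-from-cp-prf}. By \cref{th:cp-ap-pd} (resp.\@ \cref{th:cp-ap-id}), whenever the underlying pseudorandom function copy-protection scheme has anti-piracy security with respect to the product (resp.\@ identical) distribution, \cref{constr:cp-cc-from-cp-prf} enjoys both correctness and anti-piracy security with respect to the same distribution: correctness there follows from correctness of the PRF copy-protection scheme together with pseudorandomness of $\prf$ (so that $\prf(\key, x) = \prf(\key, y)$ for $x \neq y$ only with negligible probability), while the anti-piracy reduction is simply the identity on adversaries. Composing the two steps produces the desired point functions copy-protection scheme, proving the corollary.

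The only thing I would need to verify is that the union of the invoked assumptions matches exactly the list in the corollary statement: \cref{th:cp-prf-ap} already requires precisely post-quantum $\iO$, one-way functions, compute-and-compare obfuscation for unpredictable distributions, and \cref{conj:uniformB} / \cref{conj:identicalB}, and \cref{th:cp-ap-pd,th:cp-ap-id} add nothing further. Consequently the difficulty of this result is not located in the corollary itself but upstream — in \cref{th:new-moe-parallel}, and in the reduction underlying \cref{th:cp-prf-ap}, where one must extract both lock values from two non-communicating, possibly entangled distinguishers whose challenges are correlated; handling this correlated ``simultaneous'' extraction is exactly what forces the introduction of \cref{conj:uniformB} and \cref{conj:identicalB}, and is where I expect all the genuine work to lie.
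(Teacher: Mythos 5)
Your proposal is correct and matches the paper's own (one-line) proof: the corollary is obtained exactly by instantiating the PRF copy-protection scheme via \cref{th:cp-prf-ap} under the stated assumptions and then invoking \cref{th:cp-ap-pd} (resp.\@ \cref{th:cp-ap-id}) for \cref{constr:cp-cc-from-cp-prf}. Your additional tracing of the upstream reduction chain and the assumption bookkeeping is accurate but not needed for this step.
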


\begin{proof}
  This result follows directly from \cref{th:cp-prf-ap}.
\end{proof}

\section{Unclonable Encryption in the Plain Model}
\label{sec:unc-ind}
In this section, we introduce the notion of unclonable encryption \cite{broadbent2019uncloneable} and present a construction in the plain model.
Our construction uses a pseudorandom function copy-protection scheme with anti-piracy security with respect to the product distribution (\cref{def:cp-prf-ap}) as a black box.
Our construction is a symmetric one-time unclonable encryption scheme, which implies the existence of a reusable public key encryption scheme using the transformation of \cite{TCC:AnaKal21}.
Through all this section, we refer to symmetric unclonable encryption simply as unclonable encryption, and use public key unclonable encryption to denote the public key version.

\subsection{Definitions}
\label{sec:unc-ind_def}
In this section, we define unclonable encryption as well as its correctness and indistinguishable anti-piracy security.

\begin{definition}[One-Time Unclonable Encryption Scheme]
  \label{def:unc-enc}
  A one-time unclonable encryption scheme with message space $\mspace$ is a tuple of algorithms $\scheme{\keygen, \enc, \dec}$ with the following properties:
  \begin{itemize}
    \item $\key \leftarrow \keygen(\secparam)$.
    On input a security parameter, the key generation algorithm outputs a key $\key$.
    \item $\rho \leftarrow \enc(\key, m)$.
    On input a key $\key$ and a message $m \in \mspace$, the encryption algorithm outputs quantum ciphertext $\rho$.
    \item $m \leftarrow \dec(\key, \rho)$.
    On input a key $\key$ and a quantum ciphertext $\rho$, the decryption algorithm outputs a message $m$.
  \end{itemize}
\end{definition}

\paragraph{Correctness.}
An unclonable encryption scheme has \emph{correctness} if decrypting a quantum encryption of any message $m$ yields $m$ with overwhelming probability.
More precisely:
$$
    \forall m \in \mspace,\ \pr\left[\dec(\key, \rho) = m\ :\ \begin{array}{l}
      \key \gets \keygen(\secparam)\\
      \rho \gets \enc(\key, m)
    \end{array}\right] = 1 - \negl
$$

\paragraph{Indistinguishable anti-piracy security.}
We now define indistinguishable anti-piracy security of a one-time unclonable encryption scheme.
This notion is defined through a game in which an adversary is given a quantum encryption of either $m_0$ or $m_1$ - two messages chosen by the adversary at the beginning of the game - and is asked to split it such that both shares can be used to guess which message has been encrypted.
Note that although our definition holds for \emph{one-time} unclonable encryption schemes, we can similarly define this notion for \emph{reusable} unclonable encryption schemes by giving the adversary access to an encryption oracle, before asking them to choose the pair of messages.

\begin{definition}[Piracy Game for a One-Time Unclonable Encryption Scheme]
  We define below a \emph{piracy game for one-time unclonable encryption}, parametrized by a one-time unclonable encryption scheme $\scheme{\keygen, \enc, \dec}$, and a security parameter $\secpar$.
  This game is between a challenger and a triple of adversaries $(\adv, \bdv, \cdv)$.
  \label{def:unc-enc-piracy-game}
  \begin{itemize}
    \item \textbf{Setup phase:}
    \begin{itemize}
      \item $\adv$ sends a message pair $(m_0, m_1) \in \mspace^2$ to the challenger.
      \item The challenger samples $\key \in \keygen(\secparam)$ and $b \sample \bin$, and computes $\rho \gets \enc(\key, m_b)$.
      \item The challenger sends $\rho$ to $\adv$.
    \end{itemize}
    \item \textbf{Splitting phase:}
      $\adv$ prepares a bipartite quantum state $\sigma_{12}$, then sends $\sigma_1$ to $\bdv$, and $\sigma_2$ to $\cdv$.
    \item \textbf{Challenge phase:}
      The challenger sends $\key$ to both $\bdv$ and $\cdv$.
  \end{itemize}
  $\adv$, $\bdv$, and $\cdv$ win the game if $\bdv$ returns $b'_1 = b$ and $\cdv$ returns $b'_2 = b$.

  We denote the random variable that indicates whether a triple of adversaries
  \((\adv, \bdv, \cdv)\) win the game or not as $\funcfont{UncEnc-AP}^{\scheme{\keygen, \enc, \dec}}(\secparam, \adv, \bdv, \cdv)$.
\end{definition}

\begin{definition}[Indistinguishable Anti-Piracy Security of an Unclonable Encryption Scheme]
  A one-time unclonable encryption scheme $\scheme{\keygen, \enc, \dec}$ has \emph{indistinguishable anti-piracy security} if no triple of $\qpt$ adversaries can win the piracy game above with probability significantly greater than $1/2$.

  More precisely, for any triple of $\qpt$ adversaries $(\adv, \bdv, \cdv)$
  $$
  \prob{\funcfont{UncEnc-AP}^{\scheme{\keygen, \enc, \dec}}(\secparam, \adv, \bdv, \cdv) = 1} \leq 1/2 + \negl.
  $$
\end{definition}

\subsection{Construction}
\label{sec:unc_ind_constr}

In this subsection, we present a construction of a one-time unclonable encryption scheme for single-bit messages.
Through all the subsection, $\secpar$ denotes a security parameter and $n(\cdot), m(\cdot)$ are polynomials; whenever it is clear from the context, we note $n$ and $m$ instead of $n(\secpar)$ and $m(\secpar)$.
Let $\prf.\scheme{\keygen, \cpprotect, \eval}$ be a pseudorandom function copy-protection scheme with input space $\bin^n$ and output space $\bin^m$.
In addition, we ask the copy-protected pseudorandom function to be extracting with error $2^{-\lambda - 1}$ for min-entropy $n$.
Note that the copy-protected pseudorandom function presented in \cref{app:cp-prf} has this property.

\begin{construction}{Unclonable Encryption}{unc-enc}
  \begin{itemize}[label=\(\),rightmargin=\leftmargin]
    \item \(\keygen(\secparam)\):
    \begin{itemize}
      \item Sample a key $\key_S \sample \bin^n$.
      \item Return $\key_S$.
    \end{itemize}
    \item $\enc(\key_S, b)$:
    \begin{itemize}
      \item Sample $\key_P \gets \prf.\keygen(\secparam)$ and compute $\rho_{\key_P} \gets \prf.\cpprotect(\key_P)$.
      \item Sample $r \sample \bin^n$; let $c_0 \gets \prf(\key_P, k_{S} \oplus r)$ and $c_1 \sample \bin^m$.
      \item Return $(r, c_b, \rho_{\key_P})$.
    \end{itemize}
    \item $\dec(\key_S, (r, c, \rho_{\key_P}))$:
    \begin{itemize}
      \item Compute $c^* \gets \prf(\key_P, \key_S \oplus r)$.
      \item Return $0$ if $c^* = c$ and $1$ otherwise.
    \end{itemize}
  \end{itemize}
\end{construction}

\begin{theorem}
  Assume $\prf.\scheme{\keygen, \cpprotect, \eval}$ has indistinguishable anti-piracy security with respect to the identical distribution (\cref{def:cp-prf-ap}).
  Then \cref{constr:unc-enc} has correctness and indistinguishable anti-piracy security.
\end{theorem}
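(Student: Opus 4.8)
Correctness is the easy part and I would dispatch it first: when $b=0$ the ciphertext component is $c=\prf(\key_P,\key_S\oplus r)$, so $\dec$ recomputes $c^\ast=\prf.\eval(\rho_{\key_P},\key_S\oplus r)$, which equals $c$ with overwhelming probability by correctness of the PRF copy-protection scheme, and outputs $0$; when $b=1$ the component $c=c_1$ is uniform in $\bin^m$ and independent of $c^\ast$, so $\dec$ outputs $1$ except with probability $2^{-m}=\negl$. The substantive part is indistinguishable anti-piracy, which I would obtain by a reduction to the anti-piracy security of $\prf$ with respect to the identical distribution (\cref{def:cp-prf-ap}).

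\textbf{Setting up the reduction.} Fix a $\qpt$ triple $(\adv,\bdv,\cdv)$ for the piracy game of \cref{constr:unc-enc}. Since $\mspace=\bin$, the pair $(m_0,m_1)$ chosen by $\adv$ is either constant --- in which case the ciphertext is independent of $b$ and the triple wins with probability exactly $1/2$ --- or $\{m_0,m_1\}=\{0,1\}$, and the two orderings are symmetric, so I take $(m_0,m_1)=(0,1)$. Then the challenge ciphertext is $\enc(\key_S,0)=(r,\prf(\key_P,\key_S\oplus r),\rho_{\key_P})$ if $b=0$ and $(r,c_1,\rho_{\key_P})$ with $c_1\sample\bin^m$ if $b=1$. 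I would build $(\adv',\bdv',\cdv')$ for the PRF copy-protection piracy game with identical challenge distribution as follows. $\adv'$ receives $(\rho_{\key},y)$ with $\key\gets\prf.\keygen(\secparam)$, $x\sample\bin^n$, $y=\prf(\key,x)$; it samples $\tilde r\sample\bin^n$, runs $\adv$ on the simulated ciphertext $(\tilde r,y,\rho_{\key})$ to get $\sigma_{12}$, and hands $(\sigma_1,\tilde r)$ to $\bdv'$ and $(\sigma_2,\tilde r)$ to $\cdv'$. Upon receiving the challenge input $x^\ast\in\bin^n$, $\bdv'$ sets $\key_S\coloneqq x^\ast\oplus\tilde r$, runs $\bdv$ on $(\sigma_1,\key_S)$, and forwards its output; $\cdv'$ acts symmetrically. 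Here $\key$ plays the role of $\key_P$, and $\tilde r$ is precisely what lets each half of the split recompute $\key_S$.

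\textbf{Analysing the two challenge bits.} If the PRF challenge bit is $0$ then $x^\ast=x$, so $\key_S=x\oplus\tilde r$ and $y=\prf(\key_P,\key_S\oplus\tilde r)$; since $x,\tilde r$ are independent and uniform, $(\key_S,\tilde r)$ is uniform, so the simulated transcript $(\tilde r,y,\rho_{\key})$ together with the key $\key_S$ revealed at challenge time has exactly the distribution of the real game with challenge bit $0$. If the PRF challenge bit is $1$ then $x^\ast$ is fresh and uniform, so $\key_S$ is uniform and independent of $(\key_P,x,\tilde r)$ (and of $\tilde r$), while the ciphertext carries $c=\prf(\key_P,x)$; since $x$ is uniform over $\bin^n$ (min-entropy $n$) and independent of $\key_P$, the extracting property of the copy-protected PRF makes $\prf(\key_P,x)$ statistically close to uniform even given $\key_P$, hence --- after processing $\key_P$ into $\rho_{\key_P}$ (with independent coins) and adjoining the independent $\key_S,\tilde r$ --- close to the real game with challenge bit $1$. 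In either case $(\bdv',\cdv')$ output $(\bdv,\cdv)$'s guesses, which match the PRF challenge bit precisely when $(\adv,\bdv,\cdv)$ win with the corresponding unclonable-encryption challenge bit; averaging over the bit, $(\adv',\bdv',\cdv')$ win the identical-distribution PRF piracy game with probability at least $\pr[(\adv,\bdv,\cdv)\text{ wins}]-\negl$. By the hypothesis this is at most $1/2+\negl$, so $(\adv,\bdv,\cdv)$ win with probability at most $1/2+\negl$, as desired.

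\textbf{Main obstacle.} The delicate point is reconciling the timing of the two games: $\adv'$ must commit to $\tilde r$ before it sees $x^\ast$, and it is exactly because $\key_S$ is disclosed only in the challenge phase of the unclonable-encryption game that $\bdv',\cdv'$ may retroactively set $\key_S=x^\ast\oplus\tilde r$, thereby making the PRF challenge bit coincide with the unclonable-encryption challenge bit. The second subtle step, which I expect needs care, is the invocation of the extracting property of the copy-protected PRF to argue that when $x^\ast$ is random the embedded image $\prf(\key_P,x)$ is indistinguishable from the uniform string $c_1$ used to encrypt the bit $1$, and in particular that this indistinguishability survives conditioning on the adversaries' split and on the revealed $\key_S$.
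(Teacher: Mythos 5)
Your proposal is correct and follows essentially the same route as the paper's proof: both arguments boil down to the same reduction in which the simulator samples $\tilde r$, implicitly sets $\key_S := x^*\oplus\tilde r$ once the PRF challenge $x^*$ arrives, and appeals to the extracting property to make $\prf(\key_P,x)$ for fresh random $x$ match the uniform ciphertext used when $b=1$. The paper reaches the reduction through an explicit hybrid chain ($\pcgamename_0$ to $\pcgamename_3$: first swap $c_1$ for a PRF image of a fresh input using extraction, then relabel and move the branching on $b$ from the ciphertext to the revealed key), whereas you fold that relabelling and the extraction step directly into the analysis of the two PRF challenge bits --- a presentational difference only. Two small remarks: you handle the degenerate case $m_0=m_1$ explicitly, which the paper glosses over; and you correctly invoke the \emph{identical}-distribution PRF piracy game, which is what the theorem hypothesis supplies, whereas the paper's reduction paragraph misstates it as the product distribution (evidently a typo, since the hybrid $\pcgamename_3$ sends one common key to both $\bdv$ and $\cdv$, which only the identical distribution matches).
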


\paragraph{Proof of correctness.}
The correctness comes directly from the correctness and security of the underlying $\prf$ copy-protection scheme.
More precisely, $\dec(\key_S, \enc(\key_S, 0)) = 1$ means that $\prf.\eval(\rho_{\key_P}, \key_S \oplus r) \neq \prf(\key_P, \key_S \oplus r)$ which happens with negligible probability from the correctness of the $\prf$ copy-protection scheme.
And $\dec(\key_S, \enc(\key_S, 1)) = 0$ means that $\prf.\eval(\rho_{\key_P}, \key_S \oplus r) = y$ for a uniformly random $y$ happens with non-negligible probability, which contradicts $\prf$ security.

\paragraph{Proof of indistinguishable anti-piracy security.}
\label{sec:unc_ind_proof}

We proceed in the proof through a sequence of hybrids.
For any pair of hybrids \((\pcgamename_i, \pcgamename_j)\), we say that \(\pcgamename_i\) is \(\emph{negligibly close to}\) \(\pcgamename_j\) if for triple of \(\qpt\) adversaries \((\adv, \bdv, \cdv)\), the probability that \((\adv, \bdv, \cdv)\) wins \(\pcgamename_i\) is negligibly close to the probability that they win \(\pcgamename_j\).

\begin{gamedescription}[nr=-1,arg=]
  \describegame The first hybrid is the piracy game for our construction.

  \begin{itemize}
    \item \textbf{Setup phase:}
    \begin{itemize}
      \item The challenger samples $\key_S \sample \bin^n$.
      \item The challenger samples $\key_P \gets \prf.\keygen(\secparam)$ and computes $\rho_{\key_P} \gets \prf.\cpprotect(\key_P)$.
      \item The challenger samples $r \sample \bin^n$, then sets $c_0 \gets \prf(\key_P, k_{S} \oplus r)$ and samples $c_1 \sample \bin^m$.
      \item The challenger samples $b \sample \bin$ and sends $(r, c_b, \rho_{\key_P})$ to $\adv$.
    \end{itemize}
    \item \textbf{Splitting phase:}
      $\adv$ prepares a bipartite quantum state $\sigma_{12}$, then sends $\sigma_1$ to $\bdv$, and $\sigma_2$ to $\cdv$.
    \item \textbf{Challenge phase:}
    The challenger sends $\key_S$ to both $\bdv$ and $\cdv$.
  \end{itemize}
  $\adv$, $\bdv$, and $\cdv$ win the game if $\bdv$ returns $b'_1 = b$ and $\cdv$ returns $b'_2 = b$.

  \describegame In the second hybrid, we replace $c_1$ by the pseudorandom function evaluation of a random input.
  More formally, in the setup phase, we replace $c_1 \sample \bin^m$ by $c_1 := \prf(\key_P, x)$ where $x \sample \bin^m$.
  As $x$ is sampled uniformly at random, from the extracting property of the underlying pseudorandom function, $\pcgamename_0$ is negligibly close to $\pcgamename_1$.

  \describegame In this third hybrid, we replace the random $x$ by $\key'_S \oplus r$ where $\key'_S$ is sampled uniformly at random from $\bin^n$.
  As $\key'_S \oplus r$ is still uniformly random, this does not change the overall distribution of the game.
  Thus, $\pcgamename_2$ is negligibly close to $\pcgamename_1$ (more precisely, it is exactly the same game).

  \describegame For the third hybrid, instead of sending either $\prf(\key_P, \key_S \oplus r)$ or $\prf(\key_P, \key'_S \oplus r)$ - depending on $b$ - to $\adv$ in the setup phase, and sending $\key_S$ to $\bdv$ and $\cdv$, we send only $\prf(\key_P, \key_S \oplus r)$ to $\adv$ in the setup phase, and send either $\key_S$ or $\key'_S$ to $\bdv$ and $\cdv$- still depending on $b$.
  Note that this is actually only relabelling, hence the distribution of the game is not changed either.
  Thus, the $\pcgamename_3$ has exactly the same distribution as $\pcgamename_2$.
  We describe $\pcgamename_3$ more precisely below:

  \begin{itemize}
    \item \textbf{Setup phase:}
    \begin{itemize}
      \item The challenger samples $\key_S, \key'_S \sample \bin^n$.
      \item The challenger samples $\key_P \gets \prf.\keygen(\secparam)$ and computes $\rho_{\key_P} \gets \prf.\cpprotect(\key_P)$.
      \item The challenger samples $r \sample \bin^n$, and sends $(r, \prf(\key_P, k_{S} \oplus r), \rho_{\key_P})$ to $\adv$.
    \end{itemize}
    \item \textbf{Splitting phase:}
      $\adv$ prepares a bipartite quantum state $\sigma_{12}$, then sends $\sigma_1$ to $\bdv$, and $\sigma_2$ to $\cdv$.
    \item \textbf{Challenge phase:}
    The challenger sends $\key_S$ to both $\bdv$ and $\cdv$ if $b = 0$, otherwise the challenger sends $\key'_S$.
  \end{itemize}
  $\adv$, $\bdv$, and $\cdv$ win the game if $\bdv$ returns $b'_1 = b$ and $\cdv$ returns $b'_2 = b$.
\end{gamedescription}

We now reduce the game $\pcgamename_3$ from the piracy game of the underlying pseudorandom function copy-protection scheme with respect to the product distribution.
Assume that there exists a triple of \(\qpt\) adversaries \((\adv, \bdv, \cdv)\) who wins \(\game_{3}\) with advantage \(\delta\).
We construct a triple of \(\qpt\) adversaries \((\adv', \bdv', \cdv')\) who wins the piracy game of the underlying pseudorandom function copy-protection scheme with respect to the product distribution with the same advantage \(\delta\).
\((\adv', \bdv', \cdv')\) behave in the following way.
\begin{itemize}
  \item $\adv'$, on input a quantum protected pseudorandom function key $\rho_\key$ and a pseudorandom function image $y := \prf(\key, x)$:
  \begin{itemize}
    \item samples $r \sample \bin^n$ (note that, by defining $\key_S := x \oplus r$, we can write $x$ as $\key_S \oplus r$);
    \item runs $\adv$ on $(r, y, \rho_\key)$ to get $\sigma_{12}$;
    \item prepares the bipartite quantum state $\sigma'_{12}$ where the first register is $\sigma_1 \otimes \ketbra{r}{r}$ and the second one is $\sigma_2 \otimes \ketbra{r}{r}$;
    \item sends $\sigma'_1$ to $\bdv$ and $\sigma'_2$ to $\cdv$.
  \end{itemize}
  \item $\bdv'$, on input $\sigma'_1$ and $x$:
  \begin{itemize}
    \item computes $\key := r \oplus x$;
    \item runs $\bdv$ on $(\sigma_1, \key)$;
    \item returns the outcome.
  \end{itemize}
  \item $\cdv'$ is defined similarly by replacing the ``$1$'' indices by ``$2$'' indices.
\end{itemize}

The inputs given to $\bdv$ and $\cdv$ follow the same distribution as their inputs in $\pcgamename_3$.
Thus, \((\adv', \bdv', \cdv')\) win the game with the same advantage as \((\adv, \bdv, \cdv)\), which concludes the proof.

\begin{corollary}
  Assuming the existence of post-quantum indistinguishability obfuscation,
  one-way functions, compute-and-compare obfuscation for the class of
  unpredictable distributions, and \cref{conj:identicalB}, there exists a one-time unclonable encryption scheme with correctness and indistinguishable anti-piracy security for $1$-bit long messages.
\end{corollary}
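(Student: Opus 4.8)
The plan is to derive this corollary purely by composing two earlier results of the paper: \cref{th:cp-prf-ap}, which constructs a pseudorandom function copy-protection scheme under the stated assumptions, and the theorem immediately preceding the corollary, which turns any such scheme into a one-time unclonable encryption scheme for single-bit messages. So the proof will be short; the only real work is bookkeeping the assumptions and checking the extra requirement that \cref{constr:unc-enc} imposes on the underlying $\prf$ copy-protection scheme.

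First I would invoke \cref{th:cp-prf-ap} in its ``identical distribution'' variant: assuming post-quantum indistinguishability obfuscation, one-way functions, compute-and-compare obfuscation for the class of unpredictable distributions, and \cref{conj:identicalB}, there exists a pseudorandom function copy-protection scheme $\prf.\scheme{\keygen, \cpprotect, \eval}$ with indistinguishable anti-piracy security with respect to the identical distribution in the sense of \cref{def:cp-prf-ap}. The subtlety is that \cref{constr:unc-enc} additionally requires the copy-protected pseudorandom function to be extracting with error $2^{-\secpar - 1}$ for min-entropy $n$, which is not implied by the anti-piracy statement on its own. To handle this I would instantiate the scheme with the explicit construction of \cref{app:cp-prf}, which is noted to enjoy both the anti-piracy security established in \cref{th:cp-prf-ap} and this extracting property; one also fixes the input/output lengths $n, m$ of that scheme to match the parameters of \cref{constr:unc-enc}. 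In this way no additional assumption is introduced.

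Then I would plug this concrete $\prf.\scheme{\keygen, \cpprotect, \eval}$ into \cref{constr:unc-enc} and apply the theorem stated just before the corollary: since the underlying scheme has indistinguishable anti-piracy security with respect to the identical distribution, the resulting one-time unclonable encryption scheme has correctness and indistinguishable anti-piracy security. Chaining the two theorems, the only assumptions used are post-quantum $\iO$, one-way functions, compute-and-compare obfuscation for unpredictable distributions, and \cref{conj:identicalB}, which is exactly the hypothesis of the corollary, completing the proof.

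There is essentially no obstacle here, as the hard content lives in \cref{th:cp-prf-ap} and in the preceding unclonable-encryption theorem. The one point requiring care is making sure that the extracting-PRF requirement of \cref{constr:unc-enc} and the anti-piracy guarantee of \cref{th:cp-prf-ap} are satisfied simultaneously by a single instantiation (the one of \cref{app:cp-prf}), so that the assumption list is not silently enlarged; once this is observed, the corollary is immediate.
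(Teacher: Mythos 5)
Your proposal is correct and matches the paper's intended (implicit) proof: compose \cref{th:cp-prf-ap} in its identical-distribution variant with the theorem immediately preceding the corollary, instantiating with the concrete scheme of \cref{app:cp-prf} so that the extracting property required by \cref{constr:unc-enc} holds. Your attention to the extracting-PRF side condition is exactly the right bookkeeping; the paper addresses it by the remark that the \cref{app:cp-prf} construction is extracting, so no new assumption enters.
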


\subsection{Extension to Multi-Bits Messages}
\label{sec:unc-enc-multibits}

We describe below a way to extend our scheme to any message space of the form $\bin^\ell$ where $\ell(\cdot)$ is a polynomial in $\secpar$.
Our construction encrypts the message bit by bit, but not in an independent way.
Indeed, we use the same pseudorandom function key for encrypting all the bits (and hence the same copy-protected pseudorandom function key); and show that this is enough to achieve indistinguishable anti-piracy security.

\begin{construction}{Unclonable Encryption with Message Space $\bin^\ell$}{unc-enc-multibits}
  \begin{itemize}[label=\(\),rightmargin=\leftmargin]
    \item \(\keygen(\secparam)\):
    \begin{itemize}
      \item For $i \in \intval{1, \ell}$: sample a key $\key_{S, i} \sample \bin^n$.
      \item Return $\key_S := \left(\key_{S, i}\right)_{i \in \intval{1, \ell}}$.
    \end{itemize}
    \item $\enc(\key_S, m)$:
    \begin{itemize}
      \item Sample $\key_P \gets \prf.\keygen(\secparam)$ and compute $\rho_{\key_P} \gets \prf.\cpprotect(\key_P)$.
      \item For $i \in \intval{1, \ell}$: sample $r_i \sample \bin^n$ and compute $y_i := \key_{S, i} \oplus r_i$.
      \item Let $c_{0, i} \gets \prf(\key_P, y_i)$ and $c_{1, i} \sample \bin^m$.
      \item Let $r := \left(r_i\right)_{i \in \intval{1, \ell}}$ and $c_m := \left(c_{m_i, i}\right)_{i \in \intval{1, \ell}}$.
      \item Return $(r, c_m, \rho_{\key_P})$.
    \end{itemize}
    \item $\dec(\key_S, (r, c, \rho_{\key_P}))$:
    \begin{itemize}
      \item For $i \in \intval{1, \ell}$: compute $y_i := \key_{S, i} \oplus r_i$ and $c^*_i \gets \prf(\key_P, y_i)$.
      \item Set $m \in \bin^\ell$ such that $m_i := 0$ if $c^*_i = c_i$ and $m_i := 1$ otherwise.
      \item Return $m$
    \end{itemize}
  \end{itemize}
\end{construction}

\begin{theorem}
  \label{th:unc-enc-ap}
  Assume $\prf.\scheme{\keygen, \cpprotect, \eval}$ has indistinguishable anti-piracy security with respect to the product distribution.
  Then \cref{constr:unc-enc-multibits} has correctness and indistinguishable anti-piracy security.
\end{theorem}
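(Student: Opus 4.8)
\emph{Correctness} follows by running the single-bit argument coordinate by coordinate: for a message $m \in \bin^\ell$, decryption recomputes $c_i^\ast := \prf.\eval(\rho_{\key_P}, \key_{S,i} \oplus r_i)$, which equals $\prf(\key_P, \key_{S,i} \oplus r_i)$ with overwhelming probability by correctness of the copy-protected $\prf$; hence $c_i^\ast = c_i$ whenever $m_i = 0$, while if $m_i = 1$ then $c_i$ is a fresh uniform string and $c_i^\ast = c_i$ only with probability $2^{-m}$. A union bound over the $\ell = \poly$ coordinates gives $\dec(\key_S, \enc(\key_S, m)) = m$ except with negligible probability.

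For \emph{indistinguishable anti-piracy}, the plan is to mirror the single-bit proof: a short sequence of statistically close hybrids that normalises the ciphertext, followed by a hybrid over the coordinates on which the two challenge messages disagree, each step of which reduces to the piracy game for $\prf$ copy-protection. Fix $\qpt$ adversaries $(\adv, \bdv, \cdv)$ with message pair $(m_0, m_1)$, $m_0 \neq m_1$, and let $j_1 < \dots < j_t$ be the coordinates on which $m_0$ and $m_1$ differ. First I would pass from the real game to a game in which every ``random'' coordinate value $c_{1,i}$ is replaced by $\prf(\key_P, \key'_{S,i} \oplus r_i)$ for a freshly and independently sampled decoy $\key'_{S,i} \sample \bin^n$; since $\key'_{S,i} \oplus r_i$ is uniform and independent of $\key_P$ (hence of $\rho_{\key_P}$), this replacement is statistically close by the extracting property of the copy-protected $\prf$, applied once per coordinate with total error at most $\ell \cdot 2^{-\secpar - 1} = \negl$. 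After this relabelling, an encryption of $m$ has $i$-th coordinate $\prf(\key_P, \key_{S,i}^{(m_i)} \oplus r_i)$ with $\key_{S,i}^{(0)} := \key_{S,i}$ and $\key_{S,i}^{(1)} := \key'_{S,i}$, while the challenge phase still reveals only $\key_S = (\key_{S,i})_i$.

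Next I would introduce hybrid games $G_0, \dots, G_t$, where in $G_k$ the ciphertext handed to $\adv$ is an encryption of the message $\mu_k$ that agrees with $m_1$ on $j_1, \dots, j_k$ and with $m_0$ elsewhere, so $G_0$ and $G_t$ are the two branches of the normalised game; a telescoping argument over the $t \le \ell = \poly$ steps, together with the fact that ``$\bdv, \cdv$ both output $0$'' and ``$\bdv, \cdv$ both output $1$'' are disjoint events, then reduces the whole claim to showing that consecutive hybrids induce negligibly close probabilities of the event ``$\bdv, \cdv$ both output $0$''. Consecutive hybrids $G_{k-1}, G_k$ differ only in coordinate $j_k$, where the ciphertext is $\prf(\key_P, \key_{S,j_k} \oplus r_{j_k})$ in one and $\prf(\key_P, \key'_{S,j_k} \oplus r_{j_k})$ in the other (which is which depends on the value of $(m_0)_{j_k}$; the two cases are symmetric). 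I would reduce a distinguishing gap here to the piracy game for $\prf$ copy-protection: the reduction receives $\rho_{\key_P}$ and an image $y = \prf(\key_P, x)$, plants $c_{j_k} := y$ (keeping $r_{j_k}$ uniform and deferring the choice of $\key_{S,j_k}$), produces every other coordinate $c_i := \prf.\eval(\rho_{\key_P}, \key_{S,i}^{(\cdot)} \oplus r_i)$ at points it has itself chosen, runs $\adv$ on this ciphertext together with the copy-protected key to obtain $\sigma_{12}$, and forwards $\sigma_1, \sigma_2$ to $\bdv, \cdv$; in the challenge phase it receives the copy-protection challenge $\tilde x$, sets $\key_{S,j_k} := r_{j_k} \oplus \tilde x$, hands $\bdv$ (resp.\ $\cdv$) the reconstructed $\key_S$, and echoes its output. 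When $\tilde x$ is the real preimage $x$ the planted coordinate is exactly $\prf(\key_P, \key_{S,j_k} \oplus r_{j_k})$; when $\tilde x$ is fresh uniform it is $\prf(\key_P, x)$ with $x$ independent of everything $\bdv, \cdv$ see — so the two cases reproduce $G_{k-1}$ and $G_k$, and a non-negligible gap would give a pair of adversaries simultaneously winning the $\prf$ copy-protection piracy game, a contradiction.

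The step I expect to be the main obstacle is the faithful simulation of the $\ell - 1$ non-planted coordinates in the reduction: the reduction holds only the copy-protected key $\rho_{\key_P}$, not $\key_P$, so it must obtain those values by running $\prf.\eval$ on $\rho_{\key_P}$ and then still hand that same physical state to $\adv$. Here I would invoke correctness of the copy-protection scheme — each evaluation returns the correct value with overwhelming probability, hence is an almost-projective measurement — together with the gentle measurement lemma, to argue that after $\ell = \poly$ evaluations $\rho_{\key_P}$ is still negligibly close to a fresh copy-protected key, so $\adv$'s view is negligibly perturbed. A secondary point is that the reduction is forced to feed $\bdv$ and $\cdv$ the same reconstructed $\key_S$ (the challenge phase of the encryption game reveals the same key to both), so the two copy-protection challenges $\tilde x$ must coincide; consequently the reduction is naturally to the variant of the $\prf$ copy-protection piracy game in which $\bdv$ and $\cdv$ receive the same (identical-distribution) challenge, exactly as for the single-bit construction of \cref{constr:unc-enc}.
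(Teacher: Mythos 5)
Your proposal follows the same high-level plan as the paper's (very brief) proof: correctness coordinate-by-coordinate, then a hybrid over the coordinates on which $m_0$ and $m_1$ differ, each hop reduced to the PRF copy-protection piracy game. Two remarks are in order.

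First, your reduction necessarily feeds $\bdv$ and $\cdv$ the same reconstructed $\key_S$, hence the same challenge $\tilde x$, so as you observe it targets the \emph{identical}-distribution variant of the PRF piracy game. This contradicts the theorem's stated hypothesis (``product distribution'') but is consistent with the single-bit theorem, its proof, and the corollary that invokes \cref{conj:identicalB}; the hypothesis as printed appears to be a typo, and your instinct here is the right one.

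Second, and more seriously, the telescoping step has a gap that also lurks in the paper's sketch. Writing $q_b(\mu)$ for the probability that both $\bdv$ and $\cdv$ output $b$ on the hybrid that encrypts $\mu$, the echo reduction at hop $k$ gives a PRF-piracy adversary whose winning probability is exactly $\tfrac12 q_0(\mu_{k-1}) + \tfrac12 q_1(\mu_k)$, so PRF security only yields $q_0(\mu_{k-1}) + q_1(\mu_k) \le 1 + \negl$. This does \emph{not} imply $\lvert q_0(\mu_{k-1}) - q_0(\mu_k)\rvert \le \negl$, because the intermediate hybrid may suffer a large probability of \emph{mismatching} outputs, which silently absorbs the difference: e.g.\ with $t = 2$ and an adversary whose outputs are always $(0,0)$ on $\mu_0$, always mismatched on $\mu_1$, and always $(1,1)$ on $\mu_2$, each hop yields PRF winning probability exactly $\tfrac12$ and hence no contradiction, yet the end-to-end adversary wins the unclonable-encryption game with probability~$1$. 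So the step ``consecutive hybrids have negligibly close $q_0$'' is not actually delivered by the per-hop reduction, and the chain does not close as written. This is precisely the composition difficulty that motivates the threshold-implementation machinery of~\cite{C:CLLZ21} in related reductions; a careful proof of \cref{th:unc-enc-ap} would need something of that flavour (or a single reduction that handles all differing coordinates at once), rather than the naive telescoping sketched by both you and the paper. Your observation about re-using $\rho_{\key_P}$ for the $\ell-1$ non-planted coordinates and repairing it via gentle measurement is a genuine issue and your treatment of it is reasonable.
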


\begin{proof}
  The correctness proof is the same as for the single-bit version: it relies on correctness and security of the underlying pseudorandom function copy-protection scheme.

  We give a short summary of the proof of anti-piracy security, as it uses a usual hybrid argument.
  By doing small hops, we show that if no adversary can distinguish between the encryption of two messages differing on only one index, then no adversary can distinguish between the encryption of two messages differing on only two indices, and so on and so forth until finally showing that no adversary can distinguish between the encryption of two messages differing on all indices.
  It then remains to show that no adversary can distinguish between the encryption of two messages differing only on one index, which follows from the anti-piracy security of the pseudorandom function copy-protection scheme.
\end{proof}

\begin{corollary}
  Assuming the existence of post-quantum indistinguishability obfuscation,
  one-way functions, compute-and-compare obfuscation for the class of
  unpredictable distributions, and \cref{conj:identicalB}, there exists a public-key reusable unclonable encryption scheme with correctness and indistinguishable anti-piracy security for $1$-bit long messages.
\end{corollary}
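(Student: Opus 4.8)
The plan is to obtain this corollary purely by assembling black-box components already established in the previous sections, so that essentially no new quantum argument is needed. First I would instantiate the pseudorandom function copy-protection primitive: the hypotheses here --- post-quantum indistinguishability obfuscation, one-way functions, compute-and-compare obfuscation for unpredictable distributions, and \cref{conj:identicalB} --- are exactly the hypotheses of the ``resp.'' branch of \cref{th:cp-prf-ap}, which therefore yields a pseudorandom function copy-protection scheme $\prf.\scheme{\keygen, \cpprotect, \eval}$ with indistinguishable anti-piracy security with respect to the identical distribution. I would also recall that the concrete scheme realising \cref{th:cp-prf-ap} (the one of \cref{app:cp-prf}) can be taken to be extracting with the error and min-entropy parameters required in \cref{sec:unc_ind_constr}, since extracting puncturable pseudorandom functions are buildable from one-way functions alone; hence this costs no extra assumption.

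Second, I would plug this scheme into \cref{constr:unc-enc}: by the theorem accompanying that construction, the resulting one-time \emph{symmetric} unclonable encryption scheme has correctness and indistinguishable anti-piracy security for single-bit messages --- this is precisely the corollary stated just after the analysis of \cref{constr:unc-enc}, so there is nothing to redo here. (The analogous statement for polynomial-length messages would instead route through \cref{constr:unc-enc-multibits}, hence through the product-distribution branch of \cref{th:cp-prf-ap} and \cref{conj:uniformB}.)

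Third --- the only genuinely new ingredient --- I would invoke the generic transformation of~\cite{TCC:AnaKal21} turning a one-time symmetric unclonable encryption scheme into a reusable public-key one. Concretely I would (i) recall that a post-quantum public-key encryption scheme exists under our assumptions (post-quantum $\iO$ together with one-way functions suffices); (ii) apply the \cite{TCC:AnaKal21} compiler to the scheme of the previous step; and (iii) check that this compiler preserves the indistinguishable anti-piracy notion of \cref{sec:unc-ind_def} --- i.e.\ that a triple $(\adv,\bdv,\cdv)$ breaking the reusable public-key scheme is turned, via the reduction implicit in the compiler, into a triple breaking the underlying one-time symmetric scheme with comparable advantage, and that equipping the reusable-scheme adversary with an encryption oracle is absorbed by a standard hybrid over its queries. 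I expect point (iii) to be the main obstacle, since \cite{TCC:AnaKal21} phrases its analysis primarily for the search-type unclonability notion: one has to confirm that the same syntactic compiler and hybrid structure carry over verbatim to the indistinguishability variant --- which they should, as the compiler only rewraps ciphertexts and keys and the reduction is oblivious to the hidden challenge bit --- and that no parameter blow-up, in particular in the number of reusable queries, spoils the final negligibility bound. Chaining the three steps then gives the claimed public-key reusable unclonable encryption scheme.
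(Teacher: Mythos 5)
Your proposal matches the paper's proof: the paper obtains the single-bit one-time symmetric scheme (via \cref{constr:unc-enc} and the identical-distribution branch of the PRF copy-protection result under \cref{conj:identicalB}) and then invokes the transformation of~\cite[Section~5]{TCC:AnaKal21} to get the reusable public-key variant. Your worry at point~(iii) is precisely what that citation settles: Section~5 of~\cite{TCC:AnaKal21} establishes the compiler directly for the \emph{indistinguishability} notion (not only search-type unclonability), and the paper notes it is built from a post-quantum symmetric encryption scheme with pseudorandom ciphertexts together with a post-quantum single-key public-key functional encryption scheme --- both available under the stated assumptions --- rather than bare post-quantum public-key encryption as you suggest.
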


\begin{proof}
  In \cite[Section 5]{TCC:AnaKal21}, the authors present a way to construct a public-key reusable one-time unclonable encryption scheme from any symmetric one-time unclonable encryption scheme with indistinguishable anti-piracy security, using a (post-quantum) symmetric encryption scheme with pseudorandom ciphertexts and a (post-quantum) single-key public-key functional encryption scheme.
  We refer the interested reader to this paper for a description of the construction.
\end{proof}

\section{Unclonable Unforgeability for Tokenized Signature}
In this section, we present another application to our new monogamy-of-entanglement game.
We define a new security property for tokenized signature schemes and prove that the protocol presented in \cite{C:CLLZ21} achieves this property.
In the following, we first present tokenized signature schemes, with the previous security definitions given in the literature, together with our new security definition; then we describe the \cite{C:CLLZ21} tokenized signature protocol; and finally we prove that this protocol achieves our new security definition.

\subsection{Definitions}

\begin{definition}
  \label{def:tok-sig}
  A tokenized signature scheme is a tuple of algorithms $\scheme{\keygen, \tokengen, \sign, \verify}$ with the following properties:
  \begin{itemize}
    \item $(\sk, \vk) \leftarrow \keygen(\secparam)$.
      On input a security parameter $\secparam$, the key generation algorithm outputs a secret key $\sk$ and a public verification key $\vk$.
    \item $\rho \gets \tokengen(\sk)$.
      On input a secret key $\sk$, the quantum token generation algorithm outputs a quantum signing token $\rho$.
    \item $s \gets \sign(\rho, m)$.
    On input a signing token $\rho$ and a message $m$ to be signed, the signing algorithm outputs a classical signature $s$ of $m$.
    \item $\top / \bot \gets \verify(\vk, m, s)$.
    On input a verification key $\vk$, a message $m$, and a signature $s$, the verification algorithm either accepts or rejects - that is, outputs $\top$ or $\bot$ respectively.
  \end{itemize}
\end{definition}

\paragraph{Correctness.}
A tokenized signature scheme has \emph{correctness} if a signature of any message $m$ produced by a valid token is accepted with overwhelming probability.
$$
  \forall m \in \bin, \pr\left[\verify(\vk, m, s) = \top\ :\ \begin{array}{l}
    s \gets \sign(\rho, m)\\
    \rho \gets \tokengen(\sk)\\
    (\sk, \vk) \gets \keygen(\secparam)
  \end{array}
  \right] = 1 - \negl
$$

\paragraph{(Strong-)unforgeability.}
A tokenized signature has \emph{unforgeability} if no $\qpt$ adversary can produce two different messages, together with a valid signature for each message.
More precisely, for all $\qpt$ adversary $\adv$,
$$
  \pr\left[\begin{array}{c}
    \verify(\vk, m_1, s_1) = \top\\
    \land\\
    \verify(\vk, m_2, s_2) = \top\\
    \land\\
    m_1 \neq m_2
  \end{array}\ :\ 
  \begin{array}{l}
    (m_1, s_1, m_2, s_2) \gets \adv(\vk, \rho)\\
    \rho \gets \tokengen(\sk)\\
    (\sk, \vk) \gets \keygen(\secparam)
  \end{array}
  \right] = \negl
$$
Note that, as we consider public verification, the adversary is also given the verification key.

Similarly, a tokenized signature has \emph{strong-unforgeability} if no $\qpt$ adversary can produce two different (message, signature) pairs.
Note that the messages in the two pairs can be equal, but in this case, the signatures must be different.
More precisely, for all $\qpt$ adversary $\adv$,
$$
  \pr\left[\begin{array}{c}
    \verify(\vk, m_1, s_1) = \top\\
    \land\\
    \verify(\vk, m_2, s_2) = \top\\
    \land\\
    (m_1, s_1) \neq (m_2, s_2)
  \end{array}\ :\ 
  \begin{array}{l}
    (m_1, s_1, m_2, s_2) \gets \adv(\vk, \rho)\\
    \rho \gets \tokengen(\sk)\\
    (\sk, \vk) \gets \keygen(\secparam)
  \end{array}
  \right] = \negl
$$

\subsection{Unclonable Unforgeability}
To motivate our new definition, consider the recent copy-protection of digital signatures primitive \cite{TCC:LLQZ22}.
Similarly, as our quantum tokens, this primitive allows producing quantum signing keys that can be used to sign messages.
However, while the tokens considered in this work are one-time, the keys in a copy-protection of digital signature scheme can be reuse a polynomial number of times.
This makes the (weak and strong) unforgeability definitions above not-applicable to such a scheme, as, if one is given a copy-protected key, they will be able to sign as many messages as they want.
The security that is considered for this scheme is rather defined through a game, in which $\adv$ is given a quantum key; she is asked to split and share it with $\bdv$ and $\cdv$; and the latter each have to produce a valid signature of a random message, sampled by the challenger.

In the following, we consider applying this unclonability definition to tokenized signature schemes.
That is, we wonder what can a triple of adversaries $(\adv, \bdv, \cdv)$ do if $\adv$ receives a token and split it to share it between $\bdv$ and $\cdv$.
The latter are then asked to sign a random challenge message, and must both produce a valid signature.
We then say that a tokenized signature scheme has unclonable unforgeability if no such triple of adversaries can win this game with probability greater than some trivial winning probability \footnotemark.

\paragraph{Challenge distribution.}
In the case where the challenge messages are different - typically if they are sampled from a product distribution - then it is easy to see that the (even weak) unforgeability property implies this unclonable security.
Indeed, if such adversaries win the game, then another adversary $\adv'$, given a signature token, could simply run $\adv$ on this token to get a bipartite state $\sigma_{12}$, then run $\bdv$ on $(\sigma_1, 0)$ and $\cdv$ on $(\sigma_2, 1)$, to obtain a signature of $0$ and a signature of $1$, and therefore break unforgeability security.
However, when we consider the case where the challenges given to $\bdv$ and $\cdv$ are the same message - typically if they are sampled from an identical distribution - then the aforementioned strategy no longer works.
\footnotetext{Similarly, as for copy-protection and single-decryptor, this trivial probability depends on the challenge distribution.
In the following, we consider tokenized signature scheme for single-bit messages, and identical challenge distribution.
Therefore, the trivial probability is $1/2$.}
Based on these observations, we present our new definitions in the following.

\begin{definition}[Piracy Game for Tokenized Signature]
  We define below a \emph{piracy game for tokenized signature}, parametrized by a tokenized signature scheme for single-bit message space $\schemefont{TS} = \scheme{\keygen, \tokengen, \sign, \verify}$ and a security parameter $\secpar$.
  This game is between a challenger and a triple of adversaries $(\adv, \bdv, \cdv)$.

  \begin{itemize}
    \item \textbf{Setup phase:}
    \begin{itemize}
      \item The challenger samples a random pair of keys $(\sk, \vk) \gets \keygen(\secparam)$.
      \item The challenger prepares a signing token $\rho \gets \tokengen(\sk)$.
      \item The challenger sends $(\rho, \vk)$ to $\adv$.
    \end{itemize}
    \item \textbf{Splitting phase:}
      $\adv$ prepares a bipartite quantum state $\sigma_{12}$, then sends $\sigma_1$ to $\bdv$, and $\sigma_2$ to $\cdv$.
    \item \textbf{Challenge phase:}
    \begin{itemize}
      \item The challenger samples a bit $b \sample \bin$.
      \item The challenger sends $b$ to $\bdv$, and $b$ to $\cdv$.
    \end{itemize}
  \end{itemize}
  Let $s_1$ denotes the output of $\bdv$ and $s_2$ denotes the output of $\cdv$.
  $\adv$, $\bdv$, and $\cdv$ win the game if $\verify(\vk, b, s_1) = 1$ and $\verify(\vk, b, s_2) = 1$.

  We denote the random variable that indicates whether a triple of adversaries $\left(\adv, \bdv, \cdv\right)$ wins the game or not as $\funcfont{TS-UU-1}^{\schemefont{TS}}(\secparam, \adv, \bdv, \cdv)$.
\end{definition}

\begin{definition}[Unclonable Unforgeability]
  A tokenized signature scheme $\schemefont{TS}$ for single-bit messages has unclonable unforgeability if no $\qpt$ adversary can win the game above with probability significantly greater than $1/2$.
  More precisely, for any triple of $\qpt$ adversaries $\left(\adv, \bdv, \cdv\right)$:
  $$
    \prob{\funcfont{TS-UU-1}^{\schemefont{TS}}(\secparam, \adv, \bdv, \cdv) = 1} \leq 1/2 + \negl
  $$
\end{definition}

\begin{remark}
  Although this definition is made for tokenized signature schemes with single-bit messages, we informally propose two ways of extending it for general tokenized signature schemes.
  The first way simply consists in sampling a challenge message from the identical distribution over the message space $\mspace$, and then to say that the scheme is secure if no adversaries can win the game with probability greater than $1/\lvert\mspace\rvert$.

  The second way is defined in an indistinguishable fashion: after sending the quantum token to $\adv$, the challenger asks $\adv$ to return them a pair of messages $(m_0, m_1)$.
  The game then proceeds as before, except that the challenge distribution is the identical distribution over $\{m_0, m_1\}$.
  We say that the scheme is secure if no adversaries can win the game with probability greater than $1/2$.
\end{remark}

\subsection{The \cite{C:CLLZ21} Tokenized Signature Scheme}
We give a construction of single-bit tokenized signature scheme from hidden coset states in~\cref{constr:token-sig}.
This construction is identical to the one for weak unforgeability in~\cite{C:CLLZ21}.
Note that \cite{chevalier2023semi} showed that the same construction also achieves strong unforgeability.

\begin{construction}{A Single-Bit Tokenized Signature Scheme}{token-sig}
  \begin{itemize}[label=\(\),rightmargin=\leftmargin]
    \item \(\keygen(1^{\secpar}):\)
    \begin{itemize}[rightmargin=\leftmargin]
        \item Set \(n = \poly\).
        \item Sample uniformly \(A \subseteq \FF_{2}^{n}\) of dimension
          \(\frac{n}{2}\).
        \item Sample \(s, s' \sample \FF_{2}^{n}\).
        \item Output \(\sk \coloneqq (A, s, s')\)
          \(\vk \coloneqq (\iO(P_{A+s}), \iO(P_{A^{\perp}+s'}))\) (where, by \(A\), we mean a
          description of the subspace \(A\), and $P_{A+s}$ and $P_{A^\perp + s'}$ denote the membership programs for $A + s$ and $A^\perp + s'$ respectively.)
      \end{itemize}
    \item \(\tokengen(\sk):\) 
    \begin{itemize}
      \item Parse \(\sk\) as \((A, s, s')\).
      \item Output \(\rho \coloneqq \ket{A_{s, s'}}\).
    \end{itemize}
    \item \(\sign(m, \rho):\)
      \begin{itemize}
        \item Compute \(\Hgate^{\otimes n}\rho\) if \(m = 1\), otherwise
          do nothing to the quantum state.
        \item Measure the state in the computational basis.
          Let \(\sigma\) be the outcome.
        \item Output \((m, \sigma)\).
      \end{itemize}
    \item \(\verify(\vk, m, \sigma):\)
    \begin{itemize}
      \item Parse \(\vk\) as \((C_{0}, C_{1})\) where \(C_{0}\) and \(C_{1}\) are circuits.
      \item Output \(C_{m}(\sigma)\).
    \end{itemize}
  \end{itemize}
\end{construction}

\begin{theorem}
  \label{thm:token-sig}
  Assuming the existence of quantum-secure indistinguishability obfuscation and quantum-secure injective one-way functions, the scheme given in~\cref{constr:token-sig} has unclonable unforgeability.
\end{theorem}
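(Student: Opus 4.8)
The plan is to recognize that the piracy game for \cref{constr:token-sig} is, up to a purely cosmetic amount of extra information handed to the players, the \emph{computational monogamy-of-entanglement game with identical basis} of \cref{def:new-cpt-moe-coset}, and then to invoke \cref{th:new_moe_coset_comp}. So the proof is a single (loss-free) reduction.

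First I would unfold the winning condition of the piracy game. There, $\adv$ receives $(\rho,\vk)$ with $\rho=\ket{A_{s,s'}}$ and $\vk=(\iO(P_{A+s}),\iO(P_{A^{\perp}+s'}))$; after the splitting phase and after the challenge bit $b$ is revealed, $\bdv$ and $\cdv$ output strings $s_1$ and $s_2$, and the triple wins iff $\verify(\vk,b,s_1)=\verify(\vk,b,s_2)=1$. Since $\verify(\vk,b,\cdot)$ just evaluates $C_b$ with $(C_0,C_1)=\vk$, and since $\iO$ preserves functionality exactly, $C_0$ accepts precisely the vectors of $A+s$ and $C_1$ precisely the vectors of $A^{\perp}+s'$. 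Hence the win condition is equivalent to ``$s_1,s_2\in A+s$ if $b=0$ and $s_1,s_2\in A^{\perp}+s'$ if $b=1$'', which is verbatim the win condition of \cref{def:new-cpt-moe-coset} with $s_1,s_2$ in the role of the returned vectors $u_1,u_2$. Note also that $P_{A+s}$ is exactly the coset-membership program that is obfuscated into $\obfmbr_{A+s}$ in \cref{def:new-cpt-moe-coset}, so the verification key $\vk$ is distributed identically to the pair of obfuscated membership programs supplied to $\adv$ in that game.

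Next I would give the reduction. From a triple of $\qpt$ adversaries $(\adv,\bdv,\cdv)$ winning the piracy game with probability $p$, build $(\adv',\bdv',\cdv')$ for \cref{def:new-cpt-moe-coset}: $\adv'$, on input $(\ket{A_{s,s'}},\obfmbr_{A+s},\obfmbr_{A^{\perp}+s'})$, sets $\vk:=(\obfmbr_{A+s},\obfmbr_{A^{\perp}+s'})$, runs $\adv$ on $(\ket{A_{s,s'}},\vk)$, and forwards the resulting $\sigma_1,\sigma_2$; $\bdv'$, on input $\sigma_1$ and $(A,b)$, discards $A$, runs $\bdv$ on $(\sigma_1,b)$, and outputs its answer; $\cdv'$ is symmetric. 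The views of $\adv,\bdv,\cdv$ inside the reduction are distributed exactly as in the piracy game (the only difference is that $\bdv',\cdv'$ additionally learn $A$, which they throw away), so $(\adv',\bdv',\cdv')$ wins the game of \cref{def:new-cpt-moe-coset} with probability $p$. By \cref{th:new_moe_coset_comp} — whose hardness rests on post-quantum $\iO$ and injective one-way functions, exactly the assumptions of the present theorem — we get $p\le 1/2+\negl$. This is the claimed bound, and it is tight: the trivial strategy in which $\adv$ measures $\ket{A_{s,s'}}$ in the computational basis and hands the outcome to both $\bdv$ and $\cdv$ (who return it) already succeeds with probability $1/2$.

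I do not expect a genuine obstacle. The two points to be careful about are (i) that the exact functionality preservation of $\iO$ is what lets us replace the obfuscated membership circuits in the win condition by the exact coset-membership predicates with \emph{no} loss (a $\negl$ loss here would in any case be harmless), and (ii) that the information $\bdv',\cdv'$ receive is a superset of what $\bdv,\cdv$ consume, so the reduction is information-preserving rather than information-destroying. Both hold, so the reduction goes through cleanly.
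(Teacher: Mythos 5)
Your proposal is correct and matches the paper's intended argument: the paper's proof of \cref{thm:token-sig} is exactly the observation that the piracy game coincides with the computational monogamy-of-entanglement game with identical basis, so the bound follows immediately from \cref{th:new_moe_coset_comp}. Your write-up simply spells out the loss-free reduction that the paper leaves implicit, and it is accurate in all details (including the handling of the extra subspace description $A$ and the exact functionality preservation of $\iO$).
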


\begin{proof}
  The proof of this theorem follows immediately from~\cref{th:new_moe_coset_comp}.
\end{proof}
 
\bibliographystyle{alpha}
\renewcommand{\doi}[1]{\url{#1}}
\bibliography{../cryptobib/abbrev3,../cryptobib/crypto,ref.bib}

\appendix
\crefalias{section}{appendix}

\section{Construction of Pseudorandom Function Copy-Protection}
\label{app:cp-prf}
We present below the construction of pseudorandom function copy-protection scheme of \cite{C:CLLZ21}, and show that it has anti-piracy security with respect to both the product and the identical distributions.

\paragraph{Construction.}
Let $n$ be a polynomial in $\secpar$; we define \(\ell_0, \ell_1, \ell_2\) such that \(n = \ell_0 + \ell_1 + \ell_2\) and \(\ell_2 - \ell_0\)
is large enough.
For this construction, we need three pseudorandom functions:
\begin{itemize}
  \item A puncturable extracting pseudorandom function \(\prf_1: \kspace_1 \times \bin^{n} \rightarrow \bin^m\)
    with error \(2^{-\secpar - 1}\) for min-entropy $n$, where \(m\) is a polynomial in \(\secpar\)
    and \(n \geq m +2\secpar + 4\).
  \item A puncturable injective pseudorandom function
    \(\prf_2: \kspace_2 \times \bin^{\ell_2} \rightarrow \bin^{\ell_1}\) with failure probability
    \(2^{-\secpar}\), with \(\ell_{1} \geq 2 \ell_{2} + \secpar\).
  \item A puncturable pseudorandom function \(\prf_3: \kspace_3 \times \bin^{\ell_1} \rightarrow \bin^{\ell_2}\).
\end{itemize}

\begin{construction}{Pseudorandom Function Copy-Protection}{cp-prf}
  \begin{itemize}[label=\(\),rightmargin=\leftmargin]
    \item \(\cpprotect(\secparam, \key)\):
    \begin{itemize}
      \item Sample \(\ell_0\) random coset states
      \(\{\ket{A_{i, s_i, s'_i}}\}_{i \in \intval{1, \ell_0}}\), where each
      subspace \(A_{i} \subseteq \FF_{2}^{n}\) if of dimension \(\frac{n}{2}\).
      \item For each coset state \(\ket{A_{i, s_i, s'_i}}\), prepare the obfuscated
      membership programs \(\mbr^0_i = \iO(A_i + s_i)\) and
      \(\mbr^1_i = \iO(A^\perp_i + s'_i)\).
      \item Sample \(\key_i \gets \prf_i .\keygen(1^\lambda)\) for \(i \in \{1, 2, 3\}\).
      \item Prepare the program \(\obfmbr \gets \iO(\progfont{P})\),
      where \(\progfont{P}\) is described in~\cref{fig:prog-p-cp-prf}.
      \item Return
      \(\qkey_\key \coloneqq \left(\{\ket{A_{i, s_i, s'_i}}\}_{i \in \intval{1, \ell_0}}, \obfmbr\right)\).
    \end{itemize}

    \item \(\eval(\secparam, \qkey_\key, x)\):
    \begin{itemize}
      \item Parse
      \(\qkey_\key = \left(\{\ket{A_{i, s_i, s'_i}}\}_{i \in \intval{1, \ell_0}}, \obfmbr\right)\).
      \item Parse \(x\) as \(x \coloneqq x^{(0)} \Vert x^{(1)} \Vert x^{(2)}\).
      \item For each \(i \in \intval{1, \ell_{0}}\), if \(x^{(0)}_{i} = 1\), apply
      \(\Hgate^{\otimes n}\) to \(\ket{A_{i, s_{i}, s'_{i}}}\); if
      \(x^{(0)}_{i} = 0\), leave the state unchanged.
      \item Let \(\sigma\) be the resulting state (which can be interpreted as a
      superposition over tuples of \(\ell_{0}\) vectors).
      Run \(\obfmbr\) coherently on input \(x\) and \(\sigma\), and
      measure the final output register to obtain \(y\).
      \item Return $y$.
    \end{itemize}
  \end{itemize}
\end{construction}

\begin{figure}[!htbp]
\centering
\begin{tcolorbox}[arc=0pt,outer arc=0pt,colframe=black,colback=white,
  boxrule=0.4pt,left=0pt,right=0pt,top=0pt,bottom=0pt,hbox]
  \begin{varwidth}[t]{0.8\textwidth}
    \textbf{Hardcoded:} Keys
    \((\key_{1}, \key_{2}, \key_3) \in \kspace_1 \times \kspace_2 \times \kspace_3\),
    programs \(\mbr_{i}^{0}, \mbr_{i}^{1}\) for all \(i \in \intval{1, \ell_{0}}\).

    On input \(x = x^{(0)} \Vert x^{(1)} \Vert x^{(2)}\) and vectors
    \(v_{0}, v_{1}, \cdots, v_{\ell_{0}}\) where each \(v_{i} \in \FF_{2}^{n}\), do the
    following:
    \begin{enumerate}
      \item (\textbf{Hidden Trigger Mode}) If
        \(\prf_{3}(\key_{3}, x^{(1)}) \oplus x^{(2)} = x^{(0)} \Vert \progfont{Q'}\) and
        \(x^{(1)} = \prf_{2}(\key_{2}, x^{(0)}\Vert \progfont{Q'})\): treat
        \(\progfont{Q'}\) as a classical circuit and output
        \(\progfont{Q'}(v_{1}, \cdots, v_{\ell_{0}})\).
      \item (\textbf{Normal Mode}) If for all \(i \in \intval{1, \ell_{0}}\),
        \(\mbr_{i}^{x_{i}}(v_{i}) = 1\), then output \(\prf_1(\key_1, x)\).
        Otherwise, output \(\bot\).
    \end{enumerate}
  \end{varwidth}
\end{tcolorbox}
\caption{Program \(\progfont{P}\).}
\label{fig:prog-p-cp-prf}
\end{figure}

\subsection{Proof of Indistinguishable Anti-Piracy Security With Respect to the Product Distribution}
In this subsection, we prove that the construction above has indistinguishable anti-piracy security with respect to the product distribution.
This proof and the next one (for the identical distribution) both adapt the proof of \cite[Theorem 7.12]{C:CLLZ21} to our settings; some parts are taken verbatim.
We first introduce some notations, a procedure and a lemma that we use in the two proofs.

\paragraph{Notations.}
In the proof, we sometimes parse \(x \in \bin^n\) as \((x^{(0)}, x^{(1)}, x^{(2)})\) such
that \(x = x^{(0)} \concat x^{(1)} \concat x^{(2)}\) (where \(\cdot \concat \cdot\) is the concatenation
operator) and the length of \(x^{(i)}\) is \(\ell_i\) for \(i \in \{0, 1, 2\}\).

We proceed with both proofs through a sequence of hybrids.
For any pair of hybrids \((\pcgamename_i, \pcgamename_j)\), we say that \(\pcgamename_i\) is \(\emph{negligibly close to}\) \(\pcgamename_j\) if for every triple of \(\qpt\) adversaries \((\adv, \bdv, \cdv)\), the probability that \((\adv, \bdv, \cdv)\) wins \(\pcgamename_i\) is negligibly close to the probability that they win \(\pcgamename_j\).

\paragraph{Procedure.}
We define the \(\genTrigger\) procedure (\cref{fig:gen-trigger}) which, given an
input's prefix \(x^{(0)}\) and a pseudorandom function image \(y\) returns a so-called \emph{trigger
  input} \(x'\) that: passes the ``Hidden Trigger'' condition of the program
  \(\progfont{P}\).
  Although this procedure also takes as input pseudorandom function keys $\key_2, \key_3$ and coset states descriptions, we will abuse notation and only write $\genTrigger(x^{(0)}, y)$ when it is clear from the context.
  We will also write $\genTrigger(x^{(0)}; \progfont{Q})$ - where $\progfont{Q}$ is a program - to denote the same procedure using $\progfont{Q}$ instead of the program normally defined in step 1.

  \begin{figure}
    \centering
    \begin{tcolorbox}[arc=0pt,outer arc=0pt,colframe=black,colback=white,
      boxrule=0.4pt,left=0pt,right=0pt,top=0pt,bottom=0pt,hbox]
      \begin{varwidth}[t]{0.8\textwidth}
        Given as input \(x^{(0)} \in \bin^{\ell_0}\), \(y \in \bin^m\),
        \(\key_2, \key_3 \in \kspace_2 \times \kspace_3\) and cosets
        \(\{A_{i, s_i, s'_i}\}_{i \in \intval{1, \ell_0}}\):
        \begin{enumerate}
          \item Let \(\progfont{Q}\) be the program which, given
            \(v_0, \dots, v_{\ell_0}\), returns \(y\) if
            \(R^{x_{0, i}}_i(v_i) = 1\) for all \(i\) or \(\bot\) otherwise.
            \item \(x'^{(1)} \gets \prf_2(\key_2, x^{(0)} \concat \progfont{Q})\);
            \item \(x'^{(2)} \gets \prf_3(\key_3, x'^{(1)}) \oplus (x^{(0)} \concat \progfont{Q})\);
            \item Return \(x^{(0)} \concat x'^{(1)} \concat x'^{(2)}\).
        \end{enumerate}
      \end{varwidth}
    \end{tcolorbox}
  \caption{\(\genTrigger\) procedure.}
  \label{fig:gen-trigger}
\end{figure}

\paragraph{Trigger's inputs lemma.}

The following lemma is taken from \cite[Lemma 7.17]{C:CLLZ21}.
\begin{lemma}
  \label{lem:7.17}
  Assuming post-quantum \(\iO\) and one-way functions, any efficient \(\qpt\)
  algorithm \(\adv\) cannot win the following game with non-negligible
  advantage:
  \begin{itemize}
    \item A challenger samples \(\key_1 \gets \setup(1^\secpar)\) and prepares a
    quantum key
    \(\rho_\key \coloneqq (\{\ket{A_{i,s_i,s'_i}}\}_{i \in \intval{1, l_0}} , \allowbreak \iO{(\progfont{P})})\)
    (recall that \(\progfont{P}\) has keys \(\key_1 , \key_2 , \key_3\)
      hardcoded).
    \item The challenger then samples a random input \(x_1 \gets \bin^n\); let \(y_1 \gets \prf_1 (\key_1 , x_1)\) and computes \(x'_1 \gets \genTrigger(x^{(0)}_{1}, y_1)\).
    \item Similarly, the challenger samples a random input \(x_2 \gets \bin^n\); let \(y_2 \gets \prf_1 (\key_1 , x_2)\) and computes \(x'_2 \gets \genTrigger(x^{(0)}_{2}, y_1)\).
    \item The challenger flips a coin \(b\), and sends either $\left(\rho_\key, x_1, x_2\right)$ or $\left(\rho_\key, x'_1, x'_2\right)$ to $\adv$, depending on the value of the coin.
  \end{itemize}
  \(\adv\) wins if it guesses \(b\) correctly.
\end{lemma}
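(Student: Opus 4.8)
The plan is to adapt the ``hidden trigger'' argument of \cite{C:CLLZ21} (this lemma is, essentially verbatim, \cite[Lemma 7.17]{C:CLLZ21}). Note first that both challenge cases hand $\adv$ the \emph{same} quantum key $\rho_\key$, so the task is to distinguish the joint distribution of the classical pair $(x_1,x_2)$ --- uniform versus ``trigger'' --- conditioned on $\rho_\key$ (equivalently, on the obfuscated program $\iO(\progfont{P})$ and the coset states it contains). By a two-step hybrid (move $x_1$ from uniform to trigger with $x_2$ kept uniform, then move $x_2$), and noting that an extra uniform input, or an extra already-fixed trigger input, is merely public auxiliary data, it suffices to prove the single-input statement: for $x\gets\bin^n$, $y\gets\prf_1(\key_1,x)$ and $x'\gets\genTrigger(x^{(0)},y)$, the distributions $(\rho_\key,x)$ and $(\rho_\key,x')$ are computationally indistinguishable.

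For the single-input statement, first record the functionality fact underlying the construction: on input $x'$ the program $\progfont{P}$ enters Hidden Trigger Mode and outputs $\progfont{Q}(v_1,\dots,v_{\ell_0})$, which on coset vectors extracted from $\rho_\key$ equals $y=\prf_1(\key_1,x)$ --- exactly the value $\progfont{P}$ outputs in Normal Mode on the original input $x$, since $\progfont{Q}$ performs the same $\ell_0$ coset-membership checks indexed by the bits of $x^{(0)}$. I would then run the following hybrids on the \emph{trigger} case. First, by $\iO$ security, replace $\iO(\progfont{P})$ by an obfuscation of a functionally equivalent program $\progfont{P}'$ that (i) has the input/output behaviour on $x'$ hardcoded (possible, since $\key_1$ and $x$ are known at obfuscation time) and (ii) has $\prf_2$ punctured at $w:=x^{(0)}\concat\progfont{Q}$ and $\prf_3$ punctured at $x'^{(1)}$. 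Functionality preservation uses the statistical injectivity of $\prf_2$ and the regime $\ell_1\ge 2\ell_2+\secpar$: the only input whose evaluation queries $\prf_2$ at $w$ or $\prf_3$ at $x'^{(1)}$ \emph{and} passes the Hidden Trigger check is $x'$ itself (for every other such input the membership test $x^{(1)}=\prf_2(\key_2,\cdot)$ fails, so it is routed to Normal Mode exactly as $\progfont{P}$ would be, no query to a punctured point being required). Second, since $\progfont{P}'$ no longer uses $\key_2$ at $w$ nor $\key_3$ at $x'^{(1)}$, invoke pseudorandomness at the punctured points to replace $x'^{(1)}=\prf_2(\key_2,w)$ and $\prf_3(\key_3,x'^{(1)})$ --- hence $x'^{(2)}$ --- by fresh uniform strings; after this, $x'$ is uniformly distributed.

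Symmetrically, transform the \emph{uniform} case: use $\iO$ to hardcode the Normal-Mode behaviour of $\progfont{P}$ on $x$, puncture $\prf_1$ at $x$, and replace $\prf_1(\key_1,x)$ by a uniform value. This last step is where the \emph{extracting} property of $\prf_1$ (error $2^{-\secpar-1}$ for min-entropy $n$) is needed rather than mere pseudorandomness: the adversary can read the hardcoded lock value off the obfuscated program by running it on coset vectors, so that value must be shown statistically close to uniform given the input, which holds because $x$ is uniform over $\bin^n$. After these transformations both sides become an obfuscation of the \emph{same} exception program --- $\progfont{P}$ with a hardcoded pair (uniform input, uniform lock value) --- together with that uniform input, hence are identically distributed. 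Finally I unwind the hybrids; this is legitimate because a uniformly random $n$-bit string lies in the (sparse) image of $\prf_2$ with probability at most $2^{-\secpar}$, so in the reverse direction the puncturings and the exceptions remain functionality-preserving with overwhelming probability. The two-step hybrid over $(x_1,x_2)$ then yields the lemma.

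The main obstacle is the functionality-equivalence bookkeeping for the $\iO$ hops: one must precisely characterise which inputs cause $\progfont{P}$ to query the punctured PRF points and check that rerouting them to Normal Mode (or to the hardcoded exception) changes nothing --- this is exactly where the statistical injectivity of $\prf_2$ and the length inequalities relating $\ell_0,\ell_1,\ell_2,m,n$ are consumed. A secondary subtlety, already flagged, is matching the hardcoded lock values across the two sides, which requires the extracting (not just pseudorandom) property of $\prf_1$.
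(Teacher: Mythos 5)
You should first be aware that the paper does not prove this statement at all: it is imported verbatim, with the sentence ``The following lemma is taken from [Lemma 7.17 of \cite{C:CLLZ21}]'', and no argument is given in the present paper. So there is no in-paper proof to compare against; what you have written is a reconstruction of the hidden-trigger argument from \cite{C:CLLZ21}, and on the whole it follows the correct template (puncture $\prf_2$ at $x^{(0)}\Vert\progfont{Q}$ and $\prf_3$ at $x'^{(1)}$, use statistical injectivity and the sparseness of the image of $\prf_2$ for the functional-equivalence checks in the $\iO$ hops, then replace the punctured values by uniform so that the trigger input becomes uniform).

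Two points deserve scrutiny. First, your opening reduction of the two-input statement to a single-input statement via a two-step hybrid is not a black-box reduction as written: in the intermediate hybrid the simulator must itself produce the \emph{other} trigger input $x'_1 = \genTrigger(x_1^{(0)}, y_1)$, which requires $\key_2$, $\key_3$ and the coset descriptions -- but these are hardcoded inside the obfuscation handed to it by the single-input challenger, so it cannot compute them. The fix is either to strengthen the single-input lemma so the challenger also supplies the auxiliary trigger, or (as in \cite{C:CLLZ21}) to run one global hybrid sequence that punctures at both trigger points; your architecture survives, but the step as stated is a gap. Second, your use of the \emph{extracting} property on the uniform side is misplaced: there the adversary is given the full input $x$, so $\prf_1(\key_1,x)$ is determined given $x$ and $\key_1$ and cannot be ``statistically close to uniform given the input''; the correct tool at that point is pseudorandomness at the punctured point (given $\key_1\{x\}$ and $x$). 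The extracting property is what is needed on the \emph{trigger} side, where only $x^{(0)}$ is revealed and $x^{(1)}\Vert x^{(2)}$ retains min-entropy at least $m+2\secpar+4$.
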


\begin{gamedescription}[nr=-1,arg=]
  \describegame This is the piracy game with respect to the product distribution of the pseudorandom function copy-protection protocol.

  \begin{itemize}
    \item \textbf{Setup phase:}
    \begin{itemize}
      \item The challenger samples \(\ell_0\) random cosets \(\{A_{i}, s_i, s'_i\}_{i \in \intval{1, \ell_0}}\), and prepares the associated coset states \(\{\ket{A_{i, s_i, s'_i}}\}_{i \in \intval{1, \ell_0}}\) and the obfuscated membership programs \(\{(\mbr^0_i, \mbr^1_i)\}_{i \in \intval{1, \ell_0}}\).
      \item The challenger samples \(\key_i \gets \prf_i .\keygen(1^\lambda)\) for \(i \in \{1, 2, 3\}\) and generates the obfuscated program \(\obfmbr \gets \iO{(\progfont{P})}\).
      \item The challenger samples $x \sample \bin^n$ and computes $y := \prf_1(\key_1, x)$.
      \item Finally, the challenger sends \(\rho_\key := \left(\{\ket{A_{i, s_i, s'_i}}\}_{i \in \intval{1, \ell_0}}, \obfmbr\right)\) and $y$ to \(\adv\).
    \end{itemize}
    \item \textbf{Splitting phase:}
      $\adv$ prepares a bipartite quantum state $\sigma_{12}$, then sends $\sigma_1$ to $\bdv$ and $\sigma_2$ to $\cdv$.
    \item \textbf{Challenge phase (product distribution):}
    \begin{itemize}
      \item The challenger samples two bits $b_1, b_2 \sample \bin$, and two inputs $x_1, x_2 \sample \bin^n$.
      \item If $b_1 = 0$, the challenger sends $x$ to $\bdv$; otherwise, the challenger sends $x_1$.
      \item Similarly, if $b_2 = 0$, the challenger sends $x$ to $\cdv$; otherwise, the challenger sends $x_2$.
    \end{itemize}
    \item \textbf{Challenge phase (identical distribution):}
    \begin{itemize}
      \item The challenger samples a bit $b \sample \bin$, and an input $x_0 \sample \bin^n$.
      \item If $b = 0$, the challenger sends $x$ to both $\bdv$ and $\cdv$; otherwise, the challenger sends them $x_0$.
    \end{itemize}
    $\adv$, $\bdv$, and $\cdv$ win the game with respect to the product distribution if $\bdv$ returns $b_1$ and $\cdv$ returns $b_2$; and with respect to the identical distribution if both $\bdv$ and $\cdv$ return $b$.
  \end{itemize}

  \describegame In this game, we replace the challenges $x$, $x_0$, $x_1$, and $x_2$ by their trigger inputs for both $\bdv$ and $\cdv$.
  More precisely, the challenge phases become the following.

  \begin{itemize}
    \item \textbf{Challenge phase (product distribution):}
    \begin{itemize}
      \item The challenger samples two bits $b_1, b_2 \sample \bin$, and two inputs $x_1, x_2 \sample \bin^n$.
      \item The challenger computes the two images $y_1 := \prf(\key_1, x_1)$ and $y_2 := \prf(\key_1, x_2)$.
      \item If $b_1 = 0$, the challenger sends $\genTrigger(x^{(0)}, y)$ to $\bdv$; otherwise, the challenger sends $\genTrigger(x^{(0)}_1, y_1)$.
      \item Similarly, if $b_2 = 0$, the challenger sends $\genTrigger(x^{(0)}, y)$ to $\cdv$; otherwise, the challenger sends $\genTrigger(x^{(0)}_2, y_2)$.
    \end{itemize}
    \item \textbf{Challenge phase (identical distribution):}
    \begin{itemize}
      \item The challenger samples a bit $b \sample \bin$, and an input $x_0 \sample \bin^n$.
      \item The challenger computes the image $y_0 := \prf(\key_1, x_0)$.
      \item If $b = 0$, the challenger sends $\genTrigger(x^{(0)}, y)$ to both $\bdv$ and $\cdv$; otherwise, the challenger sends them $\genTrigger(x^{(0)}_0, y_0)$.
    \end{itemize}
  \end{itemize}

  The trigger's inputs lemma (\cref{lem:7.17}) implies that \(\currentgame\)
  is negligibly close to \(\previousgame\).

  \describegame In this game, we replace $y$ (in the setup phase) and $y_0, y_1, y_2$ (in the challenge phases) by uniformly random strings.
  Since all the inputs have enough min-entropy
  \(\ell_{1} + \ell_{2} \geq m + 2\secpar + 4\) and \(\prf_{1}\) is extracting, the images are statistically close to independently random bitstrings.
  Thus, \(\pcgamename_2\) is negligibly close to \(\pcgamename_1\).

  \describegame This game has exactly the same distribution as that of
  \(\previousgame\).
  We only change the order in which some values are sampled, and recognize
  that certain procedures become identical to encryption in the
  single-decryptor encryption scheme
  \(\scheme{\sd.\setup, \sd.\qkeygen, \sd.\enc, \sd.\dec}\)
  from~\cref{constr:sd}.
  Thus, the probability of winning in \(\currentgame\) is the same as in
  \(\previousgame\).

  \begin{itemize}
    \item \textbf{Setup phase:}
    \begin{itemize}
      \item The challenger runs \( \sd.\setup(1^\lambda)\) to obtain \(\ell_{0}\)
      random cosets \(\{A_{i}, s_i, s'_i\}_{i \in \intval{1, \ell_0}}\), the
      associated coset states \(\{\ket{A_{i, s_i, s'_i}}\}_{i \in \intval{1, \ell_0}}\) and
      the obfuscated membership programs
      \(\{(\mbr^0_i, \mbr^1_i)\}_{i \in \intval{1, \ell_0}}\).
      Let \(\rho_\sk := \{\ket{A_{i, s_i, s'_i}}\}_{i \in \intval{1, \ell_0}}\).
      \item The challenger samples \(\key_i \gets \prf_i .\keygen(1^\lambda)\) for \(i \in \{1, 2, 3\}\)
      and generates the obfuscated program
      \(\obfmbr \gets \iO{(\progfont{P})}\).
      \item The challenger samples \(y \sample \bin^{m}\)
      and sends \(\rho_\key := \left(\{\ket{A_{i, s_i, s'_i}}\}_{i \in \intval{1, \ell_0}}, \obfmbr\right)\) and $y$
      to \(\adv\).
    \end{itemize}
    \item \textbf{Splitting phase:}
      $\adv$ prepares a bipartite quantum state $\sigma_{12}$, then sends $\sigma_1$ to $\bdv$ and $\sigma_2$ to $\cdv$.

    \item \textbf{Challenge phase (product distribution):}
    \begin{itemize}
      \item The challenger samples two bits $b_1, b_2 \sample \bin$, and two inputs $x_1, x_2 \sample \bin^n$.
      \item The challenger also samples a random set of coins $r \sample \bin^{\poly}$ for the encryption, and two bitstrings $y_1, y_2 \sample \bin^m$.
      \item If $b_1 = 0$, the challenger computes $(x, \progfont{Q}) \gets \sd.\enc(\pk, y; r)$ and sends $\genTrigger(x^{(0)}; y)$ to $\bdv$; otherwise the challenger computes $(x_1, \progfont{Q}) \gets \sd.\enc(\pk, y_1; r)$ and sends $\genTrigger(x^{(0)}_1, y_1)$.
      \item Similarly, if $b_2 = 0$, the challenger computes $(x, \progfont{Q}) \gets \sd.\enc(\pk, y; r)$ and sends $\genTrigger(x^{(0)}, y)$ to $\cdv$; otherwise the challenger computes $(x_2, \progfont{Q}) \gets \sd.\enc(\pk, y_2; r)$ and sends $\genTrigger(x^{(0)}_2, y_2)$.
    \end{itemize}
    $\adv$, $\bdv$, and $\cdv$ win the game with respect to the product distribution if $\bdv$ returns $b_1$ and $\cdv$ returns $b_2$; and with respect to the identical distribution if both $\bdv$ and $\cdv$ return $b$.
  \end{itemize}
\end{gamedescription}

\paragraph{Reduction from single-decryptor’s piracy game for the product distribution.}
We reduce the game $\pcgamename_3$ with respect to the product distribution to the piracy game of the underlying single-decryptor with respect to the product distribution.
Assume that there exists a triple of \(\qpt\) adversaries \((\adv, \bdv, \cdv)\) who wins the last hybrid \(\game_{3}\) with respect to the product distribution with advantage \(\delta\).
We construct a $\qpt$ adversary \((\adv', \bdv', \cdv')\) who wins the piracy game of the single-decryptor scheme of \cref{constr:sd} with respect to the product distribution with the same advantage \(\delta\).

\begin{itemize}
  \item $\adv'$, on input a quantum key $\rho_\sk$ and the associated public key $\pk$:
  \begin{itemize}
    \item samples \(\key_i \gets \prf_i.\keygen(1^\lambda)\) for \(i \in \{1, 2, 3\}\) and use these keys and \(\pk\) to prepare the obfuscated program \(\obfmbr \gets \iO{(\progfont{P})}\);
    \item samples \(y, y_1, y_2 \sample \bin^m\);
    \item runs \(\adv\) on \((\rho_{\sk}, \obfmbr, y)\) to get $\sigma_{12}$;
    \item then prepares $\sigma'_1 := \sigma_1 \otimes \ketbra{\key_2, \key_3}{\key_2, \key_3}$ and $\sigma'_2 := \sigma_2 \otimes \ketbra{\key_2, \key_3}{\key_2, \key_3}$;
    \item and finally sends $\sigma'_1$ to $\bdv$, $\sigma'_2$ to $\cdv$, and the pairs of messages $(y, y_1)$, $(y, y_2)$ to the challenger.
  \end{itemize}
  \item $\bdv'$, on input $\sigma'_1$ and a ciphertext $(r, \progfont{Q})$:
  \begin{itemize}
    \item computes $x' \gets \genTrigger(r; \progfont{Q})$;
    \item runs $\bdv$ on $(\sigma_1, x')$ and returns the outcome.
  \end{itemize}
  \item $\cdv'$ is defined similarly as $\bdv'$ by replacing $\sigma'_1$ by $\sigma'_2$.
\end{itemize}

The adversary \((\adv', \bdv', \cdv')\) perfectly simulates \((\adv, \bdv, \cdv)\), and thus \((\adv', \bdv', \cdv')\) breaks the anti-piracy security of the single-decryptor scheme with the same probability \(\delta\), which
completes the proof.

\paragraph{Reduction from single-decryptor’s piracy game for the identical distribution.}
We reduce the game $\pcgamename_3$ with respect to the identical distribution to the piracy game of the underlying single-decryptor with respect to the identical distribution.
Assume that there exists a triple of \(\qpt\) adversaries \((\adv, \bdv, \cdv)\) who wins the last hybrid \(\game_{3}\) with respect to the identical distribution with advantage \(\delta\).
We construct a $\qpt$ adversary \((\adv', \bdv', \cdv')\) who wins the piracy game of the single-decryptor scheme of \cref{constr:sd} with respect to the identical distribution with the same advantage \(\delta\).

\begin{itemize}
  \item $\adv'$, on input a quantum key $\rho_\sk$ and the associated public key $\pk$:
  \begin{itemize}
    \item samples \(\key_i \gets \prf_i.\keygen(1^\lambda)\) for \(i \in \{1, 2, 3\}\) and use these keys and \(\pk\) to prepare the obfuscated program \(\obfmbr \gets \iO{(\progfont{P})}\);
    \item samples \(y, y_0 \sample \bin^m\);
    \item runs \(\adv\) on \((\rho_{\sk}, \obfmbr, y)\) to get $\sigma_{12}$;
    \item then prepares $\sigma'_1 := \sigma_1 \otimes \ketbra{\key_2, \key_3}{\key_2, \key_3}$ and $\sigma'_2 := \sigma_2 \otimes \ketbra{\key_2, \key_3}{\key_2, \key_3}$;
    \item and finally sends $\sigma'_1$ to $\bdv$, $\sigma'_2$ to $\cdv$, and the pairs of messages $(y, y_0)$, $(y, y_0)$ to the challenger.
  \end{itemize}
  \item $\bdv'$, on input $\sigma'_1$ and a ciphertext $(r, \progfont{Q})$:
  \begin{itemize}
    \item computes $x' \gets \genTrigger(r; \progfont{Q})$;
    \item runs $\bdv$ on $(\sigma_1, x')$ and returns the outcome.
  \end{itemize}
  \item $\cdv'$ is defined similarly as $\bdv'$ by replacing $\sigma'_1$ by $\sigma'_2$.
\end{itemize}

The adversary \((\adv', \bdv', \cdv')\) perfectly simulates \((\adv, \bdv, \cdv)\), and thus \((\adv', \bdv', \cdv')\) breaks the anti-piracy security of the single-decryptor scheme with the same probability \(\delta\), which
completes the proof.
 
\end{document}